\newtheorem{proposition}{Proposition}
\newtheorem{corollary}{Corollary}
\theoremstyle{definition}
\newenvironment{example}
  {\pushQED{\qed}\examplex}
  {\popQED\endexamplex}
\newcommand{\ave}[1]{\langle #1 \rangle}
\newcommand{\bra}[1]{\langle #1|}
\newcommand{\ket}[1]{| #1 \rangle }
\begin{document}

\title{Capacity of trace decreasing quantum operations and
superadditivity of coherent information for a generalized erasure
channel}

\author{Sergey N. Filippov}

\affiliation{Steklov Mathematical Institute of Russian Academy of
Sciences, Gubkina St. 8, Moscow 119991, Russia}

\begin{abstract}
Losses in quantum communication lines severely affect the rates of
reliable information transmission and are usually considered to be
state-independent. However, the loss probability does depend on
the system state in general, with the polarization dependent
losses being a prominent example. Here we analyze biased trace
decreasing quantum operations that assign different loss
probabilities to states and introduce the concept of a generalized
erasure channel. We find lower and upper bounds for the classical
and quantum capacities of the generalized erasure channel as well
as characterize its degradability and antidegradability. We reveal
superadditivity of coherent information in the case of the
polarization dependent losses, with the difference between the
two-letter quantum capacity and the single-letter quantum capacity
exceeding $7.197 \cdot 10^{-3}$ bits per qubit sent, the greatest
value among qubit-input channels reported so far.
\end{abstract}

\maketitle


\section{Introduction} \label{section-introduction}

Transmission of information through noisy quantum communication
lines has fascinating properties. Measurements in the basis of
entangled states enable extracting more classical information as
compared to the individual measurements of each information
carrier~\cite{holevo-1998,schumacher-1997}. Moreover, encoding
classical information into entangled states would give even better
communication rates~\cite{hastings-2009}. In addition to sending
classical information, it is possible to reliably transmit quantum
information and create entanglement between the sender and the
receiver even if the communication line is noisy, thus opening an
avenue for quantum
networking~\cite{lloyd-1997,barnum-1998,devetak-2005}. A typical
kind of noise in quantum communication lines is the loss of
information carriers, e.g., photons. For continuous-variable
quantum states this effect is intrinsically included in the
description~(see, e.g.,~\cite{weedbrook-2012}), whereas for
discrete-variable quantum states this effect is usually described
by an erasure channel~\cite{grassl-1997,bennett-1997}. For
instance, if the information is encoded into polarization degrees
of freedom of single photons (that can be potentially entangled
among themselves), then the erasure channel accounts for the loss
of photons in the line, with the probability to lose a
horizontally polarized photon being the same as the probability to
lose a vertically polarized photon. Additional effects of
decoherence are taken into account by concatenating the erasure
channel with the decoherence map, e.g., a combination of the
dephasure and the loss results in the so-called ``dephrasure''
channel~\cite{leditzky-2018} and a combination of the amplitude
damping channel and the loss is considered in
Ref.~\cite{siddhu-2020}, with the phenomenon of superadditivity of
coherent information being observed in the both
cases~\cite{leditzky-2018,siddhu-2020}. In particular, the
two-letter quantum capacity exceeds the single-letter quantum
capacity by about $2.5 \cdot 10^{-3}$ bits per qubit sent in
Ref.~\cite{leditzky-2018} and by about $5 \cdot 10^{-3}$ bits per
qubit sent in Ref.~\cite{siddhu-2020}. The difference between the
two-letter quantum capacity and the single-letter quantum capacity
was experimentally tested for the dephrasure channels in
Ref.~\cite{yu-2020}.

However, the physics of photon transmission through optical
communication lines is much richer and the losses are
polarization-dependent in general~\cite{gisin-1997}. This means
the transmission coefficient $p_H$ for a horizontally polarized
photon may significantly differ from the transmission coefficient
$p_V$ for a vertically polarized photon. Effect of polarization
dependent loss on the the quality of transmitted polarization
entanglement and the secure quantum communication is discussed in
Refs.~\cite{kirby-2019,li-2018}. The conventional erasure channel
is not an adequate description of polarization dependent losses.
Similarly, a concatenation of a quantum decoherence channel with
the erasure channel is not adequate in description of the
transmission of quantum carriers through general lossy
communication lines. In this paper, we fill this gap by
introducing a \emph{generalized erasure channel} that covers the
above phenomena. Our definition of the generalized erasure channel
differs from the generalized erasure channel pair considered in
Ref.~\cite{siddhu-2020}. In fact, our definition comprises all
concatenations of the erasure channel with other channels as
partial cases; however, our definition is applicable to a wider
class of scenarios with the state-dependent losses, which were not
considered before.

The key idea behind the generalized erasure channel is that
probabilistic transformations of quantum states are given by
quantum operations that are completely positive and trace
nonincreasing maps (see, e.g.,~\cite{heinosaari-2012}). Quantum
operations are extensively used in description of nondestructive
quantum measurements~\cite{davies-1970} and schemes with
sequential
measurements~\cite{carmeli-2011,luchnikov-2017,zhuravlev-2020,leppajarvi-2020}.
Importantly, quantum operations do not generally reduce to
attenuated quantum channels. These are \emph{biased} quantum
operations that exhibit the state-dependent probability to lose an
information carrier. In this paper, we study physics of the biased
quantum operations and relate it with the information transmission
through lossy communication lines.

We define the generalized erasure channel as an orthogonal sum of
a trace decreasing quantum operation and a map outputting the
state-dependent probability to lose the particle. This enables us
to treat any type of information capacity for a trace decreasing
quantum operation as the same type of capacity for the
corresponding generalized erasure channel. We focus on the
classical and quantum capacities of the generalized erasure
channels and derive lower and upper bounds for them. We elaborate
the physical scenario of the polarization dependent losses and
discover superadditivity of coherent information for the
corresponding generalized erasure channel. For a region of
transmission coefficients for horizontally and vertically
polarized photons we analytically prove that the two-letter
quantum capacity is strictly greater the single-letter quantum
capacity. The maximum difference exceeds $7.197 \cdot 10^{-3}$
bits per qubit sent, which is the greatest reported difference
among qubit-input channels to the best of our knowledge.

The paper is organized as follows. In
Sections~\ref{section-subnormalized-d-o}
and~\ref{section-quantum-operations}, we review subnormalized
density operators and trace nonincreasing quantum operations that
can be either unbiased or biased depending on the trace of their
output. In Section~\ref{section-extension}, we study the ways in
which quantum operations can be extended to trace preserving maps
and find the minimal extension such that all other extensions are
derivatives of the minimal one. In
Section~\ref{section-normalized-image}, we study the normalized
image of a trace decreasing quantum operation $\Lambda$ and show
that this image coincides with the image of some quantum channel
$\Phi_{\Lambda}$, which will be later used in estimation of bounds
for capacities. In Section~\ref{section-gec}, we give a precise
definition of the generalized erasure channel. In
Section~\ref{section-gec-C}, we find lower and upper bounds for
the classical capacity and the single-letter classical capacity of
the generalized erasure channel. In Section~\ref{section-gec-Q},
we (i) find lower and upper bounds for the quantum capacity and
the singe-letter quantum capacity of the generalized erasure
channel; (ii) calculate the singe-letter quantum capacity and
estimate the two-letter quantum capacity for a generalized erasure
channel describing the polarization dependent losses, providing a
proof for superadditivity of coherent information within a wide
range of polarization transmission coefficients $p_H$ and $p_V$.
In Seciton~\ref{section-conclusions}, brief conclusions are given.

\section{Trace decreasing quantum operations}

\subsection{Subnormalized density operators}
\label{section-subnormalized-d-o}

We consider $d$-level quantum systems as information carriers, $1
< d < \infty$. By ${\cal H}_d$ denote a $d$-dimensional Hilbert
space associated with a single system. ${\cal B}({\cal H}_d) =
\{X: {\cal H}_d \to {\cal H}_d \}$ is the set of linear operators
acting on ${\cal H}_d$. A quantum state of a single information
carrier is given by a density operator $\rho \in {\cal B}({\cal
H}_d)$ that is positive-semidefinite and has unit trace. By ${\cal
D}({\cal H}_d)$ denote the set of density operators on ${\cal
H}_d$, i.e., ${\cal D}({\cal H}_d) = \{ \rho \in {\cal B}({\cal
H}_d) \, | \, \rho^{\dag} = \rho \geq 0, \ {\rm tr}[\rho] = 1 \}$.
Any physical quantity $f$ associated with the information carrier
is mathematically described by a self-adjoint operator $F \in
{\cal B}({\cal H}_d)$ such that its mean value is given by the
Born rule $\ave{f} = {\rm tr}[\rho F]$.

For instance, if information is encoded into polarization degrees
of freedom of a single photon, then $d=2$ and ${\cal H}_2 = {\rm
Span}(\ket{H},\ket{V})$, where $\ket{H}$ and $\ket{V}$ are the
orthogonal state vectors describing a photon with horizontal and
vertical polarization, respectively. Let $f = \pm 1$ be the values
assigned to clicks of detectors located at two outputs of the
conventional polarization beam splitter (see Fig.~\ref{figure-1}).
A single photon in the state $\rho$ induces a click of one
detector only, with the probabilities being $p(f=1) = \bra{H} \rho
\ket{H}$ and $p(f=-1) = \bra{V} \rho \ket{V}$. The average
$\ave{f} = \sum_{f=\pm 1} f p(f) = {\rm tr}[\rho F]$, where $F =
\ket{H}\bra{H} - \ket{V}\bra{V}$.

\begin{figure}
\includegraphics[width=7cm]{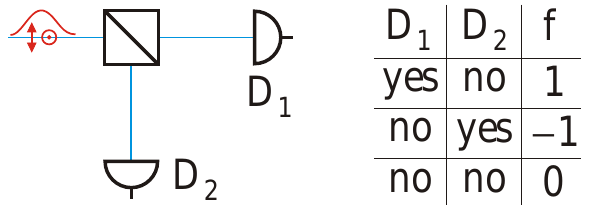}
\caption{\label{figure-1} Operational meaning of a subnormalized
density operator. The label ``yes'' (``no'') corresponds to a
detector click (no detector click).}
\end{figure}

The need to take a possible loss of the information carrier into
account can be satisfied as follows. Extending the Hilbert space
by a flag (vacuum) state $\ket{\rm vac}$, we can use the extended
density operator $R \in {\cal B}({\cal H}_{d+1})$. A measurement
of a physical quantity $f$ associated with the information carrier
would give no outcome at all if the carrier is lost, so we assign
the value $f=0$ if this situation takes place. For instance, if a
photon is lost, then none of the detectors at the outputs of the
polarization beam splitter clicks, which we interpret as the
outcome $0$ of quantity $f$ (see Fig.~\ref{figure-1}). The average
$\ave{f} = {\rm tr}[R (F \oplus 0 \ket{{\rm vac}}\bra{{\rm vac}} )
] = {\rm tr}[PRP^{\dag} \, F]$, where $P:{\cal H}_{d+1}
\rightarrow {\cal H}_d$ is a projector onto the original Hilbert
space associated with the information carrier. The operator
$\varrho = PRP^{\dag}$ is Hermitian, positive-semidefinite, and
its trace ${\rm tr}[PRP^{\dag}] \leq 1$, so we refer to it as a
subnormalized density operator. The trace of the subnormalized
density operator, ${\rm tr}[\varrho]$, is nothing else but the
probability to detect the information carrier. The probability to
lose the carrier equals $1 - {\rm tr}[\varrho]$. By ${\cal
S}({\cal H}_d)$ denote the set of subnormalized density operators
on ${\cal H}_d$, i.e.,
\begin{equation}
{\cal S}({\cal H}_d) = \{ \varrho \in {\cal B}({\cal H}_d) \, | \,
\varrho^{\dag} = \varrho \geq 0, \ {\rm tr}[\varrho] \leq 1 \}.
\end{equation}

\noindent The set of subnormalized density operators is a subset
of the cone of positive-semidefinite operators.

\subsection{Quantum operations} \label{section-quantum-operations}

A deterministic physical transformation of a density operator is
given by a quantum channel $\Phi:{\cal B}({\cal H}_d) \to {\cal
B}({\cal H}_d)$ that is a completely positive and trace preserving
linear map~\cite{holevo-2012}. Here the term ``deterministic''
refers to the unit probability to detect a particle, which implies
the trace preservation because any density operator is to be
mapped to a density operator. The trace preservation condition for
a linear map $\Phi$ takes a simpler form in terms of the dual map
$\Phi^{\dag}$ defined through formula ${\rm tr}\big[ \Phi[X] Y
\big] = {\rm tr}\big[ X \Phi^{\dag}[Y] \big]$ that is valid for
all $X,Y \in {\cal B}({\cal H}_d)$. As ${\rm
tr}\big[\Phi[\rho]\big] = {\rm tr}\big[\Phi[\rho] I \big] = {\rm
tr}\big[\rho \Phi^{\dag}[I] \big]$, $\Phi$ is trace preserving if
and only if $\Phi^{\dag}[I] =I$, where $I \in {\cal B}({\cal
H}_d)$ is the identity operator. Since the system in interest can
be potentially entangled with an auxiliary quantum system,
complete positivity guarantees that any density operator $\rho \in
{\cal D}({\cal H}_{d+k})$ for the aggregate of the system and the
auxiliary system of dimension $k$ is mapped to a valid density
operator $\Phi \otimes {\rm Id}_k [\rho]$, where ${\rm Id}_k:{\cal
B}({\cal H}_k) \to {\cal B}({\cal H}_k)$ is the identity map.
Technically, complete positivity of $\Phi$ means $\Phi \otimes
{\rm Id}_k [\rho] \geq 0$ for any $\rho \in {\cal D}({\cal
H}_{d+k})$ and any dimension $k \in \mathbb{N}$. By ${\cal
C}({\cal H}_d)$ denote the set of quantum channels for
$d$-dimensional systems, i.e., ${\cal C}({\cal H}_d) = \{
\Phi:{\cal B}({\cal H}_d) \to {\cal B}({\cal H}_d) \, | \, \Phi
\text{~is~completely~positive~and~} \ \Phi^{\dag}[I] =I \}$.
Informational properties of quantum channels including classical
and quantum capacity are reviewed in the
books~\cite{holevo-2012,wilde-2013}.

Since losses in a communication line diminish the probability to
detect the information carrier, a general physical transformation
$\Lambda: {\cal S}({\cal H}_d) \to {\cal S}({\cal H}_d)$ is trace
nonincreasing, i.e., ${\rm tr} \big[ \Lambda[\varrho] \big] \leq
{\rm tr}[\varrho]$ for all $\varrho \in {\cal S}({\cal H}_d)$. The
fact that $\Lambda$ is trace nonincreasing is equivalent to the
relation $\Lambda^{\dag}[I] \leq I$, where $A \leq B$ means $B-A$
is positive semidefinite. Complete positivity of a physical map
$\Lambda$ follows from the same line of reasoning as in the case
of quantum channels. Combining the two requirements, we get the
following definition of a general physical transformation (see,
e.g.,~\cite{heinosaari-ziman}). A linear map $\Lambda:{\cal
B}({\cal H}_d) \to {\cal B}({\cal H}_d)$ is called a quantum
operation if $\Lambda$ is completely positive and trace
nonincreasing. The concept of quantum operation is extensively
used, e.g., to describe state transformations induced by a general
nonprojective measurement (quantum instrument)~\cite{davies-1970}.
By ${\cal O}({\cal H}_d)$ denote the set of quantum operations for
a $d$-dimensional system, i.e.,
\begin{equation}
{\cal O}({\cal H}_d) = \{ \Lambda: {\cal B}({\cal H}_d) \to {\cal
B}({\cal H}_d) \, | \, \Lambda \text{~is~completely~positive~and~}
\Lambda^{\dag}[I] \leq I \}.
\end{equation}

If an operation $\Lambda \in {\cal O}({\cal H}_d) \setminus {\cal
C}({\cal H}_d)$, then $\Lambda$ is called trace decreasing. In
this paper, we focus on trace decreasing quantum operations and
their informational properties. There are two distinctive classes
of trace decreasing operations:

\begin{itemize}

\item[(i)] unbiased operations that are attenuated quantum
channels of the form $\Lambda = p\Phi$, where $0 \leq p \leq 1$
and $\Phi \in {\cal C}({\cal H}_d)$, for which the output
$\Lambda[\rho]$ is detected with a fixed probability ${\rm
tr}\big[ \Lambda[\rho] \big] = p$ regardless of the input density
operator $\rho$;

\item[(ii)] biased operations with the state-dependent probability
to detect the outcome, i.e., there exist density operators
$\rho_1$ and $\rho_2$ such that ${\rm tr} \big[ \Lambda[\rho_1]
\big] \neq {\rm tr} \big[ \Lambda[\rho_2] \big]$.

\end{itemize}

Biased operations are of primary interest in this paper. A
physical example of the biased operation is an optical fiber with
polarization dependent losses~\cite{gisin-1997}. Suppose the least
attenuated polarization state is either a horizontally polarized
state or a vertically polarized state, and let $p_H$ and $p_V$ be
the attenuation factors for the horizontal and vertical
polarizations, respectively. Then the effect of the optical fiber
with polarization dependent losses, in its simplest form, is
described by the following quantum operation with one Kraus
operator $A$~\cite{gisin-1997}:
\begin{equation} \label{pdl}
\Lambda[\varrho] = A \varrho A^{\dag}, \quad A = \sqrt{p_H}
\ket{H}\bra{H} + \sqrt{p_V} \ket{V}\bra{V}.
\end{equation}

We quantify the bias of a quantum operation $\Lambda \in {\cal
O}({\cal H}_d)$ by
\begin{equation} \label{bias}
b(\Lambda) = \sup_{\rho \in {\cal D}({\cal H}_d)}{\rm tr} \big[
\Lambda[\rho] \big] - \inf_{\rho \in {\cal D}({\cal H}_d)}{\rm tr}
\big[ \Lambda[\rho] \big].
\end{equation}

\noindent Clearly, the quantity~\eqref{bias} vanishes if and only
if the operation $\Lambda$ is unbiased. Using the formalism of
dual maps, we readily get
\begin{equation} \label{bias-simple}
b(\Lambda) = \max \, {\rm Spec} \left( \Lambda^{\dag}[I] \right) -
\min \, {\rm Spec} \left( \Lambda^{\dag}[I] \right),
\end{equation}

\noindent where ${\rm Spec} \left( X \right)$ is a spectrum of
$X$. For the operation~\eqref{pdl} we have $\Lambda^{\dag}[I] =
p_H \ket{H}\bra{H} + p_V \ket{V}\bra{V}$, so its bias $b(\Lambda)
=|p_H - p_V|$.

\subsection{Extending a trace decreasing operation to a
channel} \label{section-extension}

Any trace decreasing operation can be extended to a quantum
channel by adding another trace decreasing operation. A quantum
operation $\Lambda'$ is called an extension for a quantum
operation $\Lambda$ if $\Lambda + \Lambda'$ is trace preserving.
In terms of the dual maps the latter condition reads
$\Lambda^{\dag}[I] + (\Lambda')^{\dag}[I] = I$, which uniquely
defines the operator $(\Lambda')^{\dag}[I] = I -
\Lambda^{\dag}[I]$ but does not fix the map $\Lambda'$, so the
extension is not unique in general. In fact, since $\Phi^{\dag}[I]
= I$ for any quantum channel $\Phi$, then $(\Lambda')^{\dag}\left[
\Phi^{\dag}[I] \right] = (\Lambda')^{\dag}[I]$, i.e., a
concatenation $\Phi \circ \Lambda'$ is an extension for $\Lambda$
provided $\Lambda'$ is an extension for $\Lambda$ too. The set
$\{\Phi \circ \Lambda' | \Phi \in {\cal C}({\cal H}_d)\}$ is
called an orbit of the operation $\Lambda' \in {\cal O}({\cal
H}_d)$. We have just shown that any map from the orbit of some
extension $\Lambda'$ is an extension too, but a natural question
arises if all possible extensions can be obtained as an orbit of a
single (in a sense, minimal) extension $\Lambda'_{\min}$? The
following proposition answers this question in affirmative.

\begin{proposition} \label{proposition-extension}
The map
\begin{equation} \label{min-ext}
\Lambda'_{\min}[\varrho] = \sqrt{I - \Lambda^{\dag}[I]} \, \varrho
\, \sqrt{I - \Lambda^{\dag}[I]}
\end{equation}

\noindent is a minimal extension for the quantum operation
$\Lambda \in {\cal O}({\cal H}_d)$, i.e., any extension for
$\Lambda$ has the form $\Phi \circ \Lambda'_{\min}$, $\Phi \in
{\cal C}({\cal H}_d)$.
\end{proposition}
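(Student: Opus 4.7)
I would proceed in two stages: first verify that $\Lambda'_{\min}$ is itself an extension, and then show every other extension factors as $\Phi\circ\Lambda'_{\min}$ for some channel $\Phi$.

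For the first stage, note that $\Lambda'_{\min}$ has Kraus form with a single Kraus operator $M:=\sqrt{I-\Lambda^{\dag}[I]}$, which makes sense since $\Lambda^{\dag}[I]\le I$ implies $I-\Lambda^{\dag}[I]\ge 0$. Consequently $\Lambda'_{\min}$ is completely positive, and $(\Lambda'_{\min})^{\dag}[I]=M\,I\,M=I-\Lambda^{\dag}[I]$, so $\Lambda+\Lambda'_{\min}$ is trace preserving. This part is routine.

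The substantive step is the factorization. Let $\Lambda'$ be any extension. Fix a Kraus decomposition $\Lambda'[\varrho]=\sum_k A_k\varrho A_k^{\dag}$, so that $\sum_k A_k^{\dag}A_k=(\Lambda')^{\dag}[I]=I-\Lambda^{\dag}[I]=M^2$. I would first observe that this forces $\ker M\subseteq \ker A_k$ for every $k$: if $Mv=0$ then $\langle v,M^2v\rangle=\sum_k\|A_kv\|^2=0$. Equivalently, if $P$ is the orthogonal projection onto $\mathrm{Range}(M)$, then $A_k(I-P)=0$. The natural guess for $\Phi$ is built from $B_k:=A_k M^+$, where $M^+$ is the Moore--Penrose pseudoinverse; then $B_kM=A_kM^+M=A_kP=A_k$, so $\sum_k B_k \Lambda'_{\min}[\varrho]B_k^{\dag}=\sum_kB_kM\varrho MB_k^{\dag}=\Lambda'[\varrho]$, as desired.

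The remaining obstacle, and the heart of the proof, is that this $\Phi_0[\sigma]:=\sum_k B_k\sigma B_k^{\dag}$ is only trace nonincreasing: a direct calculation gives $\Phi_0^{\dag}[I]=\sum_kM^+A_k^{\dag}A_kM^+=M^+M^2M^+=P$, not $I$. This is an issue precisely when $M$ is not invertible, i.e., when $\Lambda^{\dag}[I]$ has an eigenvalue equal to $1$. I would repair this by choosing any fixed density operator $\sigma_0\in\mathcal{D}(\mathcal{H}_d)$ and defining
\begin{equation}
\Phi[\sigma]\;:=\;\sum_k B_k\sigma B_k^{\dag}\;+\;\sigma_0\,\mathrm{tr}\!\left[(I-P)\sigma\right].
\end{equation}
Then $\Phi^{\dag}[I]=P+(I-P)=I$, so $\Phi\in\mathcal{C}(\mathcal{H}_d)$. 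To check that the added term does not spoil the factorization, I would note that $M(I-P)=(I-P)M=0$, whence $\mathrm{tr}\!\left[(I-P)\Lambda'_{\min}[\varrho]\right]=\mathrm{tr}\!\left[(I-P)M\varrho M\right]=0$, so $\Phi\circ\Lambda'_{\min}=\Phi_0\circ\Lambda'_{\min}=\Lambda'$. The representation-dependence in the choice of Kraus operators $\{A_k\}$ is harmless because any two choices are related by an isometry, which merely redefines $\Phi$ without affecting the conclusion. The main technical delicacy, as indicated, is the handling of the kernel of $M$; once the inclusion $\ker M\subseteq \ker A_k$ is established, everything else is bookkeeping with the pseudoinverse.
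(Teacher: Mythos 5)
Your proposal is correct and follows essentially the same route as the paper's proof: both establish the support inclusion $\ker\sqrt{I-\Lambda^{\dag}[I]}\subseteq\ker A_k$ from $\sum_k A_k^{\dag}A_k=I-\Lambda^{\dag}[I]$, conjugate by the Moore--Penrose inverse to build the candidate channel, and then patch up trace preservation on the kernel subspace (the paper uses the term $P_0\varrho P_0$ where you use a trash-and-prepare term $\sigma_0\,\mathrm{tr}[(I-P)\sigma]$ --- a cosmetic difference, since both added terms annihilate the image of $\Lambda'_{\min}$).
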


\begin{proof}
The map~\eqref{min-ext} is an extension for $\Lambda$ because it
is completely positive, trace nonincreasing, and
$(\Lambda'_{\min})^{\dag}[I] = I - \Lambda^{\dag}[I]$. Let $P_+
\in {\cal B}({\cal H}_d)$ denote the projector onto the support of
$I - \Lambda^{\dag}[I]$. By $P_0 \in {\cal B}({\cal H}_d)$ we
denote the projector onto the kernel of $I - \Lambda^{\dag}[I]$.
Let $X^{-1}$ be the Moore--Penrose inverse of $X$, then the
operator $(I - \Lambda^{\dag}[I])^{-1/2}$ is well defined and its
support coincides with the support of $P_+$.

Consider any other extension $\Lambda'$ for $\Lambda$ and its
Kraus decomposition $\Lambda'[\varrho] = \sum_l B_l \varrho
B_l^{\dag}$. Then $(\Lambda')^{\dag}[I] = \sum_{l} B_l^{\dag}B_l =
I - \Lambda^{\dag}[I]$ and ${\rm supp} B_l = {\rm supp}
B_l^{\dag}B_l \subset {\rm supp} P_+$. Define the completely
positive map $\Phi$ by the Kraus sum $\Phi[\varrho] = P_0 \varrho
P_0 + \sum_l B_l (I - \Lambda^{\dag}[I])^{-1/2} \varrho (I -
\Lambda^{\dag}[I])^{-1/2} B_l^{\dag}$, then $\Phi^{\dag}[I] = P_0
+ (I - \Lambda^{\dag}[I])^{-1/2} \sum_l B_l^{\dag} B_l (I -
\Lambda^{\dag}[I])^{-1/2} = P_0 + P_+ = I$ and $\Phi$ is trace
preserving, which implies $\Phi \in {\cal C}({\cal H}_d)$. On the
other hand, $\Phi \left[ \Lambda'_{\min}[\varrho] \right] = \sum_l
B_l P_+ \varrho P_+ B_l^{\dag}$. Recalling the relation ${\rm
supp} B_l = {\rm supp} B_l^{\dag}B_l \subset {\rm supp} P_+$, we
conclude that $\Phi \circ \Lambda'_{\min} = \Lambda'$.
\end{proof}

Note that the minimal extension~\eqref{min-ext} has the Kraus rank
1. If $\Lambda[\varrho] = A \varrho A^{\dag}$,
$\Lambda'_{\min}[\varrho] = B \varrho B^{\dag}$, where $B =
\sqrt{I - A^{\dag}A}$. In particular, for an optical fiber with
polarization dependent losses given by Eq.~\eqref{pdl} we have $B
= \sqrt{1-p_H} \ket{H}\bra{H} + \sqrt{1-p_V} \ket{V}\bra{V}$.

\subsection{Normalized image of a trace decreasing operation}
\label{section-normalized-image}

Consider a trace decreasing operation $\Lambda \in {\cal O}({\cal
H}_d)$ and its outcome $\Lambda[\rho]$ for some density operator
$\rho \in {\cal D}({\cal H}_d)$. Suppose the probability to detect
the outcome particle is nonzero, i.e., ${\rm tr}\left[ \rho
\Lambda^{\dag}[I] \right] \neq 0$. While measuring a physical
quantity $f$ in a statistical experiment described in
section~\ref{section-subnormalized-d-o}, we can exclude the
outcomes when none of the detectors clicks and get a conditional
distribution for values of $f$. This is equivalent to normalizing
the output operator $\Lambda[\rho]$, which leads to a map
$\Lambda_{\cal D}: {\cal D}({\cal H}_d) \to {\cal D}({\cal H}_d)$
defined by
\begin{equation} \label{non-lin}
\Lambda_{\cal D} [\rho] = \frac{\Lambda[\rho]}{{\rm tr}\big[
\Lambda[\rho] \big]}, \quad {\rm tr}\big[ \Lambda[\rho] \big] \neq
0.
\end{equation}

Eq.~\eqref{non-lin} describes a conditional output density
operator that is commonly reconstructed in quantum optics
experiments via postselection (see, e.g.,~\cite{bogdanov-2006}).
Eq.~\eqref{non-lin} is also used to describe a conditional output
state of a quantum measuring
apparatus~\cite{davies-1970,carmeli-2011,luchnikov-2017,zhuravlev-2020,leppajarvi-2020}.
If $\Lambda$ is unbiased, i.e., $\Lambda = p \Phi$ for some
$0<p\leq 1$ and some channel $\Phi$, then $\Lambda_{\cal D}
[\sum_i \lambda_i \rho_i] = \sum_i \lambda_i \Lambda_{\cal D}
[\rho_i]$ for any ensemble $\{\lambda_i, \rho_i\}$ of density
operators, $\{\lambda_i\}_i$ being the probability distribution.
In other words, for an unbiased operation $\Lambda$ the map
$\Lambda_{\cal D}$ is quasi-linear on convex sums of density
operators; however, $\Lambda_{\cal D}$ is nonlinear in general
because $\Lambda_{\cal D} [c\rho] = \Lambda_{\cal D} [\rho]$ for
all $c \neq 0$. If $\Lambda$ is biased, then quasi-linearity does
not hold and $\Lambda_{\cal D} [\sum_i \lambda_i \rho_i] \neq
\sum_i \lambda_i \Lambda_{\cal D} [\rho_i]$.

Our main interest in this section is the image $\Lambda_{\cal D}
[{\cal D}({\cal H}_d)] = \{ \Lambda_{\cal D}[\rho] | \rho \in
{\cal D}({\cal H}_d) \}$ that consists of all conditional output
density operators. As $\Lambda_{\cal D}$ is nonlinear and does not
exhibit quasi-linearity in general, one may expect that
$\Lambda_{\cal D} [{\cal D}({\cal H}_d)]$ significantly differs
from the image $\Phi [{\cal D}({\cal H}_d)]$ of any quantum
channel $\Phi \in {\cal C}({\cal H}_d)$. The following example
shows that this is not the case.

\begin{example} \label{example-Lambda-Phi-Lambda}
Consider a qubit operation $\Lambda:{\cal B}({\cal H}_2) \to {\cal
B}({\cal H}_2)$ of the form
\begin{equation*}
\Lambda[\varrho] = \frac{1}{2} \left( a {\rm tr}[\varrho] I + b
{\rm tr}[\varrho \sigma_x] \sigma_x + b {\rm tr}[\varrho \sigma_y]
\sigma_y + {\rm tr}[\varrho \sigma_z] (c \sigma_z + d I) \right),
\end{equation*}

\noindent where $\sigma_x,\sigma_y,\sigma_z$ is the conventional
set of Pauli operators and real parameters $a,b,c,d$ satisfy the
relations $a \geq |c|+|d|$, $(a+c)^2 \geq 4b^2 + d^2$, and $a+|d|
\leq 1$, which make $\Lambda$ be completely positive and trace
nonincreasing. Since $\Lambda^{\dag}[I] = a I + d\sigma_z$, the
bias $b(\Lambda) = 2|d|$. If $a = |d|$, then $b=c=0$ and the image
$\Lambda_{\cal D} [{\cal D}({\cal H}_d)]$ becomes highly
degenerate, namely, $\Lambda_{\cal D} [{\cal D}({\cal H}_d)
\setminus \{\rho_0\}] = \{\frac{1}{2}I\}$, where $\rho_0 =
\frac{1}{2}[I -{\rm sgn}(d) \sigma_z]$ has vanishing detection
probability, $\Lambda[\rho_0] = 0$. In the following, we consider
the case $a > |d|$.

The Bloch vector parametrization for $\rho \in {\cal D}({\cal
H}_2)$ reads $\rho = \frac{1}{2}(I+{\bf
r}\cdot\boldsymbol{\sigma})$, ${\bf r} \in \mathbb{R}^3$, $|{\bf
r}| \leq 1$. The conditional output density operator
$\Lambda_{\cal D}[\rho]$ has the Bloch vector ${\bf q}$ with
components $q_x = b r_x / (a + d r_z)$, $q_y = b r_y / (a + d
r_z)$, $q_z = c r_z / (a + d r_z)$. Rewriting the inequality ${\bf
r} \cdot {\bf r} \leq 1$ in terms of ${\bf q}$, we get
\begin{equation*}
\frac{q_x^2}{\left(\frac{b}{\sqrt{a^2-d^2}}\right)^2} +
\frac{q_y^2}{\left(\frac{b}{\sqrt{a^2-d^2}}\right)^2} +
\frac{\left(q_z +
\frac{cd}{a^2-d^2}\right)^2}{\left(\frac{ac}{a^2-d^2}\right)^2}
\leq 1,
\end{equation*}

\noindent which defines an ellipsoid of revolution in
$\mathbb{R}^3$. Not any ellipsoid within a unit ball can be
associated with the image of a quantum
channel~\cite{heinosaari-ziman}. In our case, the image
$\Lambda_{\cal D} [{\cal D}({\cal H}_d)]$ coincides with the image
$\Phi_{\Lambda}[{\cal D}({\cal H}_d)]$ of the phase covariant
map~\cite{fgl-2020}
\begin{equation} \label{Phi-Lambda-specific}
\Phi_{\Lambda} [\varrho] = \frac{1}{2} \left( {\rm tr}[\varrho] (I
+ t_z\sigma_z) + \lambda {\rm tr}[\varrho \sigma_x] \sigma_x +
\lambda {\rm tr}[\varrho \sigma_y] \sigma_y + \lambda_z {\rm
tr}[\varrho \sigma_z] \sigma_z \right), \quad \lambda =
\frac{b}{\sqrt{a^2-d^2}}, \lambda_z = \frac{ac}{a^2-d^2}, t_z = -
\frac{cd}{a^2-d^2}.
\end{equation}

\noindent The map~\eqref{Phi-Lambda-specific} is clearly trace
preserving, whereas it is completely positive if and only if
$|\lambda_z| + |t_z| \leq 1$ and $4 \lambda^2 + t_z^2 \leq (1 +
\lambda_z)^2$ (see Ref.~\cite{fgl-2020}), with the both conditions
being automatically fulfilled if $\Lambda$ is a valid quantum
operation and $a > |d|$. If $a=|d|$, then we put
$\lambda=\lambda_z=t_z=0$. Finally, $\Lambda_{\cal D} [{\cal
D}({\cal H}_d)] = \Phi_{\Lambda}[{\cal D}({\cal H}_d)]$, where
$\Phi_{\Lambda}$ is a quantum channel.
\end{example}

Generalizing the above example to arbitrary qubit operations, one
can readily see that the image of a nonlinear qubt
map~\eqref{non-lin} is the same as the image of some linear,
completely positive, and trace preserving qubit map
$\Phi_{\Lambda}$. The following result generalizes this relation
further for an arbitrary finite dimension $d$ of the underlying
Hilbert space ${\cal H}_d$ and specifies the explicit form of the
channel $\Phi_{\Lambda}$.

\begin{figure}
\includegraphics[width=7cm]{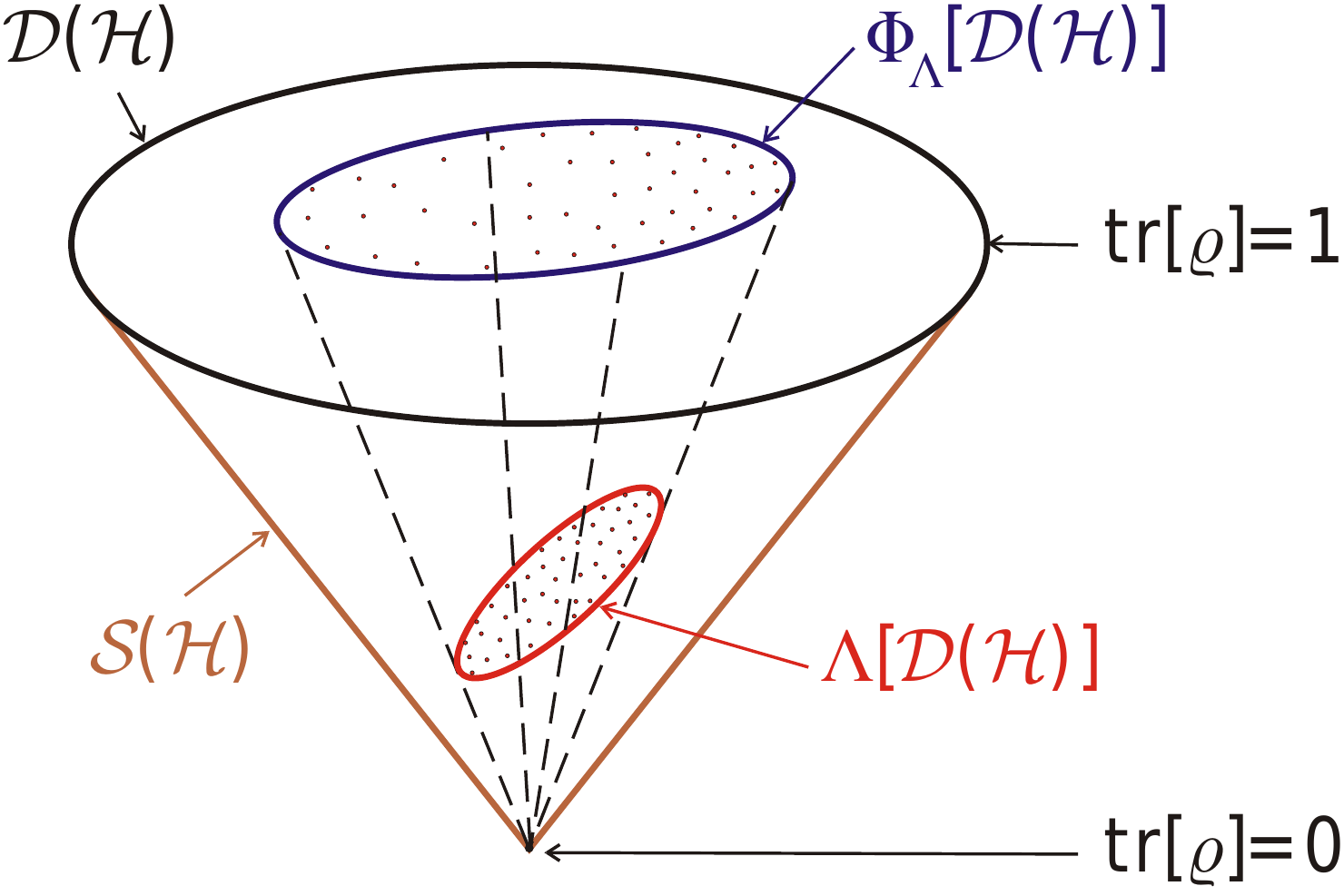}
\caption{\label{figure-2} Normalized image of a trace decreasing
operation $\Lambda$ coincides the image the channel
$\Phi_{\Lambda}$.}
\end{figure}

\begin{proposition} \label{proposition-images}
For a quantum operation $\Lambda \in {\cal O}({\cal H}_d)$,
$\Lambda^{\dag}[I] \neq 0$, the image $\Lambda_{\cal D}[{\cal
D}({\cal H}_d)]$ coincides with the image $\Phi_{\Lambda}[{\cal
D}({\cal H}_d)]$ of the quantum channel
\begin{equation} \label{Phi-Lambda}
\Phi_{\Lambda} [\rho] = \Lambda\left[ (\Lambda^{\dag}[I])^{-1/2}
\rho (\Lambda^{\dag}[I])^{-1/2} \right] + {\rm tr}[\varrho \Pi_0]
\xi,
\end{equation}

\noindent where $X^{-1}$ is the Moore--Penrose inverse of $X \in
{\cal B}({\cal H}_d)$, $\Pi_0 \in {\cal B}({\cal H}_d)$ is a
projector onto the kernel of operator $\Lambda^{\dag}[I]$, and
$\xi$ is an arbitrary density operator from the image
$\Lambda_{\cal D}[{\cal D}({\cal H}_d)]$.
\end{proposition}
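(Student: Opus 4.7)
The plan is to first verify that $\Phi_{\Lambda}$ defined in~\eqref{Phi-Lambda} is a bona fide quantum channel, and then to establish the equality of images by proving two set inclusions. Write $P_+ := I - \Pi_0$ for the projector onto the support of $\Lambda^{\dag}[I]$; the Moore--Penrose inverse then satisfies $(\Lambda^{\dag}[I])^{-1/2} \Lambda^{\dag}[I] (\Lambda^{\dag}[I])^{-1/2} = P_+$. Complete positivity of $\Phi_{\Lambda}$ is immediate, since both summands in~\eqref{Phi-Lambda} are manifestly completely positive (a sandwich by a self-adjoint operator composed with $\Lambda$, plus a state-preparation map weighted by a positive functional). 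Trace preservation follows at once from ${\rm tr}[\Phi_{\Lambda}[\rho]] = {\rm tr}[\rho P_+] + {\rm tr}[\rho \Pi_0] = {\rm tr}[\rho]$ by cyclicity of the trace applied to the first summand.

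For the inclusion $\Lambda_{\cal D}[{\cal D}({\cal H}_d)] \subseteq \Phi_{\Lambda}[{\cal D}({\cal H}_d)]$, given any $\sigma \in {\cal D}({\cal H}_d)$ with ${\rm tr}[\Lambda[\sigma]] \neq 0$, I would exhibit the explicit preimage
\begin{equation*}
\rho := \frac{(\Lambda^{\dag}[I])^{1/2} \sigma (\Lambda^{\dag}[I])^{1/2}}{{\rm tr}[\sigma \Lambda^{\dag}[I]]},
\end{equation*}
which is a density operator whose range lies in $P_+ {\cal H}_d$, so ${\rm tr}[\rho \Pi_0] = 0$. Substituting back gives $(\Lambda^{\dag}[I])^{-1/2} \rho (\Lambda^{\dag}[I])^{-1/2} = P_+ \sigma P_+ / {\rm tr}[\sigma \Lambda^{\dag}[I]]$. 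The key observation is that for any Kraus decomposition $\Lambda[\varrho] = \sum_l A_l \varrho A_l^{\dag}$ one has $\sum_l A_l^{\dag} A_l = \Lambda^{\dag}[I]$, whose kernel contains $\Pi_0 {\cal H}_d$, forcing $A_l \Pi_0 = 0$ for every $l$. Hence $\Lambda$ annihilates any operator with $\Pi_0$ on either side, so $\Lambda[P_+ \sigma P_+] = \Lambda[\sigma]$, giving $\Phi_{\Lambda}[\rho] = \Lambda_{\cal D}[\sigma]$.

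For the reverse inclusion $\Phi_{\Lambda}[{\cal D}({\cal H}_d)] \subseteq \Lambda_{\cal D}[{\cal D}({\cal H}_d)]$, I would set $p := {\rm tr}[\rho P_+]$ and $q := 1-p$. If $p = 0$, then $\Phi_{\Lambda}[\rho] = \xi \in \Lambda_{\cal D}[{\cal D}({\cal H}_d)]$ by the hypothesis on $\xi$. For $p > 0$, introducing $\tilde\rho := (\Lambda^{\dag}[I])^{-1/2} \rho (\Lambda^{\dag}[I])^{-1/2}$ and $\eta := \tilde\rho / {\rm tr}[\tilde\rho]$, the identity $(\Lambda^{\dag}[I])^{-1/2} \Lambda^{\dag}[I] (\Lambda^{\dag}[I])^{-1/2} = P_+$ yields ${\rm tr}[\Lambda[\tilde\rho]] = p$ and hence $\Lambda[\tilde\rho] = p \Lambda_{\cal D}[\eta]$, so that $\Phi_{\Lambda}[\rho] = p \Lambda_{\cal D}[\eta] + q \xi$. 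Writing $\xi = \Lambda_{\cal D}[\sigma_\xi]$ and considering the mixture $\sigma := \alpha \eta + (1-\alpha) \sigma_\xi$, expanding $\Lambda_{\cal D}[\sigma]$ exhibits it as a convex combination of $\Lambda_{\cal D}[\eta]$ and $\xi$ whose weights are monotone functions of $\alpha$; tuning $\alpha$ reproduces the combination $p \Lambda_{\cal D}[\eta] + q \xi$ exactly. The main obstacle lies precisely at this last step: the nonlinearity of $\Lambda_{\cal D}$ forbids averaging preimages directly, and one must invoke this rescaled convex combination rule together with the assumption that $\xi$ itself belongs to $\Lambda_{\cal D}[{\cal D}({\cal H}_d)]$. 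Careful bookkeeping with the Moore--Penrose inverse, and in particular the relation $A_l \Pi_0 = 0$, is what makes both inclusions go through.
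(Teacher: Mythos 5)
Your proposal is correct and follows essentially the same route as the paper: verify that $\Phi_{\Lambda}$ is a channel, use the Kraus-support observation $A_l \Pi_0 = 0$, and exhibit the explicit preimages obtained by sandwiching with $(\Lambda^{\dag}[I])^{\pm 1/2}$ and renormalizing, in both directions. The only cosmetic difference is in the mixed case $0 < {\rm tr}[\rho\,\Pi_0] < 1$, where the paper invokes convexity of $\Phi_{\Lambda}[{\cal D}_+({\cal H}_d)]$ while you build an explicit $\Lambda_{\cal D}$-preimage via the rescaled-mixture identity for $\Lambda_{\cal D}$; both close the argument.
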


\begin{proof}
We note that $\Phi$ is completely positive as the maps $X \to
(\Lambda^{\dag}[I])^{-1/2} X (\Lambda^{\dag}[I])^{-1/2}$, $X \to
\Pi_0 X \Pi_0$, $X \to {\rm tr}[X] \xi$, and $\Lambda$ are all
completely positive. Denoting by $\Pi_+ \in {\cal B}({\cal H}_d)$
the projector onto the support of operator $\Lambda^{\dag}[I]$, we
calculate $\Phi^{\dag}[I] = (\Lambda^{\dag}[I])^{-1/2}
\Lambda^{\dag}[I] (\Lambda^{\dag}[I])^{-1/2} + {\rm tr}[\xi] \Pi_0
= \Pi_+ + \Pi_0 = I$. Therefore, $\Phi$ is trace preserving and,
consequently, $\Phi$ is a quantum channel.

Given the Kraus decomposition $\Lambda[\rho] = \sum_k A_k \rho
A_k^{\dag}$, we have $\sum_{k} A_k^{\dag} A_k =
\Lambda^{\dag}[I]$, which implies ${\rm supp} A_k = {\rm supp}
A_k^{\dag} A_k \subset {\rm supp} \Lambda^{\dag}[I] = {\rm supp}
\Pi_+$. Then $\Lambda[\rho] = \Lambda[\Pi_+ \rho \Pi_+]$ for any
$\rho \in {\cal D}({\cal H}_d)$. Relying on the latter
observation, we divide ${\cal D}({\cal H}_d)$ into three subsets
and consider them separately.

(i) Consider a subset ${\cal D}_+({\cal H}_d) \subset {\cal
D}({\cal H}_d)$ of the density operators whose support belongs to
the support of $\Lambda^{\dag}[I]$, i.e., ${\cal D}_+({\cal H}_d)
= \{ \rho \in {\cal D}({\cal H}_d) \, | \, \Pi_+ \rho \Pi_+ = \rho
\}$. For any $\rho \in {\cal D}_+({\cal H}_d)$ we have ${\rm
tr}[\rho \Pi_0] = 0$ and
\begin{equation} \label{Phi-through-Lambda}
\Phi_{\Lambda}[\rho] = \Lambda\left[ (\Lambda^{\dag}[I])^{-1/2}
\rho (\Lambda^{\dag}[I])^{-1/2} \right] =
\frac{\Lambda[\rho']}{{\rm tr}\big[ \Lambda[\rho'] \big]}, \quad
\rho' = \frac{(\Lambda^{\dag}[I])^{-1/2} \rho
(\Lambda^{\dag}[I])^{-1/2}}{{\rm tr}\left[
(\Lambda^{\dag}[I])^{-1} \rho \right]} \in {\cal D}_+({\cal H}_d),
\end{equation}

\noindent i.e., for any $\rho \in {\cal D}_+({\cal H}_d)$ there
exists $\rho' \in {\cal D}_+({\cal H}_d)$ such that
$\Phi_{\Lambda}[\rho] = \Lambda_{\cal D}[\rho']$. Conversely, for
any $\rho' \in {\cal D}_+({\cal H}_d)$ we have
\begin{equation} \label{Lambda-through-Phi}
\Lambda_{\cal D}[\rho'] = \frac{\Lambda[\rho']}{{\rm tr}\big[
\Lambda[\rho'] \big]} = \Phi_{\Lambda}[\rho], \quad \rho =
\frac{\sqrt{\Lambda^{\dag}[I]} \, \rho' \,
\sqrt{\Lambda^{\dag}[I]}}{{\rm tr}\big[ \Lambda[\rho'] \big]} \in
{\cal D}_+({\cal H}_d),
\end{equation}

\noindent i.e., for any $\rho' \in {\cal D}_+({\cal H}_d)$ there
exists $\rho \in {\cal D}_+({\cal H}_d)$ such that $\Lambda_{\cal
D}[\rho'] = \Phi_{\Lambda}[\rho]$. Recalling the fact that
$\Lambda[\rho] = \Lambda[\Pi_+ \rho \Pi_+]$ for any $\rho \in
{\cal D}({\cal H}_d)$ and combining it with
Eqs.~\eqref{Phi-through-Lambda} and~\eqref{Lambda-through-Phi}, we
conclude
\begin{equation} \label{set-relations-Lambda-Phi}
\Lambda_{\cal D}[{\cal D}({\cal H}_d)] = \Lambda_{\cal D}[{\cal
D}_+({\cal H}_d)] = \Phi_{\Lambda}[{\cal D}_+({\cal H}_d)].
\end{equation}

(ii) Suppose ${\rm supp} \rho \subset {\rm supp} \Pi_0$, then
${\rm tr}[\varrho \Pi_0] = 1$ and $\Phi_{\Lambda}[\rho] = \xi \in
\Lambda_{\cal D}[{\cal D}({\cal H}_d)]$ by the statement of
proposition. Hence, $\Phi_{\Lambda}[\rho] \in \Phi_{\Lambda}[{\cal
D}_+({\cal H}_d)]$.

(iii) Consider the case when $\rho \in {\cal D}({\cal H}_d)
\setminus {\cal D}_+({\cal H}_d)$ but ${\rm supp} \rho \not\subset
{\rm supp} \Pi_0$, then $p_+:={\rm tr}[\Pi_+ \rho \Pi_+] > 0$,
$p_0:={\rm tr}[\Pi_0 \rho \Pi_0] > 0$, $p_+ + p_0 =1$, and
\begin{equation*}
\Phi_{\Lambda}[\rho] = \Phi_{\Lambda}[\Pi_+ \rho \Pi_+] + {\rm
tr}[\varrho \Pi_0] \xi = p_+ \Lambda_{\cal D}[\rho'] + p_0 \xi \in
\Phi_{\Lambda}[{\cal D}_+({\cal H}_d)]
\end{equation*}

\noindent because $\Phi_{\Lambda}[{\cal D}_+({\cal H}_d)]$ is a
convex set.

Therefore, in all the considered cases we have
$\Phi_{\Lambda}[{\cal D}({\cal H}_d)] = \Phi_{\Lambda}[{\cal
D}_+({\cal H}_d)]$. Recalling
Eq.~\eqref{set-relations-Lambda-Phi}, we obtain the equality
$\Lambda_{\cal D}[{\cal D}({\cal H}_d)] = \Phi_{\Lambda}[{\cal
D}({\cal H}_d)]$.
\end{proof}

Though the images $\Lambda_{\cal D}[{\cal D}({\cal H}_d)]$ and
$\Phi_{\Lambda}[{\cal D}({\cal H}_d)]$ coincide, the distributions
of the output states for the maps $\Lambda_{\cal D}$ and
$\Phi_{\Lambda}$ are different in general. Provided the
distribution of the input density operators is uniform with
respect to the Hilbert-Schmidt measure~\cite{bz}, the outcome
distribution for $\Lambda_{\cal D}$ would be uniform only if
$\Lambda$ is unbiased. Fig.~\ref{figure-2} illustrates the
geometric meaning of the fact the higher the detection probability
${\rm tr}\big[\Lambda[\rho]\big]$ the greater the density of the
output states for the map $\Lambda_{\cal D}$. Fig.~\ref{figure-2}
explains the geometric meaning of
Proposition~\ref{proposition-images} too.

As a quantum operation $\Lambda$ and the corresponding quantum
channel $\Phi_{\Lambda}$ are intimately related, a natural
question arises if $\Lambda$ can be extended to $\Phi_{\Lambda}$.
The following example shows this is not the case in general.

\begin{example}
Let $\Lambda$ be the qubit operation~\eqref{pdl} describing
polarization dependent losses with $p_H > 0$ and $p_V > 0$. Then
$\Lambda^{\dag}[I] = p_H \ket{H}\bra{H} + p_V \ket{V}\bra{V}$ is
strictly positive and $\Phi_{\Lambda} = {\rm Id}$. Consider the
density operator $\rho = \frac{1}{2}(\ket{H}\bra{H} +
\ket{H}\bra{V} + \ket{V}\bra{H} + \ket{V}\bra{V})$, then the
operator $(\Phi_{\Lambda} - \Lambda)[\rho] =
\frac{1}{2}[(1-p_H)\ket{H}\bra{H} + (1-\sqrt{p_H
p_V})(\ket{H}\bra{V} + \ket{V}\bra{H}) + (1-p_V)\ket{V}\bra{V}]$
is not positive semidefinite whenever $p_H \neq p_V$, which
implies that the map $\Phi_{\Lambda} - \Lambda$ is not positive,
so $\Phi_{\Lambda} - \Lambda$ is not a quantum operation and
cannot be an extension for $\Lambda$.
\end{example}

\section{Generalized erasure channel} \label{section-gec}

A trace decreasing quantum operation $\Lambda \in {\cal O}({\cal
H}_d) \setminus {\cal C}({\cal H}_d)$ probabilistically describes
the information transmission through a lossy quantum communication
line. The probabilistic nature of that transmission is due to a
finite detection probability ${\rm tr}\big[\Lambda[\rho]\big] \leq
1$ of a single information carrier initially prepared in the state
$\rho \in {\cal D}({\cal H}_d)$. If the information carriers enter
the communication line within predefined time bins, then the loss
of a carrier is detected by recording no measurement outcome
within a given time bin, see Fig.~\ref{figure-3}. Operationally,
we treat the absence of a measurement outcome as the creation of
an erasure flag state $\ket{e}\bra{e}$ so that $\ket{e}$ is
orthogonal to any vector from ${\cal H}_d$. Therefore, we extend
the outcome Hilbert space to ${\cal H}_{d+1} = {\rm Span} \left(
{\cal H}_d \cup \ket{e} \right)$. However, as we conditionally
create the erasure state $\ket{e}\bra{e}$ upon detecting no
measurement outcome, there can be no coherent superposition
between a vector from ${\cal H}_d$ and the vector $\ket{e}$ in the
outcome. The probability to get the erasure state equals the
probability to observe no measurement outcome, ${\rm tr}\big[ \rho
- \Lambda[\rho] \big] = {\rm tr}\left[\varrho (I -
\Lambda^{\dag}[I]) \right]$. Therefore, the resulting
transformation of the input density operator $\rho \in {\cal
D}({\cal H}_d)$ is given by the following linear map ${\cal
B}({\cal H}_d) \to {\cal B}({\cal H}_{d+1})$:
\begin{equation} \label{gec}
\Gamma_{\Lambda}[\rho] = \Lambda[\rho] \oplus {\rm tr} \big[ \rho
-
\Lambda[\rho] \big] \ket{e}\bra{e} = \left(%
\begin{array}{cc}
  \Lambda[\rho] & {\bf 0} \\
  {\bf 0}^\top & {\rm tr}\left[\rho (I - \Lambda^{\dag}[I]) \right] \\
\end{array}%
\right),
\end{equation}

\noindent where the matrix form assumes the use of some
orthonormal basis $\{\ldots,\ket{e}\}$ in ${\cal H}_{d+1}$, ${\bf
0}$ is a $d$-component zero vector. Recalling the minimal
extension~\eqref{min-ext}, we see that $\Gamma_{\Lambda} = \Lambda
\oplus ({\rm Tr} \circ \Lambda'_{\min})$, where ${\rm Tr}[\rho] =
{\rm tr}[\rho] \ket{e}\bra{e}$ is a so-called trash-and-prepare
quantum channel~\cite{leppajarvi-2020}. Since both $\Lambda$ and
$\Lambda'_{\min}$ are completely positive, so is
$\Gamma_{\Lambda}$. The map~\eqref{gec} is trace preserving
because the map $\Lambda \oplus \Lambda'_{\min}$ is trace
preserving. Therefore, $\Gamma_{\Lambda}$ is a quantum channel. It
is worth mentioning that our definition is very similar to a map
considered in Ref.~\cite{heinosaari-2012}, section VI.B, where the
authors discuss the existence of a trace nonincreasing map
$\Lambda$ such that $\Lambda[\rho_i] = p_i \rho'_i$, $0 \leq p_i
\leq 1$, for two given sets of density operators
$\{\rho_i\}_{i=1}^N$ and $\{\rho'_i\}_{i=1}^N$.

\begin{figure}
\includegraphics[width=10cm]{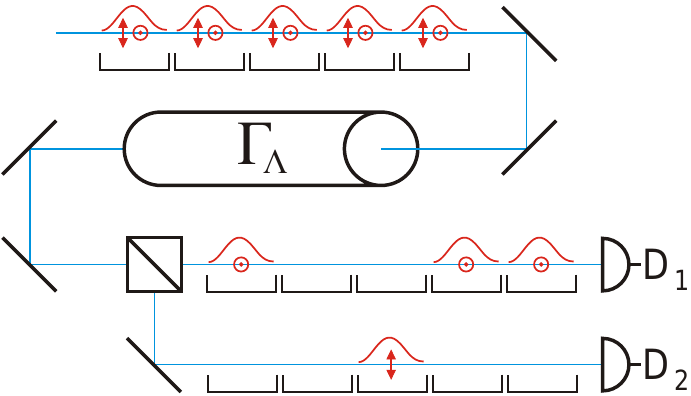}
\caption{\label{figure-3} Operational meaning of the generalized
erasure channel $\Gamma_{\Lambda}$, where the trace decreasing
quantum operation $\Lambda$ describes the polarization dependent
losses.}
\end{figure}

If $\Lambda = p {\rm Id}$, $0\leq p \leq 1$, then $\Gamma_{p{\rm
Id}}$ is nothing else but the conventional erasure
channel~\cite{grassl-1997,bennett-1997}. If $\Lambda = p \Phi$,
where $\Phi:{\cal B}({\cal H}_2) \to {\cal B}({\cal H}_2)$ is a
dephasure channel, then $\Gamma_{p \Phi}$ is a so-called
dephrasure channel~\cite{leditzky-2018}. The authors of
Ref.~\cite{siddhu-2020} consider $\Lambda = p \Phi$, where
$\Phi:{\cal B}({\cal H}_2) \to {\cal B}({\cal H}_2)$ is a general
channel or an amplitude damping channel, in particular. In
contrast to these specific cases, $\Lambda$ does not have to be
unbiased, so the erasure probability ${\rm tr}\left[\varrho (I -
\Lambda^{\dag}[I]) \right]$ is state-dependent in general. This is
the reason we refer to the channel~\eqref{gec} as a
\emph{generalized erasure channel}. Note that our definition
differs from the concept of the generalized erasure channel pair
introduced in Ref.~\cite{siddhu-2020}. As the information
transmission through a lossy communication line has operational
meaning only in the described scenario with predefined time bins,
we associate an information transmission capacity of the quantum
operation $\Lambda$ with the corresponding capacity of the
generalized erasure channel $\Gamma_{\Lambda}$. In the following
sections, we consider specific scenarios of classical and quantum
information transmission trough a lossy quantum communication
line.

\subsection{Classical capacity} \label{section-gec-C}

Encoding classical information into $d$-dimensional quantum
systems, sending all the systems through the same memoryless
quantum channel $\Phi$, and measuring the outcome, it becomes
possible to transmit classical information via quantum
communication lines. The maximum rate of reliable information
transmission per system used is called classical capacity and
reads~\cite{holevo-1998,schumacher-1997}
\begin{equation}
C(\Phi) = \lim_{n \rightarrow \infty} \frac{1}{n}
C_{\chi}(\Phi^{\otimes n}), \quad C_{\chi}(\Psi) =
\sup_{\{\pi_i,\rho_i\}} \bigg\{ S \bigg( \Psi \Big[ \sum_i \pi_i
\rho_i \Big] \bigg) - \sum_i \pi_i S \left( \Psi[\rho_i] \right)
\bigg\},
\end{equation}

\noindent where $C_{\chi}(\Psi)$ is the Holevo capacity of a
channel $\Psi:{\cal B}({\cal H}_{d'}) \to {\cal B}({\cal
H}_{d'})$, $S(\rho) = - {\rm tr}[\rho \log\rho]$ is the von
Neumann entropy, and $\{\pi_i,\rho_i\}$ is an ensemble of density
operators ($\pi_i \geq 0$, $\sum_i \pi_i = 1$, $\rho_i \in {\cal
D}({\cal H}_{d'})$). Hereafter, the base of the $\log$ can be
chosen at wish depending on the preferred units of information;
the base equals 2 if the information is quantified in bits. The
regularized capacity $C(\Phi)$ may exceed the Holevo capacity
$C_{\chi}(\Phi)$~\cite{hastings-2009}; however, it is hard to
evaluate $C(\Phi)$ explicitly for a given channel $\Phi$, so many
recent studies are devoted to the search of lower and upper bounds
for $C(\Phi)$ (see, e.g.,
Refs.~\cite{leditzky-wilde-2018,filippov-2018,fk-2019}). Below in
this section, we find the lower and upper bounds for classical
capacity of the generalized erasure channel.

As the concatenation $\Phi \circ \Psi$ of quantum channels $\Phi$
and $\Psi$ satisfies $C(\Phi \circ \Psi) \leq C(\Psi)$ (see,
e.g.,~\cite{holevo-2012}), we first establish an analogous
relation for generalized erasure channels.

\begin{proposition} \label{proposition-relation}
Suppose the quantum operations $\Lambda_1,\Lambda_2,\Theta \in
{\cal O}({\cal H}_d)$ satisfy the relation $\Lambda_1 = \Theta
\circ \Lambda_2$, then $C(\Gamma_{\Lambda_1}) \leq
C(\Gamma_{\Lambda_2})$.
\end{proposition}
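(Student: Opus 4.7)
The plan is to exploit the bottleneck inequality $C(\Phi \circ \Psi) \leq C(\Psi)$ for ordinary quantum channels by exhibiting a channel $\widetilde{\Theta} \in \mathcal{C}(\mathcal{H}_{d+1})$ such that
\begin{equation*}
\Gamma_{\Lambda_1} = \widetilde{\Theta} \circ \Gamma_{\Lambda_2}.
\end{equation*}
Once such a $\widetilde{\Theta}$ is produced, applying the bottleneck relation to the two genuine channels on ${\cal H}_{d+1}$ immediately yields $C(\Gamma_{\Lambda_1}) \leq C(\Gamma_{\Lambda_2})$.

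The construction of $\widetilde{\Theta}$ is dictated by the block structure of $\Gamma_{\Lambda}$ in Eq.~\eqref{gec}. Let $P:{\cal H}_{d+1} \to {\cal H}_d$ denote the projector onto the original subspace, so that every $R \in {\cal B}({\cal H}_{d+1})$ decomposes into a $d \times d$ block $PRP$, an erasure-flag entry $\bra{e}R\ket{e}$, and off-diagonal pieces. I would define
\begin{equation*}
\widetilde{\Theta}[R] = \Gamma_{\Theta}\bigl[ P R P \bigr] \; + \; \bra{e}R\ket{e}\,\ket{e}\bra{e},
\end{equation*}
that is, $\widetilde{\Theta}$ feeds the $d$-dimensional block into the generalized erasure channel built from $\Theta$ (which can produce new erasure contributions) and preserves any weight that was already on the erasure flag. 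Complete positivity is immediate because each summand is a composition of completely positive maps; trace preservation follows from ${\rm tr}[\Gamma_{\Theta}[PRP]] = {\rm tr}[PRP]$ and ${\rm tr}[PRP] + \bra{e}R\ket{e} = {\rm tr}[R]$.

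The key verification is the identity $\widetilde{\Theta} \circ \Gamma_{\Lambda_2} = \Gamma_{\Lambda_1}$, which reduces to a short calculation on the two blocks. For the ${\cal H}_d$ block one uses $\Lambda_1 = \Theta \circ \Lambda_2$ to get $\Theta[\Lambda_2[\rho]] = \Lambda_1[\rho]$. For the erasure coefficient one collects the contribution $\operatorname{tr}\bigl[\Lambda_2[\rho](I - \Theta^{\dag}[I])\bigr]$ produced by $\Gamma_{\Theta}$ acting on $\Lambda_2[\rho]$ together with the pre-existing weight $\operatorname{tr}[\rho(I - \Lambda_2^{\dag}[I])]$ and checks, via $\Lambda_1^{\dag}[I] = \Lambda_2^{\dag}[\Theta^{\dag}[I]]$, that the sum equals $\operatorname{tr}[\rho(I - \Lambda_1^{\dag}[I])]$. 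This is the most delicate step, but it is really a one-line reshuffling of dual-map identities rather than a serious obstacle; the only point to keep track of is that the extra erasures generated at the second stage must exactly match the decrease in detection probability caused by composing $\Lambda_2$ with $\Theta$. With that identity in hand, the claim follows from the bottleneck inequality for $C$.
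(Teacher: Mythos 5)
Your construction is correct and is essentially the same as the paper's: your $\widetilde{\Theta}[R] = \Gamma_{\Theta}[PRP] + \bra{e}R\ket{e}\,\ket{e}\bra{e}$ expands block-by-block to exactly the channel $\Xi$ of Eq.~\eqref{Xi-channel}, and the verification via $\Lambda_1^{\dag}[I] = \Lambda_2^{\dag}\bigl[\Theta^{\dag}[I]\bigr]$ matches the paper's (implicit) computation. No gaps.
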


\begin{proof}
Define the map $\Xi:{\cal B}({\cal H}_{d+1}) \to {\cal B}({\cal
H}_{d+1})$ by its action on matrices in the basis
$\{\ldots,\ket{e}\}$:
\begin{equation} \label{Xi-channel}
\Xi \left[ \left(%
\begin{array}{cc}
  \rho & \vdots \\
  \cdots & c \\
\end{array}%
\right) \right] = \left(%
\begin{array}{cc}
  \Theta[\rho] & {\bf 0} \\
  {\bf 0}^\top & c + {\rm tr}\left[\rho (I - \Theta^{\dag}[I]) \right] \\
\end{array}%
\right),
\end{equation}

\noindent then $\Xi$ is completely positive and trace preserving,
i.e., $\Xi \in {\cal C}({\cal H}_{d+1})$. It is not hard to see
that $\Gamma_{\Lambda_1} = \Xi \circ \Gamma_{\Lambda_2}$, which
implies $C(\Gamma_{\Lambda_1}) \leq C(\Gamma_{\Lambda_2})$ by the
concatenation property for quantum channels.
\end{proof}

Using the result of Proposition~\ref{proposition-relation}, we can
find an upper bound for $C(\Gamma_{\Lambda})$ in terms of the
quantum operation $\Lambda$.

\begin{proposition} \label{proposition-C-upper-bound}
Let $\Lambda \in {\cal O}({\cal H}_d)$, then $C(\Gamma_{\Lambda})
\leq (\log d) \max {\rm Spec} \left( \Lambda^{\dag}[I] \right)$.
\end{proposition}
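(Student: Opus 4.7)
The plan is to reduce the general case to the conventional erasure channel by factoring $\Lambda$ through the scalar multiple of the identity operation with the correct prefactor, then invoke Proposition~\ref{proposition-relation}.

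Concretely, set $p := \max {\rm Spec}(\Lambda^{\dag}[I])$. Since $\Lambda \in {\cal O}({\cal H}_d)$ we have $\Lambda^{\dag}[I] \leq I$, whence $0 \leq p \leq 1$. The degenerate case $p=0$ forces $\Lambda = 0$, the generalized erasure channel $\Gamma_\Lambda$ becomes the trash-and-prepare channel $\rho \mapsto \ket{e}\bra{e}$, which has vanishing classical capacity, so the bound holds trivially. Assume now $p>0$.

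I would factor $\Lambda$ as $\Lambda = \Theta \circ \Lambda_2$, where $\Lambda_2 := p\, {\rm Id} \in {\cal O}({\cal H}_d)$ is the unbiased scalar operation and $\Theta := p^{-1}\Lambda$. The map $\Theta$ is completely positive as a positive multiple of a CP map, and $\Theta^{\dag}[I] = p^{-1}\Lambda^{\dag}[I]$ has spectrum bounded above by $1$, so $\Theta^{\dag}[I] \leq I$ and $\Theta \in {\cal O}({\cal H}_d)$. The equality $\Theta \circ \Lambda_2 = \Lambda$ is immediate. Proposition~\ref{proposition-relation} then yields
\begin{equation*}
C(\Gamma_{\Lambda}) \leq C(\Gamma_{p\,{\rm Id}}).
\end{equation*}

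The remaining step is to identify the right-hand side. By the definition~\eqref{gec}, $\Gamma_{p\,{\rm Id}}$ sends $\rho \mapsto p\rho \oplus (1-p)\ket{e}\bra{e}$, which is exactly the $d$-dimensional erasure channel with erasure probability $1-p$. Its classical capacity is known~\cite{bennett-1997} (and coincides with its Holevo capacity) to be $p\log d$, attained by an ensemble of orthogonal pure states with uniform weights. Substituting this value gives $C(\Gamma_\Lambda) \leq (\log d) \max {\rm Spec}(\Lambda^{\dag}[I])$, as required.

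The main thing to be careful about is that all three ingredients — the factorization, the concatenation inequality of Proposition~\ref{proposition-relation}, and the erasure-channel capacity formula — apply cleanly; there is no genuine obstacle, only the bookkeeping of handling $p=0$ separately. One could equivalently phrase the argument via the Holevo quantity and the data-processing inequality on $\Gamma_{\Lambda} = \Xi \circ \Gamma_{p\,{\rm Id}}$ with $\Xi$ constructed as in~\eqref{Xi-channel}, but invoking Proposition~\ref{proposition-relation} directly is the most economical route.
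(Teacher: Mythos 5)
Your proposal is correct and follows essentially the same route as the paper: factor $\Lambda = (p_{\max}^{-1}\Lambda)\circ(p_{\max}\,{\rm Id})$, apply Proposition~\ref{proposition-relation}, and use the known classical capacity $p_{\max}\log d$ of the conventional erasure channel, with the degenerate case $\Lambda=0$ handled separately. No gaps.
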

\begin{proof}
If $\Lambda=0$, then apparently $C(\Gamma_{\Lambda}) = 0$. Suppose
$\Lambda \neq 0$, then $p_{\max} := \max {\rm Spec} \left(
\Lambda^{\dag}[I] \right) > 0$ and $\Theta = p_{\max}^{-1}
\Lambda$ is a valid quantum operation because $\Theta$ is
completely positive and $\Theta^{\dag}[I] \leq p_{\max}^{-1}
\Lambda^{\dag}[I] \leq I$. Therefore, $\Lambda = \Theta \circ
\Lambda_2$, where $\Lambda_2 = p_{\max} {\rm Id}$. By
Proposition~\ref{proposition-relation} we have
$C(\Gamma_{\Lambda}) \leq C(\Gamma_{\Lambda_2})$. On the other
hand, $\Gamma_{\Lambda_2} \equiv \Gamma_{p_{\max} {\rm Id}}$ is
the conventional erasure channel whose classical capacity is well
known~\cite{bennett-1997,amosov-2009}, namely, $C(\Gamma_{p_{\max}
{\rm Id}}) = p_{\max} \log d$.
\end{proof}

It is tempting to treat $(\log d) \min {\rm Spec} \left(
\Lambda^{\dag}[I] \right)$ as a lower bound for
$C(\Gamma_{\Lambda})$, but a simple counterexample is the unbiased
operation $\Lambda[\rho] = p \, {\rm tr}[\rho] \frac{1}{d}I$, for
which $C(\Gamma_{\Lambda}) = 0 < p \log d = (\log d) \min {\rm
Spec} \left( \Lambda^{\dag}[I] \right)$ if $0<p \leq 1$. As the
classical capacity $C(\Gamma_{\Lambda})$ may vanish for unbiased
quantum operations $\Lambda$, we need to establish a reasonable
lower bound for $C(\Gamma_{\Lambda})$ in the case of biased
quantum operations $\Lambda$.

\begin{proposition} \label{proposition-C-lower-bound}
Let $\Lambda \in {\cal O}({\cal H}_d)$, then $C(\Gamma_{\Lambda})
\geq F(p_{\min},p_{\max})$, where
\begin{equation} \label{lower-bound-for-C}
F(p_{\min},p_{\max}) = \left\{ \begin{array}{ll}
  \log \left( 1 + {\rm Exp}\left[ -
\frac{h(p_{\max})-h(p_{\min})}{p_{\max}-p_{\min}} \right] \right)
- \frac{p_{\max} h(p_{\min}) - p_{\min} h(p_{\max})}{p_{\max} -
p_{\min}} & \text{if~} p_{\min} < p_{\max},\\
  0 & \text{if~} p_{\min} = p_{\max}, \\
\end{array} \right.
\end{equation}

\noindent ${\rm Exp}$ is the inverse function to $\log$, $p_{\max}
= \max {\rm Spec} \left( \Lambda^{\dag}[I] \right)$, $p_{\min} =
\min {\rm Spec} \left( \Lambda^{\dag}[I] \right)$, and
\begin{equation*}
h(x) = - x \log x - (1-x) \log (1-x).
\end{equation*}
\end{proposition}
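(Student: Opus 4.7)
The plan is to exhibit a concrete two-state pure-input ensemble, bound its Holevo information under $\Gamma_{\Lambda}$ from below using concavity of the von Neumann entropy, and then optimize over the ensemble weight. Since $C(\Gamma_{\Lambda}) \geq C_{\chi}(\Gamma_{\Lambda})$ dominates the Holevo information of any specific ensemble, this suffices. The case $p_{\min} = p_{\max}$ is trivial because $F = 0$ and the capacity is nonnegative, so I would assume $p_{\min} < p_{\max}$.

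Since $\Lambda^{\dag}[I]$ is Hermitian, I would pick orthonormal eigenvectors $\ket{\psi_{\min}}, \ket{\psi_{\max}}$ with eigenvalues $p_{\min}, p_{\max}$, and form the ensemble $\{(\pi,\ket{\psi_{\max}}\bra{\psi_{\max}}),(1-\pi,\ket{\psi_{\min}}\bra{\psi_{\min}})\}$ with $\pi \in [0,1]$ to be optimized. By the very definition of $\Gamma_{\Lambda}$ in Eq.~\eqref{gec}, the output $\Gamma_{\Lambda}[\ket{\psi_j}\bra{\psi_j}] = \Lambda[\ket{\psi_j}\bra{\psi_j}] + (1-p_j)\ket{e}\bra{e}$ is block-diagonal, so
\begin{equation*}
S\big(\Gamma_{\Lambda}[\ket{\psi_j}\bra{\psi_j}]\big) = h(p_j) + p_j S(\rho'_j), \qquad \rho'_j := \Lambda[\ket{\psi_j}\bra{\psi_j}]/p_j,
\end{equation*}
for $j \in \{\min,\max\}$. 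Applying $\Gamma_{\Lambda}$ to the averaged input yields another block-diagonal state with erasure weight $1-\bar p$, where $\bar p := \pi p_{\max}+(1-\pi) p_{\min}$, and a normalized $\mathcal{H}_d$-block $\bar{\rho}'$ that is a convex mixture of $\rho'_{\max}$ and $\rho'_{\min}$ with weights $\pi p_{\max}/\bar p$ and $(1-\pi)p_{\min}/\bar p$.

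Now I would use concavity of von Neumann entropy in the form $\bar p\, S(\bar{\rho}') \geq \pi p_{\max} S(\rho'_{\max}) + (1-\pi) p_{\min} S(\rho'_{\min})$, which cancels all of the $S(\rho'_j)$ terms in $\chi = S(\bar\sigma) - \pi S(\sigma_{\max}) - (1-\pi) S(\sigma_{\min})$ and leaves
\begin{equation*}
\chi \geq g(\pi) := h(\bar p) - \pi h(p_{\max}) - (1-\pi) h(p_{\min}).
\end{equation*}
This inequality reflects that the bound $F$ is tight exactly when $\Lambda[\ket{\psi_{\max}}\bra{\psi_{\max}}]$ and $\Lambda[\ket{\psi_{\min}}\bra{\psi_{\min}}]$ are supported pure-state-like, and loose otherwise; it is also the classical Shannon mutual information of a binary input channel whose receiver only decides ``detected'' vs.\ ``erased.''

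The final step is a one-variable optimization of $g(\pi)$ on $[0,1]$. Differentiating and using $h'(x) = \log((1-x)/x)$ gives the stationarity condition $\log((1-\bar p^{\ast})/\bar p^{\ast}) = \beta$, where $\beta := (h(p_{\max})-h(p_{\min}))/(p_{\max}-p_{\min})$, so $\bar p^{\ast} = 1/(1+\mathrm{Exp}(\beta))$. Substituting back and using $\pi^{\ast} h(p_{\max}) + (1-\pi^{\ast}) h(p_{\min}) = (p_{\max} h(p_{\min}) - p_{\min} h(p_{\max}))/(p_{\max}-p_{\min}) + \bar p^{\ast} \beta$ and $h(\bar p^{\ast}) = \log(1+\mathrm{Exp}(\beta)) - (1-\bar p^{\ast})\beta$, the linear-in-$\bar p^{\ast}$ terms cancel and one is left with $\log(1+\mathrm{Exp}(-\beta))$ minus the weighted combination of $h(p_{\min}), h(p_{\max})$ exactly as in Eq.~\eqref{lower-bound-for-C}. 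The main technical burden here is nothing more than bookkeeping in this last algebraic reduction; the conceptual content is the combination of the orthogonal-sum structure of $\Gamma_{\Lambda}$ with concavity of entropy.
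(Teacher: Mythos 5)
Your proof is correct and follows essentially the same route as the paper: the same two-state ensemble of eigenvectors of $\Lambda^{\dag}[I]$, the same reduction to the binary expression $h(\bar p)-\pi h(p_{\max})-(1-\pi)h(p_{\min})$, and the same optimization over the prior $\pi$. The only cosmetic difference is that the paper discards the $\mathcal{H}_d$-block by post-composing $\Gamma_{\Lambda}$ with a coarse-graining channel $\Xi$ and invoking $C(\Xi\circ\Gamma_{\Lambda})\leq C(\Gamma_{\Lambda})$, whereas you keep the block and cancel its entropy contributions via concavity of $S$ --- two standard ways of expressing the same data-processing step.
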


\begin{proof}
Consider a quantum channel $\Xi:{\cal B}({\cal H}_{d+1}) \to {\cal
B}({\cal H}_2)$, which affects matrices in the basis
$\{\ldots,\ket{e}\}$ as follows:
\begin{equation*}
\Xi \left[ \left(%
\begin{array}{cc}
  \rho & \vdots \\
  \cdots & c \\
\end{array}%
\right) \right] = \left(%
\begin{array}{cc}
  {\rm tr}[\rho] & 0 \\
  0 & c \\
\end{array}%
\right).
\end{equation*}

\noindent Then $\Xi \circ \Gamma_{\Lambda}$ is a quantum channel
too and
\begin{equation} \label{inequalities-in-lower-bound}
C(\Gamma_{\Lambda}) \geq C(\Xi \circ \Gamma_{\Lambda}) \geq
C_{\chi} (\Xi \circ \Gamma_{\Lambda}) \geq S \bigg( \Xi \circ
\Gamma_{\Lambda} \Big[ \sum_{i=1,2} \pi_i \rho_i \Big] \bigg) -
\sum_{i=1,2} \pi_i S \left( \Xi \circ \Gamma_{\Lambda} [\rho_i]
\right)
\end{equation}

\noindent for some ensemble $\{\pi_i,\rho_i\}_{i=1,2}$ consisting
of two density matrices $\rho_1,\rho_2 \in {\cal D}({\cal H}_d)$
emerging with probabilities $\pi_1$ and $\pi_2 = 1 - \pi_1$,
respectively. Let $\rho_1 = \ket{f_{\max}}\bra{f_{\max}}$ and
$\rho_2 = \ket{f_{\min}}\bra{f_{\min}}$, where $\ket{f_{\max}} \in
{\cal H}_d$ and $\ket{f_{\min}} \in {\cal H}_d$ are normalized
vectors such that $\Lambda^{\dag}[I] \ket{f_{\max}} = p_{\max}
\ket{f_{\max}}$ and $\Lambda^{\dag}[I] \ket{f_{\min}} = p_{\min}
\ket{f_{\min}}$. Then $\Xi \circ \Gamma_{\Lambda}[\rho_1] = {\rm
diag}(p_{\max},1-p_{\max})$ and $\Xi \circ
\Gamma_{\Lambda}[\rho_2] = {\rm diag}(p_{\min},1-p_{\min})$. The
rightmost side of Eq.~\eqref{inequalities-in-lower-bound} equals
$h(\pi_1 p_{\max} + \pi_2 p_{\min}) - \pi_1 h(p_{\max}) - \pi_2
h(p_{\min})$. Maximizing the latter expression with respect to a
binary probability distribution $(\pi_1,\pi_2)$, we get the right
hand side of Eq.~\eqref{lower-bound-for-C}.
\end{proof}

The lower bound~\eqref{lower-bound-for-C} is nonvanishing whenever
$p_{\max} > p_{\min}$, i.e., for any biased operation $\Lambda$.
The following result provides a lower bound for the classical
capacity of $\Gamma_{\Lambda}$ only in terms of the bias
$b(\Lambda)$.

\begin{corollary} Let $\Lambda \in {\cal O}({\cal H}_d)$, then
$C(\Gamma_{\Lambda}) \geq \log 2 - h \left( \frac{1+b(\Lambda)}{2}
\right)$, where $b(\Lambda) = p_{\max} - p_{\min}$, $p_{\max} =
{\rm Spec} \left( \Lambda^{\dag}[I] \right)$, $p_{\min} = {\rm
Spec} \left( \Lambda^{\dag}[I] \right)$.
\end{corollary}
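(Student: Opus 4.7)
My plan is to deduce this bias-only bound from the setup used in the proof of Proposition~\ref{proposition-C-lower-bound} by specialising to the uniform input distribution and replacing the explicit optimisation over $\pi_1$ by a classical Fano inequality applied to the induced binary channel.

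Concretely, I would reuse the post-processing channel $\Xi:{\cal B}({\cal H}_{d+1}) \to {\cal B}({\cal H}_{2})$ from the proof of Proposition~\ref{proposition-C-lower-bound} together with the two inputs $\rho_1 = \ket{f_{\max}}\bra{f_{\max}}$ and $\rho_2 = \ket{f_{\min}}\bra{f_{\min}}$. On these inputs $\Xi \circ \Gamma_{\Lambda}$ outputs the diagonal matrices ${\rm diag}(p_{\max},1-p_{\max})$ and ${\rm diag}(p_{\min},1-p_{\min})$, so with the uniform prior $\pi_1 = \pi_2 = \frac{1}{2}$ the concatenation realises a classical binary channel from the label $X \in \{1,2\}$ to an outcome $Y \in \{0,1\}$ with $P(Y=0|X=1) = p_{\max}$ and $P(Y=0|X=2) = p_{\min}$. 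The obvious threshold decoder ``declare $X=1$ upon detection ($Y=0$) and $X=2$ upon erasure ($Y=1$)'' has error probability $P_{e} = \frac{1}{2}\bigl((1-p_{\max}) + p_{\min}\bigr) = \frac{1-b(\Lambda)}{2}$. Binary Fano's inequality (since $|X|=2$) then gives $H(X|Y) \leq h(P_{e})$, so $I(X;Y) \geq \log 2 - h\bigl(\frac{1-b(\Lambda)}{2}\bigr) = \log 2 - h\bigl(\frac{1+b(\Lambda)}{2}\bigr)$, where the final equality uses $h(1-x)=h(x)$. The chain of standard inequalities $C(\Gamma_{\Lambda}) \geq C(\Xi \circ \Gamma_{\Lambda}) \geq C_{\chi}(\Xi \circ \Gamma_{\Lambda}) \geq I(X;Y)$ then closes the argument; the last step is valid because $\Xi \circ \Gamma_{\Lambda}$ produces diagonal output states on the two chosen inputs, so that the Holevo quantity of the ensemble coincides with the Shannon mutual information $I(X;Y)$.

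I do not foresee any genuine obstacle here: the corollary is a strictly weaker, bias-only reformulation of Proposition~\ref{proposition-C-lower-bound}, and the argument is essentially the observation that the optimised Holevo lower bound of the proposition always dominates the BSC-style Fano bound of reliability $\frac{1+b(\Lambda)}{2}$. The only minor book-keeping is to check that $b(\Lambda) \in [0,1]$ so that both $\frac{1\pm b(\Lambda)}{2}$ are valid arguments of $h$, and to note that the Fano step is applied to the specific (generally suboptimal) threshold decoder, which is allowed because Fano's inequality holds for any estimator $\hat X = \hat X(Y)$ along the Markov chain $X \to Y \to \hat X$.
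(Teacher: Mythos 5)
Your proof is correct, but it takes a genuinely different route from the paper's. The paper keeps the optimized two-state bound $F(p_{\min},p_{\max})$ from Proposition~\ref{proposition-C-lower-bound}, fixes the bias $b(\Lambda)=p_{\max}-p_{\min}$, and minimizes $F$ over the remaining ``center'' parameter $x=\tfrac12(p_{\min}+p_{\max}-1)$ by a calculus argument: it shows $f(x)=F(\tfrac12+x-\tfrac12 b,\tfrac12+x+\tfrac12 b)$ is even, computes $f''(0)>0$, checks the sign of $f'$ away from the origin, and concludes $F\geq f(0)=\log 2 - h\bigl(\tfrac{1+b(\Lambda)}{2}\bigr)$. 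You instead drop the optimization over the prior entirely, take the uniform ensemble on the same two eigenstates, identify the Holevo quantity of the resulting commuting outputs with the Shannon mutual information of a binary classical channel, and lower-bound it by Fano's inequality applied to the threshold decoder; the key point that makes this work is that the decoder's error probability $\tfrac12\bigl((1-p_{\max})+p_{\min}\bigr)=\tfrac{1-b(\Lambda)}{2}$ depends only on the bias and not on the center, so the calculus is avoided altogether. Your argument is more elementary and makes the information-theoretic meaning of the bound transparent (it is a BSC-style reliability bound); the paper's argument buys the extra information that the corollary's bound is exactly the worst case of the stronger bound $F$ at fixed bias, attained when $p_{\min}+p_{\max}=1$, where your Fano step is also tight because the induced channel is then a genuine binary symmetric channel. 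All the auxiliary steps you invoke (the data-processing inequality $C(\Xi\circ\Gamma_{\Lambda})\leq C(\Gamma_{\Lambda})$, $C\geq C_{\chi}\geq\chi$ for a specific ensemble, $b(\Lambda)\in[0,1]$, and $h(1-x)=h(x)$) check out.
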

\begin{proof}
If $b(\Lambda) = 0$, then we get a trivial bound
$C(\Gamma_{\Lambda}) \geq 0$. Suppose $b(\Lambda) > 0$ is fixed.
Let us use the expressions $p_{\min} = \frac{1}{2} + x -
\frac{1}{2} b(\Lambda)$ and $p_{\max} = \frac{1}{2} + x +
\frac{1}{2} b(\Lambda)$, where $x := \frac{1}{2} (p_{\min} +
p_{\max} - 1) \in [\frac{1}{2} b(\Lambda) - \frac{1}{2},
\frac{1}{2} - \frac{1}{2} b(\Lambda)]$. Then the lower bound
$F(p_{\min},p_{\max})$ for $C(\Gamma_{\Lambda})$ can be rewritten
as $f(x):= F(\frac{1}{2} + x - \frac{1}{2} b(\Lambda),\frac{1}{2}
+ x + \frac{1}{2} b(\Lambda))$. Note that $f(x) = f(-x)$ because
the replacement $x \rightarrow -x$ leads to $p_{\min} \rightarrow
1 - p_{\max}$ and $p_{\max} \rightarrow 1 - p_{\min}$, whereas
$\log(1 + {\rm Exp}[-y]) = \log(1 + {\rm Exp}[y])-y$. This means
$f(x)$ is an even function and
$\left.\frac{df}{dx}\right\vert_{x=0} = 0$. Moreover, $\left.
\frac{d^2 f}{dx^2} \right\vert_{x=0} = \left[
\frac{4b^2(\Lambda)}{1-b^2(\Lambda)} + \frac{1}{b^2(\Lambda)}
\left( {\rm ln} \frac{1-b(\Lambda)}{1+b(\Lambda)} + 2b(\Lambda)
\right)^2 \right] \log {\rm e}
> 0$, $\frac{df}{dx} < 0$ if $x < 0$, and $\frac{df}{dx} > 0$ if $x
> 0$. Therefore, $C(\Gamma_{\Lambda}) \geq F(p_{\min},p_{\max})
\geq f(0) = \log 2 - h \left( \frac{1+b(\Lambda)}{2} \right)$.
\end{proof}

\begin{example}
Consider the reflection of photons from a dielectric surface,
where the angle of incidence equals Brewster's angle. In this
case, we deal with the quantum operation $\Lambda \in {\cal
O}({\cal H}_2)$ given by Eq.~\eqref{pdl}, where $p_H > p_V = 0$.
Then $p_{\max} = p_H$, $p_{\min}=0$, and
Propositions~\ref{proposition-C-upper-bound}
and~\ref{proposition-C-lower-bound} yield $\log \left( 1 + p_H
(1-p_H)^{(1-p_H)/p_H} \right) \leq C(\Gamma_{\Lambda}) \leq p_H
\log 2$. If $p_H = 1$, then $C(\Gamma_{\Lambda}) = \log 2$.
\end{example}

The disadvantage of Propositions~\ref{proposition-C-upper-bound}
and~\ref{proposition-C-lower-bound} is that they exploit only two
quantities, $p_{\max}$ and $p_{\min}$, leaving the structure of
the quantum operation $\Lambda$ beyond the scope. As we know from
Proposition~\ref{proposition-images}, the normalized image of
$\Lambda$ coincides with the image of the channel $\Phi_{\Lambda}$
given by Eq.~\eqref{Phi-Lambda}, which enables us to relate the
Holevo capacity $C_{\chi}(\Gamma_{\Lambda})$ with the Holevo
capacity $C_{\chi}(\Phi_{\Lambda})$.

\begin{proposition} \label{proposition-C1-lower-upper-bounds}
Let $\Lambda \in {\cal O}({\cal H}_d)$, then $p_{\min}
C_{\chi}(\Phi_{\Lambda}) \leq C_{\chi}(\Gamma_{\Lambda}) \leq
p_{\max} C_{\chi}(\Phi_{\Lambda}) + F(p_{\min},p_{\max})$, where
$p_{\max} = \max {\rm Spec} \left( \Lambda^{\dag}[I] \right)$,
$p_{\min} = \min {\rm Spec} \left( \Lambda^{\dag}[I] \right)$,
$\Phi_{\Lambda}$ is given by Eq.~\eqref{Phi-Lambda}, and
$F(p_{\min},p_{\max})$ is given by Eq.~\eqref{lower-bound-for-C}.
\end{proposition}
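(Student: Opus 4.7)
The plan is to exploit the block-diagonal structure of $\Gamma_{\Lambda}[\rho]$ in the basis $\{\ldots,\ket{e}\}$. For $\rho \in {\cal D}({\cal H}_d)$ with detection probability $p(\rho) := {\rm tr}\big[\Lambda[\rho]\big] > 0$, one has $\Gamma_{\Lambda}[\rho] = p(\rho) \Lambda_{\cal D}[\rho] \oplus (1-p(\rho)) \ket{e}\bra{e}$, and consequently $S(\Gamma_{\Lambda}[\rho]) = p(\rho) S(\Lambda_{\cal D}[\rho]) + h(p(\rho))$. For any ensemble $\{\pi_i,\rho_i\}$ with average state $\bar\rho = \sum_i \pi_i \rho_i$, writing $p_i = p(\rho_i)$, $\bar p = \sum_i \pi_i p_i$, and the \emph{reweighted} distribution $\tilde\pi_i = \pi_i p_i / \bar p$, the identity $\sum_i \tilde\pi_i \Lambda_{\cal D}[\rho_i] = \Lambda_{\cal D}[\bar\rho]$ transforms the Holevo $\chi$-quantity into
\begin{equation*}
\chi(\Gamma_{\Lambda},\{\pi_i,\rho_i\}) = \bar p \left( S(\Lambda_{\cal D}[\bar\rho]) - \sum_i \tilde\pi_i S(\Lambda_{\cal D}[\rho_i]) \right) + \left( h(\bar p) - \sum_i \pi_i h(p_i) \right),
\end{equation*}
the second bracket being non-negative by concavity of $h$.

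For the upper bound, I would invoke Proposition~\ref{proposition-images} to pick, for each $i$, a state $\sigma_i \in {\cal D}({\cal H}_d)$ with $\Phi_{\Lambda}[\sigma_i] = \Lambda_{\cal D}[\rho_i]$; this is possible because the two images coincide. Linearity of $\Phi_{\Lambda}$ then yields $\Phi_{\Lambda}\left[\sum_i \tilde\pi_i \sigma_i\right] = \sum_i \tilde\pi_i \Phi_{\Lambda}[\sigma_i] = \Lambda_{\cal D}[\bar\rho]$, so the first bracket equals the ordinary Holevo quantity $\chi(\Phi_{\Lambda},\{\tilde\pi_i,\sigma_i\}) \leq C_{\chi}(\Phi_{\Lambda})$. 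Combined with $\bar p \leq p_{\max}$, this bounds the first contribution by $p_{\max} C_{\chi}(\Phi_{\Lambda})$. For the second contribution, concavity of $h$ implies that for fixed $\bar p$ the Jensen gap $h(\bar p) - \sum_i \pi_i h(p_i)$ is maximized when the $p_i$ take only the extreme values $p_{\min}$ and $p_{\max}$; optimizing subsequently over $\bar p$ reproduces exactly the one-dimensional maximization solved in the proof of Proposition~\ref{proposition-C-lower-bound}, whose optimum is $F(p_{\min},p_{\max})$.

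For the lower bound, I would run the construction in reverse. Given a near-optimal ensemble $\{\alpha_i,\sigma_i\}$ for $\Phi_{\Lambda}$, Proposition~\ref{proposition-images} furnishes preimages $\rho_i$ with $\Lambda_{\cal D}[\rho_i] = \Phi_{\Lambda}[\sigma_i]$ (forcing $p_i > 0$). Setting $\pi_i = (\alpha_i/p_i) / \sum_j (\alpha_j/p_j)$, a direct check gives $\tilde\pi_i = \alpha_i$ and $\bar p = 1/\sum_j(\alpha_j/p_j)$, so the first bracket of the identity above reduces to $\chi(\Phi_{\Lambda},\{\alpha_i,\sigma_i\})$; dropping the non-negative second bracket produces $\chi(\Gamma_{\Lambda},\{\pi_i,\rho_i\}) \geq \bar p \, \chi(\Phi_{\Lambda},\{\alpha_i,\sigma_i\})$. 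Since $p_i = {\rm tr}[\rho_i \Lambda^{\dag}[I]] \geq p_{\min}$ for every $i$, the harmonic mean satisfies $\bar p \geq p_{\min}$, and passing to the supremum finishes the argument.

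The main obstacle is the appearance of the nonlinear map $\Lambda_{\cal D}$ in the first bracket. The trick that makes everything click is the reweighting $\tilde\pi_i = \pi_i p_i/\bar p$: it simultaneously cancels the normalization denominators hidden in $\Lambda_{\cal D}$, produces the prefactor $\bar p$ (essential for the $p_{\max}$ and $p_{\min}$ factors to appear in the bounds), and, via Proposition~\ref{proposition-images} plus the linearity of $\Phi_{\Lambda}$, converts what looks like a nonlinear $\chi$-like expression into a genuine Holevo quantity for the channel $\Phi_{\Lambda}$.
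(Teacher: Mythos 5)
Your proposal is correct and follows essentially the same route as the paper: the same block-diagonal entropy decomposition, the same reweighting $\tilde\pi_i = \pi_i p_i/\bar p$ (the paper's $q_k$), the same appeal to Proposition~\ref{proposition-images} to pass between the normalized image of $\Lambda$ and the image of $\Phi_{\Lambda}$ in both directions, and the same treatment of the binary-entropy Jensen gap via extremal values $p_{\min},p_{\max}$ to recover $F(p_{\min},p_{\max})$.
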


\begin{proof}
Consider an ensemble $\{\pi_k,\xi_k\}$, where $\{\pi_k\}$ is a
nondegerate probability distribution and $\xi_k \in {\cal D}({\cal
H}_d)$. The Holevo capacity of a channel $\Psi$ reads
$C_{\chi}(\Psi) = \sup_{\{\pi_k,\xi_k\}}
\chi(\{\pi_k,\Psi[\xi_k]\})$, where $\chi(\{\pi_k,\Psi[\xi_k]\})
:= S(\Psi[\sum_k \pi_k \xi_k]) - \sum_k \pi_k S(\Psi[\xi_k])$ is
the so-called Holevo quantity. Let the ensemble $\{\pi_k,\xi_k\}$
pass through the generalized erasure channel $\Gamma_{\Lambda}$,
then the output ensemble is $\{\pi_k,\Gamma_{\Lambda}[\xi_k]\}$.
Since $\Lambda$ is trace nonincreasing, we have $\Lambda[\xi_k] =
p_k \rho_k$, where $p_k = {\rm tr}\big[\Lambda[\xi_k]\big] \in
[0,1]$ and $\rho_k \in {\cal D}({\cal H}_d)$. We have $S(p_k
\rho_k) = p_k S(\rho_k) - p_k \log p_k$, and a straightforward
calculation yields $S \left[ \left(%
\begin{array}{cc}
  p_k \rho_k & {\bf 0} \\
  {\bf 0}^{\top} & 1-p_k \\
\end{array}%
\right) \right] = p_k S(\rho_k) + h(p_k)$. Denote $\overline{p} =
\sum_k \pi_k p_k > 0$ and introduce the renormalized probabilities
$q_k = \pi_k p_k / \overline{p}$ and the average state
$\overline{\rho} = \sum_k q_k \rho_k$, then $\Lambda[ \sum_k \pi_k
\xi_k] = \sum_k \pi_k p_k \rho_k = \overline{p} \,
\overline{\rho}$. We obtain
\begin{eqnarray*}
&& \chi(\{\pi_k,\Gamma_{\Lambda}[\xi_k]\}) = S \Big(
\Gamma_{\Lambda} \Big[\sum_k \pi_k \xi_k \Big] \Big) - \sum_k
\pi_k S(\Gamma_{\Lambda}[\xi_k]) = \overline{p} S(\overline{\rho})
+ h(\overline{p}) - \sum_k \pi_k \big(
p_k S(\rho_k) + h(p_k) \big) \nonumber\\
&& = \overline{p} \left( S(\overline{\rho}) - \sum_k q_k S(\rho_k)
\right) + h\left( \sum_k \pi_k p_k \right) - \sum_k \pi_k h(p_k) =
\overline{p} \, \chi(\{q_k,\rho_k\}) + h\left( \sum_k \pi_k p_k
\right) - \sum_k \pi_k h(p_k).
\end{eqnarray*}

\noindent As for any $q_k >0$ we have $\rho_k =
\Phi_{\Lambda}[\rho'_k]$ for some $\rho'_k \in {\cal D}({\cal
H}_d)$ due to Proposition~\ref{proposition-images}, we obtain
\begin{equation} \label{chi-Gamma-Lambda-chi-Phi-Lambda}
\chi(\{\pi_k,\Gamma_{\Lambda}[\xi_k]\}) = \overline{p} \,
\chi(\{q_k, \Phi_{\Lambda}[\rho_k']\}) + h\left( \sum_k \pi_k p_k
\right) - \sum_k \pi_k h(p_k).
\end{equation}

Let us consider two cases.

(i) Suppose $\{\pi_k,\xi_k\}$ is an optimal ensemble such that
$C_{\chi}(\Gamma_{\Lambda}) =
\chi(\{\pi_k,\Gamma_{\Lambda}[\xi_k]\})$, then $\{q_k,\rho_k'\}$
is some (generally nonoptimal) ensemble and $\chi(\{q_k,
\Phi_{\Lambda}[\rho_k']\}) \leq C_{\chi}(\Phi_{\Lambda})$. As
$p_{\min} \leq p_k \leq p_{\max}$ for all $k$, we have
$\overline{p} \leq p_{\max}$. Also, for any $k$ there exists
$\mu_k \in [0,1]$ such that $p_k = \mu_k p_{\min} + (1-\mu_k)
p_{\max}$. Since the binary entropy $h(x)$ is a concave function,
$h(p_k) \geq \mu_k h(p_{\min}) + (1-\mu_k) h(p_{\max})$. Denote
$\widetilde{\pi}_1 := \sum_k \pi_k \mu_k$ and $\widetilde{\pi}_2
:= \sum_k \pi_k (1 - \mu_k)$, then
$(\widetilde{\pi}_1,\widetilde{\pi}_2)$ is a binary probability
distribution and $\sum_k \pi_k h(p_k) \geq \widetilde{\pi}_1
h(p_{\min}) + \widetilde{\pi}_2 h(p_{\max})$, which implies
$h\left( \sum_k \pi_k p_k \right) - \sum_k \pi_k h(p_k) \leq
h(\widetilde{\pi}_1 p_{\min} + \widetilde{\pi}_2 p_{\max}) -
\widetilde{\pi}_1 h(p_{\min}) - \widetilde{\pi}_2 h(p_{\max}) \leq
F(p_{\min},p_{\max})$. Finally, we get the upper bound
$C_{\chi}(\Gamma_{\Lambda}) \leq p_{\max} C_{\chi}(\Phi_{\Lambda})
+ F(p_{\min},p_{\max})$.

(ii) Suppose $\{q_k,\rho'_k\}$ is an optimal ensemble such that
$C_{\chi}(\Phi_{\Lambda}) = \chi(\{q_k,
\Phi_{\Lambda}[\rho_k']\})$. For any $\rho'_k$ there exists $\xi_k
\in {\cal D}({\cal H}_d)$ such that $\Lambda[\xi_k] / p_k =
\Phi_{\Lambda}[\rho_k']$ and $p_k = {\rm
tr}\big[\Lambda[\xi_k]\big] > 0$ due to
Proposition~\ref{proposition-images}. Define $1 / \overline{p} =
\sum_k q_k / p_k$, then formula $\pi_k = \overline{p}q_k / p_k$
defines a probability distribution $\{\pi_k\}$ and
Eq.~\eqref{chi-Gamma-Lambda-chi-Phi-Lambda} is valid. Since
$\{\pi_k,\xi_k\}$ is some (generally nonoptimal) ensemble for the
channel $\Gamma_{\Lambda}$, we have $C_{\chi}(\Phi_{\Lambda}) \geq
\chi(\{\pi_k,\Gamma_{\Lambda}[\xi_k]\}) = \overline{p}
C_{\chi}(\Phi_{\Lambda}) + h\left( \sum_k \pi_k p_k \right) -
\sum_k \pi_k h(p_k) \geq p_{\min} C_{\chi}(\Phi_{\Lambda})$.
\end{proof}

Note that the inequality $p_{\min} C_{\chi}(\Phi_{\Lambda}) \leq
C_{\chi}(\Gamma_{\Lambda})$ cannot be derived in a way similar to
the proof of Proposition~\ref{proposition-relation} because, in
general, there exists no quantum operation $\Theta$ such that
$\Theta \circ \Lambda = p_{\min} \Phi_{\Lambda}$. For instance,
for the operation $\Lambda$ in
Example~\ref{example-Lambda-Phi-Lambda} we explicitly find the
unique $\Theta = p_{\min} \Phi_{\Lambda} \circ \Lambda^{-1}$ if
$abc \neq 0$; however, the obtained map $\Theta$ turns out to be
nonpositive.

For unbiased operations $\Lambda$ we have $p_{\min} = p_{\max}$
and $F(p_{\min},p_{\max}) = 0$, so we readily get the following
result.

\begin{corollary}
Let $\Lambda \in {\cal O}({\cal H}_d)$ be an unbiased quantum
operation, i.e., $\Lambda = p \Phi$ for some $0 \leq p \leq 1$ and
a quantum channel $\Phi \in {\cal C}({\cal H}_d)$, then
$C_{\chi}(\Gamma_{\Lambda}) = p C_{\chi}(\Phi)$.
\end{corollary}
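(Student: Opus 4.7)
The plan is to derive this Corollary directly from Proposition~\ref{proposition-C1-lower-upper-bounds} by a squeeze argument, after verifying that the three inputs to that proposition collapse for an unbiased $\Lambda$.

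First, I would compute $\Lambda^{\dag}[I]$ for $\Lambda = p\Phi$: by linearity of the dual and the trace-preservation identity $\Phi^{\dag}[I] = I$, one gets $\Lambda^{\dag}[I] = p I$. Consequently $p_{\max} = p_{\min} = p$, and from definition~\eqref{lower-bound-for-C} the correction $F(p_{\min},p_{\max}) = F(p,p) = 0$.

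Second, I would identify the channel $\Phi_{\Lambda}$ from Eq.~\eqref{Phi-Lambda}. Assuming for the moment $p > 0$, the operator $\Lambda^{\dag}[I] = pI$ is strictly positive, so its Moore--Penrose inverse equals $p^{-1} I$, the projector onto its kernel is $\Pi_0 = 0$, and the $\xi$-term disappears. Then $\Phi_{\Lambda}[\rho] = \Lambda[p^{-1/2} \rho \, p^{-1/2}] = p^{-1} \Lambda[\rho] = \Phi[\rho]$, so $\Phi_{\Lambda} = \Phi$ and hence $C_{\chi}(\Phi_{\Lambda}) = C_{\chi}(\Phi)$.

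Plugging $p_{\min} = p_{\max} = p$, $F(p_{\min},p_{\max}) = 0$, and $C_{\chi}(\Phi_{\Lambda}) = C_{\chi}(\Phi)$ into Proposition~\ref{proposition-C1-lower-upper-bounds} gives $p \, C_{\chi}(\Phi) \leq C_{\chi}(\Gamma_{\Lambda}) \leq p \, C_{\chi}(\Phi)$, which yields the claimed equality. The degenerate case $p = 0$ is handled separately: then $\Lambda = 0$ and $\Gamma_{\Lambda}[\rho] = \ket{e}\bra{e}$ for all $\rho$, so $C_{\chi}(\Gamma_{\Lambda}) = 0 = 0 \cdot C_{\chi}(\Phi)$. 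There is no real obstacle here; the only nontrivial step is recognizing that the Moore--Penrose inverse reduces to the ordinary inverse when $p>0$ so that $\Phi_{\Lambda}$ simplifies cleanly to $\Phi$, making the upper and lower bounds of Proposition~\ref{proposition-C1-lower-upper-bounds} coincide.
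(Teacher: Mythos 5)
Your proposal is correct and matches the paper's route: the corollary is stated there as an immediate consequence of Proposition~\ref{proposition-C1-lower-upper-bounds}, using exactly the observations that $p_{\min}=p_{\max}=p$ forces $F(p_{\min},p_{\max})=0$ and that $\Phi_{\Lambda}=\Phi$ for $p>0$, so the two bounds squeeze to $p\,C_{\chi}(\Phi)$. Your separate treatment of the degenerate case $p=0$ is a sensible extra detail the paper leaves implicit.
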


To conclude this section, we establish the relation between tensor
products $\Gamma_{\Lambda}^{\otimes n}$ and $\Lambda^{\otimes n}$.
Using the definition~\eqref{gec} and the expression
$\Gamma_{\Lambda} = \Lambda \oplus ({\rm Tr} \circ
\Lambda'_{\min})$, it becomes clear that
\begin{equation} \label{Gamma-Lambda-tensor-power-2}
\Gamma_{\Lambda}^{\otimes 2}  = \Lambda^{\otimes 2} \ \oplus \
\left\{ \Lambda \otimes ({\rm Tr} \circ \Lambda'_{\min}) \right\}
\ \oplus \ \left\{ ({\rm Tr} \circ \Lambda'_{\min}) \otimes
\Lambda \right\} \ \oplus \ ({\rm Tr} \circ
\Lambda'_{\min})^{\otimes 2}.
\end{equation}

\noindent Similarly, we have $\Gamma_{\Lambda}^{\otimes n} =
\Lambda^{\otimes n} \ \oplus \ \Upsilon$, where the image of the
map $\Upsilon$ is orthogonal to the image of the map
$\Lambda^{\otimes n}$ with respect to the Hilbert--Schmidt scalar
product. For any $\rho \in {\cal B}({\cal H}_{d}^{\otimes n})$ the
support of $\Upsilon[\rho]$ belongs to a linear subspace of
dimension $(d+1)^n - d^n$; we denote this subspace by ${\cal
H}_{d+1}^{\otimes n} \setminus {\cal H}_{d}^{\otimes n}$. Let
${\rm Id}^{\otimes n}: {\cal B}({\cal H}_{d}^{\otimes n}) \to
{\cal B}({\cal H}_{d}^{\otimes n})$ be the identity transformation
and ${\rm Tr}_{{\cal H}_{d+1}^{\otimes n} \setminus {\cal
H}_{d}^{\otimes n}}$ be a trash-and-prepare quantum channel that
maps any $\varrho \in {\cal B}({\cal H}_{d+1}^{\otimes n}
\setminus {\cal H}_{d}^{\otimes n})$ to ${\rm tr}[\varrho]
\ket{e}\bra{e}$. Since both $\Gamma_{\Lambda}^{\otimes n}$ and
${\rm Tr}_{{\cal H}_{d+1}^{\otimes n} \setminus {\cal
H}_{d}^{\otimes n}}$ are trace preserving, we have $( {\rm
Id}^{\otimes n} \oplus {\rm Tr}_{{\cal H}_{d+1}^{\otimes n}
\setminus {\cal H}_{d}^{\otimes n}} ) \circ
\Gamma_{\Lambda}^{\otimes n} [\rho] = \Lambda^{\otimes n}[\rho]
\oplus {\rm tr} \big[ \rho - \Lambda^{\otimes n}[\rho] \big]
\ket{e}\bra{e} = \Gamma_{\Lambda^{\otimes n}}[\rho]$. Therefore,
the channel $\Gamma_{\Lambda^{\otimes n}}$ is a concatenation of
channels $\Gamma_{\Lambda}^{\otimes n}$ and $( {\rm Id}^{\otimes
n} \oplus {\rm Tr}_{{\cal H}_{d+1}^{\otimes n} \setminus {\cal
H}_{d}^{\otimes n}} )$, and we immediately get the following
result.
\begin{proposition}
Let $\Lambda \in {\cal O}({\cal H}_d)$, then $C(\Gamma_{\Lambda})
\geq \frac{1}{n} C_{\chi} (\Gamma_{\Lambda}^{\otimes n}) \geq
\frac{1}{n} C_{\chi} (\Gamma_{\Lambda^{\otimes n}})$ for all $n
\in \mathbb{N}$.
\end{proposition}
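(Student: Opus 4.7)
The plan is to split the claimed chain of inequalities into two independent parts and use results already supplied in the preceding paragraphs.

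For the first inequality $C(\Gamma_{\Lambda}) \geq \frac{1}{n} C_{\chi}(\Gamma_{\Lambda}^{\otimes n})$, I would invoke the regularized definition $C(\Phi) = \lim_{m\to\infty} \frac{1}{m} C_\chi(\Phi^{\otimes m})$ together with the superadditivity of the Holevo capacity on tensor powers: $C_\chi(\Phi^{\otimes nm}) \geq m\, C_\chi(\Phi^{\otimes n})$, which is a textbook fact about coding via product ensembles. Dividing by $nm$ and letting $m \to \infty$ with $n$ fixed yields exactly $C(\Phi) \geq \frac{1}{n} C_\chi(\Phi^{\otimes n})$. Applying this with $\Phi = \Gamma_\Lambda$ gives the desired bound.

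For the second inequality $\frac{1}{n} C_\chi(\Gamma_\Lambda^{\otimes n}) \geq \frac{1}{n} C_\chi(\Gamma_{\Lambda^{\otimes n}})$, I would rely on the decomposition already constructed in the paragraph immediately before the statement: namely, $\Gamma_{\Lambda^{\otimes n}} = ({\rm Id}^{\otimes n} \oplus {\rm Tr}_{{\cal H}_{d+1}^{\otimes n} \setminus {\cal H}_d^{\otimes n}}) \circ \Gamma_\Lambda^{\otimes n}$, where both factors are bona fide quantum channels. Since the Holevo capacity satisfies the concatenation monotonicity $C_\chi(\Xi \circ \Psi) \leq C_\chi(\Psi)$ (obtained by processing an optimal ensemble for $\Psi$ through the further channel $\Xi$ and using the data processing inequality for the Holevo quantity), we get $C_\chi(\Gamma_{\Lambda^{\otimes n}}) \leq C_\chi(\Gamma_\Lambda^{\otimes n})$; division by $n$ finishes the argument.

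There is essentially no obstacle here: the nontrivial content of the proposition is the channel identity linking $\Gamma_\Lambda^{\otimes n}$ and $\Gamma_{\Lambda^{\otimes n}}$, and that identity has already been established in the preceding discussion using $\Gamma_\Lambda = \Lambda \oplus ({\rm Tr} \circ \Lambda'_{\min})$ and the fact that the residual components map into the orthogonal flag subspace ${\cal H}_{d+1}^{\otimes n} \setminus {\cal H}_d^{\otimes n}$. Once those two ingredients (superadditivity of $C_\chi$ under tensor powers and monotonicity of $C_\chi$ under postprocessing) are stated, the proof is a two-line composition.
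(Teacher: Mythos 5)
Your proposal is correct and follows essentially the same route as the paper: the paper establishes the identity $\Gamma_{\Lambda^{\otimes n}} = ({\rm Id}^{\otimes n} \oplus {\rm Tr}_{{\cal H}_{d+1}^{\otimes n} \setminus {\cal H}_{d}^{\otimes n}}) \circ \Gamma_{\Lambda}^{\otimes n}$ in the paragraph preceding the proposition and then declares the result immediate, relying on exactly the two standard facts you cite (superadditivity of $C_{\chi}$ under tensor powers for the first inequality and monotonicity of $C_{\chi}$ under concatenation for the second). No gaps.
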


\subsection{Quantum capacity} \label{section-gec-Q}

Encoding quantum states into higher-dimensional multipartite
quantum systems via an isometric map, sending all the systems
through the same memoryless quantum channel $\Phi$, and decoding
the outcome via a dimension-reducing quantum channel, one can
asymptotically achieve the perfect transfer of any initial quantum
state provided the noise in the communication line is not too
intense~\cite{lloyd-1997,barnum-1998,devetak-2005}. The rate of
this quantum communication is quantified by the logarithm of the
transferred state dimension per channel use. The maximum reliable
communication rate is called quantum capacity of the channel
$\Phi$ and reads~\cite{devetak-2005}
\begin{equation}
Q(\Phi) = \lim_{n \rightarrow \infty} \frac{1}{n}
Q_1(\Phi^{\otimes n}), \quad Q_1(\Psi) = \sup_{\rho \in {\cal
D}({\cal H}_{d'})} \{ S(\Psi[\rho]) - S(\widetilde{\Psi}[\rho])
\},
\end{equation}

\noindent where $\widetilde{\Psi}:{\cal B}({\cal H}_{d'}) \to
{\cal B}({\cal H}_{k})$ is a complementary channel to the channel
$\Psi:{\cal B}({\cal H}_{d'}) \to {\cal B}({\cal H}_{d''})$ with
the Kraus rank $k$. To be precise, $\widetilde{\Psi}[\rho] = {\rm
tr}_{{\cal H}_{d''}}[W \rho W^{\dag}]$, where $W:{\cal H}_{d'} \to
{\cal H}_{d''} \otimes {\cal H}_{k}$ is an isometry ($W^{\dag}W =
I$) in the Stinespring dilation $\Psi[\rho] = {\rm tr}_{{\cal
H}_{k}}[W \rho W^{\dag}]$. Physically, the complementary channel
output $\widetilde{\Psi}[\rho] \in {\cal D}({\cal H}_k)$ shows an
effective state of the environment after a density operator $\rho
\in {\cal D}({\cal H}_{d'})$ has passed through a quantum channel
$\Psi:{\cal B}({\cal H}_{d'}) \to {\cal B}({\cal H}_{d''})$ with
the Kraus rank $k$. The quantity $S(\Psi[\rho]) -
S(\widetilde{\Psi}[\rho])$ is known as the coherent information,
whereas $\frac{1}{n}Q_1(\Phi^{\otimes n})$ is usually referred to
as an $n$-letter quantum capacity. Useful conditions for strict
positivity of $Q_1(\Phi)$ are given in
Ref.~\cite{siddhu-mar-2020}.

The quantum capacity is known to satisfy the additivity property
$Q(\Phi) = Q_1(\Phi)$ if $\Phi$ is degradable, i.e., if there
exists a quantum channel $\Xi$ such that $\widetilde{\Phi} = \Xi
\circ \Phi$~\cite{devetak-shor-2005}. If $\Phi$ is antidegradable,
i.e., there exists a quantum channel $\Xi$ such that $\Phi = \Xi
\circ \widetilde{\Phi}$, then $Q(\Phi) = 0$ and the additivity
property is trivially fulfilled (see, e.g.,~\cite{cubitt-2008}).
The superadditivity of coherent information, i.e., the strict
inequality $Q_1(\Phi^{\otimes n})
> n Q_1(\Phi)$, is known to hold for some depolarizing channels if $n \geq 3$~\cite{divincenzo-1998,fern-2008}, some
dephrasure channels if $n \geq 2$~\cite{leditzky-2018},
concatenations of the erasure channel with the amplitude damping
channels~\cite{siddhu-2020}, the state-of-the-art channels
$\Phi:{\cal B}({\cal H}_3) \to {\cal B}({\cal H}_3)$ with
$\frac{1}{2}Q_1(\Phi^{\otimes 2}) - Q_1(\Phi) \approx 4.4 \cdot
10^{-2}$ and their higher-dimensional
generalizations~\cite{siddhu-nov-2020}, and for a collection of
peculiar channels if $n \geq n_0$, where $n_0 \geq 2$ specifies
the channel and can be arbitrary~\cite{cubitt-2015}. In this
section, we find lower and upper bounds for the quantum capacity
of a generalized erasure channel. Then we study degradability and
antidegradability for a class of generalized erasure channels. For
a 2-parameter map of that class, we reveal the superadditivity
property $Q_1(\Gamma_{\Lambda}^{\otimes 2}) > 2
Q_1(\Gamma_{\Lambda})$ within a wide range of parameters.

\begin{proposition}
Let $\Lambda \in {\cal O}({\cal H}_d)$, then $Q(\Gamma_{\Lambda})
\geq \frac{1}{n} Q_{1} (\Gamma_{\Lambda}^{\otimes n}) \geq
\frac{1}{n} Q_{1} (\Gamma_{\Lambda^{\otimes n}})$ for all $n \in
\mathbb{N}$.
\end{proposition}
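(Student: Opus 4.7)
The inequalities decouple into two independent pieces that can be handled separately. The first inequality $Q(\Gamma_{\Lambda}) \geq \frac{1}{n} Q_{1}(\Gamma_{\Lambda}^{\otimes n})$ is the general fact that the regularized quantum capacity $Q(\Phi) = \lim_{n\to\infty}\frac{1}{n}Q_1(\Phi^{\otimes n})$ dominates each finite-$n$ term, because coherent information is superadditive under tensor products and Fekete's lemma ensures the limit equals the supremum. Applying this with $\Phi = \Gamma_\Lambda$ yields the first bound immediately.

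The second inequality $\frac{1}{n} Q_{1}(\Gamma_{\Lambda}^{\otimes n}) \geq \frac{1}{n} Q_{1}(\Gamma_{\Lambda^{\otimes n}})$ relies on the structural identity already derived in the paragraph preceding the proposition, namely
\begin{equation}
\Gamma_{\Lambda^{\otimes n}} = \bigl( {\rm Id}^{\otimes n} \oplus {\rm Tr}_{{\cal H}_{d+1}^{\otimes n} \setminus {\cal H}_{d}^{\otimes n}} \bigr) \circ \Gamma_{\Lambda}^{\otimes n}.
\end{equation}
The outer map is an orthogonal sum of the identity channel on ${\cal B}({\cal H}_d^{\otimes n})$ with a trash-and-prepare channel that collapses the orthogonal complement to $\ket{e}\bra{e}$, hence a bona fide completely positive trace preserving map. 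Post-processing by a CPTP map cannot increase the coherent information (data processing inequality $Q_1(\Xi \circ \Psi) \leq Q_1(\Psi)$), so $Q_1(\Gamma_{\Lambda^{\otimes n}}) \leq Q_1(\Gamma_\Lambda^{\otimes n})$ and dividing by $n$ completes the proof.

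There is essentially no obstacle, since the substantive work was already carried out when deriving the decomposition displayed above; once the post-composition structure is in hand, both inequalities are immediate applications of standard facts (Fekete's lemma for the regularization, and the data processing inequality for coherent information from Devetak's formula). The only piece of bookkeeping worth noting when writing out the full argument is that the input and output dimensions of the post-processing channel match those of $\Gamma_{\Lambda}^{\otimes n}$ and $\Gamma_{\Lambda^{\otimes n}}$ via the splitting $(d+1)^n = d^n + \bigl[(d+1)^n - d^n\bigr]$ already used in the definition of ${\rm Tr}_{{\cal H}_{d+1}^{\otimes n} \setminus {\cal H}_{d}^{\otimes n}}$.
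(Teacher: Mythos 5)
Your proposal is correct and follows essentially the same route as the paper: the paper's proof likewise invokes the decomposition $\Gamma_{\Lambda^{\otimes n}} = \bigl( {\rm Id}^{\otimes n} \oplus {\rm Tr}_{{\cal H}_{d+1}^{\otimes n} \setminus {\cal H}_{d}^{\otimes n}} \bigr) \circ \Gamma_{\Lambda}^{\otimes n}$ from the preceding paragraph together with the data-processing property for concatenated channels, and treats the first inequality as the standard fact that the regularized capacity dominates each finite-$n$ term. Your explicit appeal to the $Q_1$-level data processing inequality is, if anything, slightly more precise than the paper's citation of the regularized version, but the argument is the same.
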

\begin{proof}
The proof readily follows from the relation
$\Gamma_{\Lambda^{\otimes n}} = {\rm tr}_{{\cal H}_{d+1}^{\otimes
n} \setminus {\cal H}_{d}^{\otimes n}} \circ
\Gamma_{\Lambda}^{\otimes n}$ and the property $Q(\Psi_2 \circ
\Psi_1) \leq Q(\Psi_1)$ for concatenated quantum channels (see,
e.g.,~\cite{holevo-2012}).
\end{proof}

\begin{proposition} \label{proposition-relation-Q}
Suppose the quantum operations $\Lambda_1,\Lambda_2,\Theta \in
{\cal O}({\cal H}_d)$ satisfy the relation $\Lambda_1 = \Theta
\circ \Lambda_2$, then $Q(\Gamma_{\Lambda_1}) \leq
Q(\Gamma_{\Lambda_2})$.
\end{proposition}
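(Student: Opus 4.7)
My plan is to mimic the argument used for Proposition~\ref{proposition-relation}, just replacing the classical-capacity data-processing inequality $C(\Phi \circ \Psi) \leq C(\Psi)$ by its quantum-capacity counterpart $Q(\Phi \circ \Psi) \leq Q(\Psi)$. The latter holds for any concatenation of quantum channels and is recorded in the excerpt (cited via \cite{holevo-2012} in the previous proposition). So the entire content of the proof is to exhibit a quantum channel $\Xi$ with $\Gamma_{\Lambda_1} = \Xi \circ \Gamma_{\Lambda_2}$ and then invoke data processing.

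Concretely, I would reuse verbatim the channel $\Xi:{\cal B}({\cal H}_{d+1}) \to {\cal B}({\cal H}_{d+1})$ defined by Eq.~\eqref{Xi-channel}, acting in the basis $\{\ldots,\ket{e}\}$ as
\begin{equation*}
\Xi\left[\left(\begin{array}{cc} \rho & \vdots \\ \cdots & c \end{array}\right)\right] = \left(\begin{array}{cc} \Theta[\rho] & {\bf 0} \\ {\bf 0}^\top & c + {\rm tr}\left[\rho(I - \Theta^{\dag}[I])\right] \end{array}\right).
\end{equation*}
This map is completely positive (it is a direct sum of $\Theta$ on the ${\cal H}_d$-block with a completely positive map routing the defect trace into the erasure flag) and trace preserving (the added piece exactly compensates for the trace loss of $\Theta$). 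Hence $\Xi \in {\cal C}({\cal H}_{d+1})$.

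Next I would verify the identity $\Gamma_{\Lambda_1} = \Xi \circ \Gamma_{\Lambda_2}$. Applying $\Xi$ to the generalized erasure output $\Gamma_{\Lambda_2}[\rho] = \Lambda_2[\rho] \oplus {\rm tr}[\rho(I-\Lambda_2^{\dag}[I])]\,\ket{e}\bra{e}$, the upper block yields $\Theta[\Lambda_2[\rho]] = \Lambda_1[\rho]$, while the erasure coefficient becomes ${\rm tr}[\rho(I-\Lambda_2^{\dag}[I])] + {\rm tr}[\Lambda_2[\rho](I-\Theta^{\dag}[I])] = {\rm tr}[\rho] - {\rm tr}[\Theta\circ\Lambda_2[\rho]] = {\rm tr}[\rho(I-\Lambda_1^{\dag}[I])]$, as required by the definition~\eqref{gec} of $\Gamma_{\Lambda_1}$. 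Thus $\Gamma_{\Lambda_1} = \Xi \circ \Gamma_{\Lambda_2}$.

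Finally, the monotonicity of the quantum capacity under concatenation gives $Q(\Gamma_{\Lambda_1}) = Q(\Xi \circ \Gamma_{\Lambda_2}) \leq Q(\Gamma_{\Lambda_2})$, which is the desired conclusion. There is essentially no obstacle: the classical-capacity proof transfers line-for-line because the only tool used is the data-processing inequality for capacity, and this inequality holds for $Q$ just as it does for $C$. The mildly nontrivial step is the bookkeeping verification that the two contributions to the erasure flag sum to the correct trace defect of $\Lambda_1 = \Theta \circ \Lambda_2$, but this is immediate from $\Theta^{\dag}[I] \leq I$ and trace preservation of $\Xi$.
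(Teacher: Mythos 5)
Your proposal is correct and follows essentially the same route as the paper: it reuses the channel $\Xi$ from Eq.~\eqref{Xi-channel} to write $\Gamma_{\Lambda_1} = \Xi \circ \Gamma_{\Lambda_2}$ and then applies the data-processing inequality for the quantum capacity. Your explicit verification of the erasure-flag bookkeeping is a welcome addition but does not change the substance of the argument.
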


\begin{proof}
Following the lines of Proposition~\ref{proposition-relation}, we
get $\Gamma_{\Lambda_1} = \Xi \circ \Gamma_{\Lambda_2}$, where the
channel $\Xi$ is given by Eq.~\eqref{Xi-channel}. By the
concatenation property for quantum channels we have
$Q(\Gamma_{\Lambda_1}) \leq Q(\Gamma_{\Lambda_2})$.
\end{proof}

\begin{proposition} \label{proposition-Q-upper-bound}
Let $\Lambda \in {\cal O}({\cal H}_d)$, then $Q(\Gamma_{\Lambda})
\leq \max(0,2 p_{\max} - 1) \log d$, where $p_{\max} = \max {\rm
Spec} \left( \Lambda^{\dag}[I] \right)$.
\end{proposition}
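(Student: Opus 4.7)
The plan is to reduce to the case of the standard erasure channel by exploiting Proposition \ref{proposition-relation-Q}, in much the same way that Proposition \ref{proposition-C-upper-bound} reduces the classical-capacity bound to a known erasure-channel result. First I would dispose of the trivial case $\Lambda = 0$, for which $p_{\max} = 0$ and both sides are zero.

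Assuming $\Lambda \neq 0$, so that $p_{\max} > 0$, I would set $\Theta := p_{\max}^{-1} \Lambda$ and verify that $\Theta$ is a legitimate quantum operation: it is completely positive because $\Lambda$ is and $p_{\max}^{-1} > 0$, and $\Theta^{\dag}[I] = p_{\max}^{-1} \Lambda^{\dag}[I] \leq I$ by the definition of $p_{\max}$ as the largest spectral value of $\Lambda^{\dag}[I]$. Then I would write $\Lambda = \Theta \circ (p_{\max} \mathrm{Id})$ and invoke Proposition \ref{proposition-relation-Q} with $\Lambda_1 = \Lambda$ and $\Lambda_2 = p_{\max} \mathrm{Id}$ to obtain
\begin{equation*}
Q(\Gamma_{\Lambda}) \leq Q(\Gamma_{p_{\max} \mathrm{Id}}).
\end{equation*}

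Next I would identify $\Gamma_{p_{\max} \mathrm{Id}}$ with the conventional $d$-dimensional erasure channel having transmission probability $p_{\max}$ (equivalently, erasure probability $1 - p_{\max}$), since the definition \eqref{gec} specializes in this case to sending $\rho \mapsto p_{\max}\rho \oplus (1-p_{\max})\ket{e}\bra{e}$. The quantum capacity of this channel is known to equal $\max(0, 1 - 2(1-p_{\max})) \log d = \max(0, 2p_{\max} - 1) \log d$, a classical result of Bennett, DiVincenzo, and Smolin which I would simply cite alongside \cite{bennett-1997}. Combining this with the inequality above finishes the proof.

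I do not anticipate any real obstacle; the only point requiring care is that the bound $\max(0, 2p_{\max} - 1)\log d$ is automatically nonnegative, so whenever $p_{\max} \leq 1/2$ the statement already asserts $Q(\Gamma_{\Lambda}) = 0$, which is consistent with the antidegradability region of the standard erasure channel and with the nonnegativity of quantum capacity in general.
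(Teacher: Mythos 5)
Your proof is correct and follows essentially the same route as the paper: dispose of $\Lambda=0$, factor $\Lambda = (p_{\max}^{-1}\Lambda)\circ(p_{\max}\,{\rm Id})$, apply Proposition~\ref{proposition-relation-Q}, and invoke the known quantum capacity $\max(0,2p_{\max}-1)\log d$ of the conventional erasure channel from Ref.~\cite{bennett-1997}. No gaps.
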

\begin{proof}
If $\Lambda=0$, then apparently $Q(\Gamma_{\Lambda}) = 0$. Suppose
$\Lambda \neq 0$, then $p_{\max} > 0$ and $\Theta = p_{\max}^{-1}
\Lambda$ is a valid quantum operation because $\Theta$ is
completely positive and $\Theta^{\dag}[I] \leq p_{\max}^{-1}
\Lambda^{\dag}[I] \leq I$. Therefore, $\Lambda = \Theta \circ
\Lambda_2$, where $\Lambda_2 = p_{\max} {\rm Id}$. By
Proposition~\ref{proposition-relation-Q} we have
$Q(\Gamma_{\Lambda}) \leq Q(\Gamma_{p_{\max} {\rm Id}})$. On the
other hand, $Q(\Gamma_{p_{\max} {\rm Id}}) = \max(0,2p_{\max} - 1)
\log d$, see Ref.~\cite{bennett-1997,holevo-2012}.
\end{proof}

The relation between the operation $\Lambda$ and the channel
$\Phi_{\Lambda}$ in Eq.~\eqref{Phi-Lambda} enables us to find both
the lower and upper bounds for $Q_1(\Gamma_{\Lambda})$ in terms of
$Q_1(\Phi_{\Lambda})$.

\begin{proposition} \label{proposition-Q1-lower-upper-bound}
Let $\Lambda \in {\cal O}({\cal H}_d)$ and suppose
$\Lambda^{\dag}[I]
> 0$, then
\begin{equation} \label{lower-and-upper-bounds-for-Q1}
p_{\min} Q_1(\Phi_{\Lambda}) - (1-p_{\min}) \log d \leq
Q_1(\Gamma_{\Lambda}) \leq p_{\max} Q_1(\Phi_{\Lambda}),
\end{equation}

\noindent where $p_{\max} = \max {\rm Spec} \left(
\Lambda^{\dag}[I] \right)$, $p_{\min} = \min {\rm Spec} \left(
\Lambda^{\dag}[I] \right)$, and $\Phi_{\Lambda}$ given by
Eq.~\eqref{Phi-Lambda}.
\end{proposition}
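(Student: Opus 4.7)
The plan is to compute $I_c(\rho,\Gamma_{\Lambda}) = S(\Gamma_{\Lambda}[\rho]) - S(\widetilde{\Gamma_{\Lambda}}[\rho])$ on a convenient Stinespring dilation, recognize the resulting pieces as the coherent information of $\Phi_{\Lambda}$ via Proposition~\ref{proposition-images}, and read off both bounds by sign tracking. Fix a Kraus decomposition $\Lambda[\varrho] = \sum_k A_k \varrho A_k^{\dag}$ and set $B = \sqrt{I - \Lambda^{\dag}[I]}$, so $\Gamma_{\Lambda}$ has Kraus operators $\{\iota A_k\}_k \cup \{\ket{e}\bra{m}B\}_{m=1}^d$, where $\iota:{\cal H}_d \hookrightarrow {\cal H}_{d+1}$ is the canonical inclusion and $\{\ket{m}\}$ is any orthonormal basis of ${\cal H}_d$. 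Because $\bra{e}\iota = 0$, the cross blocks of the complementary channel vanish and
\begin{equation*}
\widetilde{\Gamma_{\Lambda}}[\rho] = \widetilde{\Lambda}[\rho] \oplus M(\rho), \qquad M(\rho)_{mm'} = \bra{m}B\rho B\ket{m'}, \qquad {\rm tr}\, M(\rho) = 1-p(\rho),
\end{equation*}
where $p(\rho) := {\rm tr}\Lambda[\rho] > 0$ (since $\Lambda^{\dag}[I] > 0$). As $\Gamma_{\Lambda}[\rho] = \Lambda[\rho] \oplus (1-p(\rho))\ket{e}\bra{e}$ is also block diagonal, the classical-block formula for the entropy of a block-diagonal state gives
\begin{equation*}
I_c(\rho,\Gamma_{\Lambda}) = p(\rho)\bigl[ S(\Lambda_{\cal D}[\rho]) - S(\widetilde{\Lambda}_{\cal D}[\rho]) \bigr] - (1-p(\rho))\,S(\mu(\rho)),
\end{equation*}
with $\mu(\rho) = M(\rho)/(1-p(\rho))$ a $d$-level density operator and $\widetilde{\Lambda}_{\cal D}[\rho] = \widetilde{\Lambda}[\rho]/p(\rho)$.

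Next I would identify the bracket with a coherent information of $\Phi_{\Lambda}$. Because $\Lambda^{\dag}[I] > 0$, the projector $\Pi_0$ in Eq.~\eqref{Phi-Lambda} vanishes and $\Phi_{\Lambda}$ admits the induced Kraus decomposition $\tilde A_k = A_k (\Lambda^{\dag}[I])^{-1/2}$. The bijection of Proposition~\ref{proposition-images} reads $\rho' = \sqrt{\Lambda^{\dag}[I]}\,\rho\,\sqrt{\Lambda^{\dag}[I]}/p(\rho)$ and gives $\Phi_{\Lambda}[\rho'] = \Lambda_{\cal D}[\rho]$ directly. A one-line computation with the above Kraus operators yields the matching identity on the environment side,
\begin{equation*}
\widetilde{\Phi_{\Lambda}}[\rho']_{kk'} = {\rm tr}\bigl[(\Lambda^{\dag}[I])^{-1/2} A_{k'}^{\dag} A_k (\Lambda^{\dag}[I])^{-1/2} \rho'\bigr] = {\rm tr}[A_{k'}^{\dag}A_k \rho]/p(\rho) = \widetilde{\Lambda}_{\cal D}[\rho]_{kk'},
\end{equation*}
so the bracket equals $I_c(\rho',\Phi_{\Lambda})$ and the main identity becomes $I_c(\rho,\Gamma_{\Lambda}) = p(\rho)\,I_c(\rho',\Phi_{\Lambda}) - (1-p(\rho))\,S(\mu(\rho))$.

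Both inequalities now fall out by sign tracking. For the upper bound, $S(\mu(\rho))\ge 0$ forces $I_c(\rho,\Gamma_{\Lambda}) \le p(\rho)\,I_c(\rho',\Phi_{\Lambda})$; if the latter is nonnegative it is at most $p_{\max} Q_1(\Phi_{\Lambda})$, and if it is negative the right-hand side is already $\le 0 \le p_{\max} Q_1(\Phi_{\Lambda})$ (using $Q_1\ge 0$, attained on any pure input). Taking the supremum over $\rho$ gives $Q_1(\Gamma_{\Lambda}) \le p_{\max} Q_1(\Phi_{\Lambda})$. For the lower bound, choose $\rho'$ to (nearly) attain $Q_1(\Phi_{\Lambda})$ and let $\rho$ correspond to it under the inverse bijection; then $p(\rho) = {\rm tr}[\rho\,\Lambda^{\dag}[I]]$ is a convex combination of eigenvalues of $\Lambda^{\dag}[I]$, hence $p(\rho)\ge p_{\min}$, while $S(\mu(\rho)) \le \log d$ since $\mu(\rho)$ lives in a $d$-dimensional Hilbert space. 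Combining these with $Q_1(\Phi_{\Lambda})\ge 0$ yields $I_c(\rho,\Gamma_{\Lambda}) \ge p_{\min} Q_1(\Phi_{\Lambda}) - (1-p_{\min})\log d$, and the stated lower bound on $Q_1(\Gamma_{\Lambda})$ follows.

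The main obstacle is upgrading Proposition~\ref{proposition-images} from a statement about \emph{images} to a statement about complementary channels: the identity $\widetilde{\Phi_{\Lambda}}[\rho'] = \widetilde{\Lambda}_{\cal D}[\rho]$ is the linchpin that converts a geometric claim into a coherent-information claim, and it succeeds precisely because choosing the Kraus decomposition of $\Phi_{\Lambda}$ induced by that of $\Lambda$ aligns the two environments index-by-index. Everything downstream — the block-diagonal form of $\widetilde{\Gamma_{\Lambda}}$, the dimension bound $S(\mu)\le \log d$, and the two-case sign analysis — is routine.
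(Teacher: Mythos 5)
Your proof is correct and follows essentially the same route as the paper: both reduce to the identity $S(\Gamma_{\Lambda}[\rho]) - S(\widetilde{\Gamma_{\Lambda}}[\rho]) = p\,\big[S(\Phi_{\Lambda}[\rho']) - S(\widetilde{\Phi_{\Lambda}}[\rho'])\big] - (1-p)S(\omega)$ with $\rho' = \sqrt{\Lambda^{\dag}[I]}\,\rho\,\sqrt{\Lambda^{\dag}[I]}/p$, exploiting the block-diagonal structure of $\Gamma_{\Lambda}$ and its complement, and then optimize in the two directions exactly as the paper does. Your Kraus-level verification that the environments of $\Lambda$ and $\Phi_{\Lambda}$ align is the same fact the paper establishes via $\widetilde{V}_j\sqrt{\Lambda^{\dag}[I]}$, and your explicit handling of the case $I_c(\rho',\Phi_{\Lambda})<0$ in the upper bound is a small point the paper leaves implicit.
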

\begin{proof}
Since $\Lambda^{\dag}[I]
> 0$, the operator $\Pi_0$ in Eq.~\eqref{Phi-Lambda} is the zero operator and $\Phi_{\Lambda} [\rho] = \Lambda\left[
(\Lambda^{\dag}[I])^{-1/2} \rho (\Lambda^{\dag}[I])^{-1/2}
\right]$. We can rewrite the generalized erasure channel
$\Gamma_{\Lambda}$ in the form
\begin{equation} \label{direct-through-Phi-Lambda}
\Gamma_{\Lambda}[\rho] = \left(%
\begin{array}{cc}
\Phi_{\Lambda}\left[ \sqrt{\Lambda^{\dag}[I]} \, \rho \,
\sqrt{\Lambda^{\dag}[I]}
\right] & {\bf 0} \\
  {\bf 0}^\top & {\rm tr}\left[ \sqrt{I - \Lambda^{\dag}[I]} \, \rho \, \sqrt{I - \Lambda^{\dag}[I]} \right] \\
\end{array}%
\right).
\end{equation}

Let $\Phi_{\Lambda}[\rho] = \sum_{\alpha} V_{\alpha} \rho
V_{\alpha}^{\dag}$, then the Kraus operators $\widetilde{V}_{j}$
of the complementary channel $\widetilde{\Phi_{\Lambda}}$ satisfy
$\bra{\alpha} \widetilde{V}_{j} = \bra{j} V_{\alpha}$, where
$\{\ket{j}\}$ is an orthonormal basis for the output Hilbert space
and $\{\ket{\alpha}\}$ is an orthonormal basis for the effective
environment~\cite{holevo-2007}. Hence, $\widetilde{V}_{j} =
\sum_{\alpha} \ket{\alpha}\bra{j} V_{\alpha}$ and
$\widetilde{V}_{j} \sqrt{\Lambda^{\dag}[I]} = \sum_{\alpha}
\ket{\alpha}\bra{j} V_{\alpha} \sqrt{\Lambda^{\dag}[I]}$, i.e.,
the map $\rho \to \widetilde{\Phi_{\Lambda}} \left[
\sqrt{\Lambda^{\dag}[I]} \, \rho \, \sqrt{\Lambda^{\dag}[I]}
\right]$ is complementary to the map $\rho \to
\Phi_{\Lambda}\left[ \sqrt{\Lambda^{\dag}[I]} \, \rho \,
\sqrt{\Lambda^{\dag}[I]} \right]$. Since the identity channel
${\rm Id}$ is known to be complementary to the trash-and-prepare
channel ${\rm Tr}$ (see, e.g.,~\cite{holevo-2012}), we conclude
that the map $\rho \to \sqrt{I - \Lambda^{\dag}[I]} \, \rho \,
\sqrt{I - \Lambda^{\dag}[I]}$ is complementary to the map $\rho
\to {\rm tr}\left[ \sqrt{I - \Lambda^{\dag}[I]} \, \rho \, \sqrt{I
- \Lambda^{\dag}[I]} \right]$. Therefore,
\begin{equation} \label{complementary-through-Phi-Lambda}
\widetilde{\Gamma_{\Lambda}}[\rho] = \left(%
\begin{array}{cc}
 \widetilde{\Phi_{\Lambda}} \left[ \sqrt{\Lambda^{\dag}[I]} \, \rho \,
\sqrt{\Lambda^{\dag}[I]}
\right] & O \\
  O &  \sqrt{I - \Lambda^{\dag}[I]} \, \rho \, \sqrt{I - \Lambda^{\dag}[I]} \\
\end{array}%
\right).
\end{equation}

\noindent Let $\rho \in {\cal D}({\cal H}_d)$. Denoting $p = {\rm
tr} \big[ \rho \Lambda^{\dag}[I] \big] \in (0,1]$, $\xi = p^{-1}
\sqrt{\Lambda^{\dag}[I]} \, \rho \, \sqrt{\Lambda^{\dag}[I]} \in
{\cal D}({\cal H}_d)$, and $\omega = (1-p)^{-1} \sqrt{I -
\Lambda^{\dag}[I]} \, \rho \, \sqrt{I - \Lambda^{\dag}[I]} \in
{\cal D}({\cal H}_d)$ if $p \neq 1$, with $\omega \in {\cal
D}({\cal H}_d)$ being arbitrary if $p = 1$, we get
\begin{eqnarray} \label{Q1-Gamma-Lambda-Q1-Phi-Lambda}
&& S(\Gamma_{\Lambda}[\rho]) -
S(\widetilde{\Gamma_{\Lambda}}[\rho])
= S\left[ \left(%
\begin{array}{cc}
  p \Phi_{\Lambda}[\xi] & {\bf 0} \\
  {\bf 0}^{\top} & 1-p \\
\end{array}%
\right) \right] - S\left[ \left(%
\begin{array}{cc}
  p \widetilde{\Phi_{\Lambda}}[\xi] & O \\
  O & (1-p) \omega \\
\end{array}%
\right) \right] \nonumber\\
&& = p S(\Phi_{\Lambda}[\xi]) - p S
(\widetilde{\Phi_{\Lambda}}[\xi]) - (1-p) S(\omega).
\end{eqnarray}

Let us consider two cases.

(i) Suppose $\rho$ is optimal in the sense that
$Q_1(\Gamma_{\Lambda}) = S(\Gamma_{\Lambda}[\rho]) -
S(\widetilde{\Gamma_{\Lambda}}[\rho])$, then
Eq.~\eqref{Q1-Gamma-Lambda-Q1-Phi-Lambda} implies
$Q_1(\Gamma_{\Lambda}) \leq p S(\Phi_{\Lambda}[\xi]) - p S
(\widetilde{\Phi_{\Lambda}}[\xi]) \leq p_{\max}
Q_1(\Phi_{\Lambda})$.

(ii) Suppose $\xi$ is optimal in the sense that
$Q_1(\Phi_{\Lambda}) = S(\Phi_{\Lambda}[\xi]) -
S(\widetilde{\Phi_{\Lambda}}[\xi])$, then
Eq.~\eqref{Q1-Gamma-Lambda-Q1-Phi-Lambda} implies
$Q_1(\Gamma_{\Lambda}) \geq S(\Gamma_{\Lambda}[\rho]) -
S(\widetilde{\Gamma_{\Lambda}}[\rho]) = p Q_1(\Phi_{\Lambda}) -
(1-p) S(\omega) \geq p_{\min} Q_1(\Phi_{\Lambda}) - (1-p_{\min})
\log d$.
\end{proof}

\begin{example} \label{example-Q1}
Let $\Lambda \in {\cal O}({\cal H}_2)$ be a quantum operation
describing polarization dependent losses, Eq.~\eqref{pdl}. If $p_H
p_V \neq 0$, then $\Phi_{\Lambda} = {\rm Id}$ and
Proposition~\ref{proposition-Q1-lower-upper-bound} yields $(2
\min(p_H,p_V) - 1) \log 2 \leq Q_1(\Gamma_{\Lambda}) \leq
\max(p_H,p_V) \log 2$. Proposition~\ref{proposition-Q-upper-bound}
gives a tighter upper bound, namely, $Q_1(\Gamma_{\Lambda}) \leq
Q(\Gamma_{\Lambda}) \leq [2 \max(p_H,p_V) - 1] \log 2$. Hence,
$Q_1(\Gamma_{\Lambda}) = 0$ if $\max(p_H,p_V) \leq \frac{1}{2}$.
Suppose $\max(p_H,p_V) > \frac{1}{2}$ and $p_H p_V \neq 0$. We fix
the orthonormal basis $\{\ket{H}, \ket{V}\}$ and consider a
general input density matrix $\rho = \left(%
\begin{array}{cc}
  \rho_{HH} & \rho_{HV} \\
  \rho_{VH} & \rho_{VV} \\
\end{array}%
\right)$, $\rho_{HH} + \rho_{VV} = 1$. Then $\Lambda[\rho] = \left(%
\begin{array}{cc}
   p_H \rho_{HH} & \sqrt{p_H p_V} \rho_{HV} \\
  \sqrt{p_H p_V} \rho_{VH} & p_V \rho_{VV} \\
\end{array}%
\right)$ and the probability to detect a photon at the output
equals ${\rm tr}\big[ \Lambda[\rho] \big] = p_H \rho_{HH} + p_V
\rho_{VV}
> 0$. Since $\Phi_{\Lambda} = {\rm Id}$ and
$\widetilde{\Phi_{\Lambda}} = {\rm Tr}$,
Eqs.~\eqref{direct-through-Phi-Lambda} and
\eqref{complementary-through-Phi-Lambda} take the form
\begin{eqnarray*}
&& \Gamma_{\Lambda}[\rho] = \left(%
\begin{array}{ccc}
   p_H \rho_{HH} & \sqrt{p_H p_V} \rho_{HV} & 0 \\
  \sqrt{p_H p_V} \rho_{VH} & p_V \rho_{VV} & 0 \\
  0 & 0 & 1 - p_H \rho_{HH} - p_V \rho_{VV}
\end{array}%
\right),\\
&& \widetilde{\Gamma_{\Lambda}}[\rho] = \left(%
\begin{array}{ccc}
  p_H \rho_{HH} + p_V \rho_{VV} & 0 & 0 \\
  0 & (1-p_H) \rho_{HH} & \sqrt{(1-p_H)(1-p_V)} \rho_{HV} \\
  0 & \sqrt{(1-p_H)(1-p_V)} \rho_{VH} & (1-p_V) \rho_{VV}
\end{array}%
\right).
\end{eqnarray*}

\noindent Consider the function
$q(\rho):=S(\Gamma_{\Lambda}[\rho]) -
S(\widetilde{\Gamma_{\Lambda}}[\rho])$, then
$Q_1(\Gamma_{\Lambda}) = \max_{\rho \in {\cal D}({\cal H}_2)}
q(\rho)$. To find the maximum of $q(\rho)$, we first notice that
the entropies $S(\Gamma_{\Lambda}[\rho])$ and
$S(\widetilde{\Gamma_{\Lambda}}[\rho])$ do not depend on the phase
of $\rho_{HV}$, so we put $\rho_{HV} = \rho_{VH} = \frac{x}{2}
\geq 0$ and use the Bloch ball parametrization $\rho =
\frac{1}{2}(I + x \sigma_x + z \sigma_z)$, where $0 \leq x \leq
\sqrt{1-z^2}$. At the boundary $x = \sqrt{1-z^2}$ the function $q$
vanishes because this boundary corresponds to pure states for
which $S(\Gamma_{\Lambda}[\ket{\psi}\bra{\psi}]) =
S(\widetilde{\Gamma_{\Lambda}}[\ket{\psi}\bra{\psi}])$ (see,
e.g.,~\cite{holevo-2012}). Suppose $\max_{\rho \in {\cal D}({\cal
H}_2)} q(\rho) > 0$. Then this maximum is attained at some point
$(x_{\ast},z_{\ast})$ satisfying $0 \leq x_{\ast} <
\sqrt{1-z_{\ast}^2}$. Note that $z_{\ast} \in
(-\sqrt{1-x_{\ast}^2},\sqrt{1-x_{\ast}^2})$, so with necessity
$\left. \frac{\partial q}{\partial z} \right\vert_{x=x_{\ast},z =
z_{\ast}} = 0$. Consider an interior point $(x',z_{\ast})$, where
$0 < x' < \sqrt{1-z_{\ast}^2}$, $q \vert_{x=x',z = z_{\ast}} > 0$,
and $\frac{\partial q}{\partial z} \vert_{x=x',z = z_{\ast}} = 0$.
The direct calculation yields $\frac{\partial q}{\partial
x}\vert_{x=x',z = z_{\ast}} < 0$, which means $x_{\ast} \neq x'$
and the maximum cannot be attained at the interior point, so with
necessity $x_{\ast} = 0$. To find $z_{\ast}$ we need to solve the
equation $\frac{d}{dz} q \left( \frac{1}{2} (I + z \sigma_z)
\right) = 0$. We simplify
\begin{eqnarray} \label{coh-inf-simplified}
q \left( \frac{I + z \sigma_z}{2} \right) & = & {\rm H} \big( \{
\tfrac{1}{2} p_H (1+z), \tfrac{1}{2} p_V (1-z), 1-\tfrac{1}{2} p_H
(1+z)- \tfrac{1}{2} p_V (1-z) \} \big) \nonumber\\
&& - {\rm H} \big( \{ \tfrac{1}{2} p_H (1+z) + \tfrac{1}{2} p_V
(1-z), \tfrac{1}{2} (1-p_H) (1+z), \tfrac{1}{2} (1-p_V) (1-z) \}
\big),
\end{eqnarray}

\noindent where ${\rm H}(\{\lambda_i\}_i)$ is the Shannon entropy
of the probability distribution $\{\lambda_i\}_i$. It is not hard
to see that the equation $\frac{d}{dz} q \left( \frac{1}{2} (I + z
\sigma_z) \right) = 0$ is equivalent to the equation $G(p_H,p_V,z)
= G(p_V,p_H,-z)$, where
\begin{equation*}
G(p_1,p_2,z) = - p_1 \log \frac{p_1(1+z)}{p_1(1+z)+p_2(1-z)} +
(1-p_1) \log \frac{(1-p_1)(1+z)}{(1-p_1)(1+z)+(1-p_2)(1-z)}.
\end{equation*}

\noindent The analysis of derivative $\frac{d}{dz} q \left(
\frac{1}{2} (I + z \sigma_z) \right)$ shows that the maximum of
$q$ corresponds to such a solution $z=z_{\ast}$ of the equation
$G(p_H,p_V,z) = G(p_V,p_H,-z)$ for which ${\rm sgn} (z_{\ast}) =
{\rm sgn} (p_V - p_H)$. Substituting this solution into
Eq.~\eqref{coh-inf-simplified} it can be readily checked that $q
\left( \frac{1}{2} (I + z_{\ast} \sigma_z) \right) =
G(p_H,p_V,z_{\ast}) = G(p_V,p_H,-z_{\ast})$. On the other hand,
$Q_1(\Gamma_{\Lambda}) = q \left( \frac{1}{2} (I + z_{\ast}
\sigma_z) \right)$, which enables us to find
$Q_1(\Gamma_{\Lambda})$ by numerically solving the equation
$G(p_H,p_V,z) = G(p_V,p_H,-z)$ and selecting a solution of a
proper sign. If $\max(p_H,p_V)
> \frac{1}{2}$ and $p_H p_V \neq 0$, then $-1 < z_{\ast} < 1$ and
$Q_1(\Gamma_{\Lambda}) > 0$, which justifies our assumption that
$\max_{\rho \in {\cal D}({\cal H}_2)} q(\rho) > 0$. For the sake
of completeness, we also provide an approximate solution $z'
\approx z_{\ast}$, for which $Q_1(\Gamma_{\Lambda}) = q \left(
\tfrac{1}{2}(I + z_{\ast} \sigma_z) \right) \geq q \left(
\tfrac{1}{2}(I + z' \sigma_z) \right) > 0$, namely,
\begin{eqnarray*}
&& \frac{1-z'}{1+z'} =
\left(\frac{1-p_V}{1-p_H}\right)^{\dfrac{1-p_H}{2p_H - 1}} \left(
\frac{p_H}{p_V} \right)^{\dfrac{p_H}{2p_H-1}} - \frac{(p_H -
p_V)(p_H + p_V - 2 p_H p_V)}{(2p_H - 1)(1-p_V)p_V} \quad \text{if}
\quad p_H \geq p_V, \\
&& \frac{1+z'}{1-z'} =
\left(\frac{1-p_H}{1-p_V}\right)^{\dfrac{1-p_V}{2p_V - 1}} \left(
\frac{p_V}{p_H} \right)^{\dfrac{p_V}{2p_V-1}} - \frac{(p_V -
p_H)(p_H + p_V - 2 p_H p_V)}{(2p_V - 1)(1-p_H)p_H} \quad \text{if}
\quad p_V > p_H.
\end{eqnarray*}

\noindent The heat map of $Q_1(\Gamma_{\Lambda})$ as function of
$p_H$ and $p_V$ is depicted in Fig.~\ref{figure-4}.
\end{example}

\begin{figure}
\includegraphics[width=10cm]{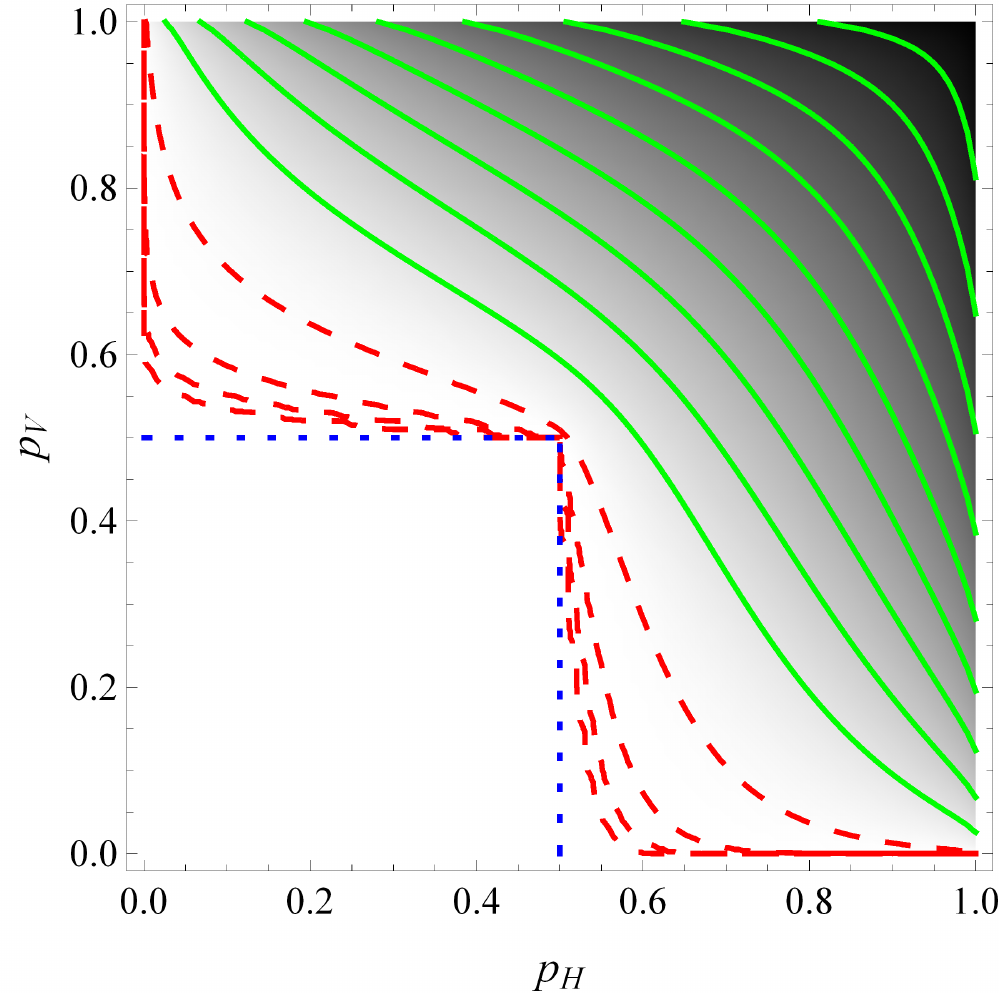}
\caption{\label{figure-4} Heat map of the single-letter quantum
capacity $Q_1(\Gamma_{\Lambda})$ for the quantum operation
$\Lambda$ describing the polarization dependent losses,
Eq.~\eqref{pdl}. Shades of the gray color denote different values
of $Q_1$ in bits; black color represents $1$ and white color
represents $0$. Solid (green) lines correspond to levels $0.9$ to
$0.1$ with the decrement 0.1. Dashed (red) lines correspond to
levels $10^{-2}$, $10^{-4}$, $10^{-6}$, and $10^{-8}$. Dotted
(blue) line denotes the boundary of a region wherein
$Q_1(\Gamma_{\Lambda}) = 0$.}
\end{figure}

In what follows, we study degradability and antidegradability for
a specific class of generalized erasure channels
$\Gamma_{\Lambda}$, where $\Lambda$ has the Kraus rank 1.

\begin{proposition} \label{proposition-degradable}
Suppose the quantum operation $\Lambda \in {\cal O}({\cal H}_d)$
has a single Kraus operator $A$, i.e., $\Lambda[\rho] = A \rho
A^{\dag}$, then $\Gamma_{\Lambda}$ is degradable if and only if
$AA^{\dag} \geq \frac{1}{2}I$ or $I - A^{\dag}A$ is a rank-$1$
operator.
\end{proposition}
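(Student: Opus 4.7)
My plan is to first compute the complementary channel $\widetilde{\Gamma_\Lambda}$ explicitly and then handle the two directions separately. The Stinespring dilation
\begin{equation*}
V\ket\psi = A\ket\psi\otimes\ket{\epsilon_0} + \ket e\otimes B'\ket\psi, \qquad (B')^\dag B' = B^2 = I - A^\dag A,
\end{equation*}
yields, upon tracing out the system rather than the environment, the form $\widetilde{\Gamma_\Lambda}[\rho] = \mathrm{tr}[A^\dag A\rho]\ket 0\bra 0 + B\rho B$ (up to an isometry on the environment), which coincides with $\Gamma_{\Lambda'_{\min}}$ in the notation of Proposition~\ref{proposition-extension}. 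This duality between $\Gamma_\Lambda$ and its complement guides the rest of the argument.

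For the ``if'' direction I construct an explicit degrading channel $\Xi$ in each case. Under hypothesis (a), $AA^\dag\geq I/2$ makes $A$ invertible, so I take $\Xi$ to act on the $\mathcal H_d$-block of its input by
$\sigma\mapsto\mathrm{tr}[(2I-(AA^\dag)^{-1})\sigma]\ket 0\bra 0 + (BA^{-1})\sigma(BA^{-1})^\dag$
and to send $\ket e\bra e\to\ket 0\bra 0$. Complete positivity is immediate because $2I-(AA^\dag)^{-1}\geq 0$ under the hypothesis, trace preservation follows from the identity $(A^{-1})^\dag B^2 A^{-1} = (AA^\dag)^{-1} - I$, and a short substitution confirms $\Xi\circ\Gamma_\Lambda = \widetilde{\Gamma_\Lambda}$. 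Under hypothesis (b), writing $B^2 = \beta\ket\phi\bra\phi$ reduces $\widetilde{\Gamma_\Lambda}[\rho]$ to a convex combination of only the two fixed pure states $\ket 0$ and $\ket\phi$, with weights that exactly match the detection and erasure probabilities of $\Gamma_\Lambda[\rho]$; hence the measure-and-prepare map $\Xi[\eta] = \bra e\eta\ket e\,\ket\phi\bra\phi + (\mathrm{tr}\,\eta-\bra e\eta\ket e)\ket 0\bra 0$ is a valid degrading channel.

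For the ``only if'' direction I argue by contrapositive. Suppose both hypotheses fail, so the smallest eigenvalue of $A^\dag A$ satisfies $\lambda_d<1/2$ and the second-smallest satisfies $\lambda_{d-1}<1$. Take orthonormal eigenvectors $\ket u,\ket v$ with these eigenvalues and let $V:\mathcal H_2\to\mathcal H_d$ be the isometric embedding of $\mathrm{Span}(\ket u,\ket v)$, with associated channel $\mathcal V[\tau]=V\tau V^\dag$. Since $\mathcal V$ is an isometric channel its environment is trivial, so $\widetilde{\Gamma_\Lambda\circ\mathcal V} = \widetilde{\Gamma_\Lambda}\circ\mathcal V$, and any degrader $\Xi$ of $\Gamma_\Lambda$ automatically degrades $\Gamma_\Lambda\circ\mathcal V$. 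It therefore suffices to prove the proposition in the qubit case $d=2$ with $A=\mathrm{diag}(\sqrt p,\sqrt q)$ satisfying $\min(p,q)<1/2$ and $\max(p,q)<1$. I expect this qubit base case to be the main obstacle: the plan is to exploit that, as $\ket\psi$ varies over pure qubit inputs, the eigenvectors $(A\ket\psi/\|A\psi\|,\ket e)$ of $\Gamma_\Lambda[\ket\psi\bra\psi]$ must be coherently rearranged by $\Xi$ into the eigenvectors $(\ket 0,B\ket\psi/\|B\psi\|)$ of $\widetilde{\Gamma_\Lambda}[\ket\psi\bra\psi]$ while preserving the (matching) spectra, which over-determines the Choi matrix of $\Xi$; Choi-positivity tested on a suitable pair of inputs (computational basis states together with a balanced superposition) should then force a contradiction.
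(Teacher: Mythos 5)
Your computation of $\widetilde{\Gamma_\Lambda}$ and your entire ``if'' direction are correct and essentially identical to the paper's: the explicit degrader $\sigma\mapsto{\rm tr}[(2I-(AA^{\dag})^{-1})\sigma]\ket{0}\bra{0}+(BA^{-1})\sigma(BA^{-1})^{\dag}$ (plus $\ket{e}\bra{e}\to\ket{0}\bra{0}$) for $AA^{\dag}\geq\frac{1}{2}I$, and the measure-and-prepare degrader when $I-A^{\dag}A$ is rank one, are exactly the maps used in the paper. The problem is the ``only if'' direction, which you have not actually proved. The reduction to a qubit subchannel via the isometric embedding $\mathcal V$ is legitimate (an isometry has trivial environment, so $\widetilde{\Gamma_\Lambda\circ\mathcal V}=\widetilde{\Gamma_\Lambda}\circ\mathcal V$ and degradability passes to the restriction), but the resulting qubit base case --- showing that $A=\mathrm{diag}(\sqrt p,\sqrt q)$ with $\min(p,q)<1/2$ and $\max(p,q)<1$ admits no degrading channel --- is left as a plan: ``should then force a contradiction'' is not an argument. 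Moreover, the heuristic you propose for that base case is not a valid constraint: a degrading channel $\Xi$ is only required to satisfy $\Xi\circ\Gamma_\Lambda=\widetilde{\Gamma_\Lambda}$ on all inputs; it is under no obligation to map eigenvectors of $\Gamma_\Lambda[\ket\psi\bra\psi]$ to eigenvectors of $\widetilde{\Gamma_\Lambda}[\ket\psi\bra\psi]$ or to preserve spectra on other states, so ``coherent rearrangement of eigenvectors'' does not by itself over-determine the Choi matrix. You would have to write down the actual linear equations and exhibit a witness of non-positivity, which is the entire content of the hard direction.

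The paper closes this gap by a uniqueness argument rather than a positivity search. When $I-A^{\dag}A$ has rank $\geq 2$, the block $B\rho B$ of $\widetilde{\Gamma_\Lambda}[\rho]$ cannot be produced from the scalar erasure component, so it must be the image of the block $A\rho A^{\dag}$ under a linear map; if $\det A\neq0$ the map $\rho\mapsto A\rho A^{\dag}$ is invertible, hence the candidate degrader $\Xi$ is \emph{unique} as a linear map, and one checks directly that this unique $\Xi$ is completely positive if and only if $2I-(AA^{\dag})^{-1}\geq0$, i.e.\ $AA^{\dag}\geq\frac12 I$. The singular case $\det A=0$ is handled separately by feeding in the off-diagonal operators $\ket{f}\bra{g}$ with $\ket{f}\in\ker A$, $\ket{g}\in\mathrm{supp}\,A$, which are annihilated by $\Gamma_\Lambda$ but not by $\widetilde{\Gamma_\Lambda}$ unless $A^{\dag}A$ is a projector of rank $d-1$, contradicting the rank hypothesis. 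If you want to keep your reduction-to-qubit strategy, you could instead apply this uniqueness argument directly to the $2\times2$ case, but some version of it (or an equivalent explicit Choi-positivity computation) must actually appear; as written, the converse direction is missing.
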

\begin{proof}
Using the relation between the Kraus operators for a channel and a
complementary channel~\cite{holevo-2007}, we readily get
\begin{equation} \label{gec-complementary}
\Gamma_{\Lambda}[\rho] = \left(%
\begin{array}{cc}
  A \rho A^{\dag} & {\bf 0} \\
  {\bf 0}^\top & {\rm tr}\left[ \sqrt{I - A^{\dag}A} \, \rho \, \sqrt{I - A^{\dag}A} \right] \\
\end{array}%
\right), \quad \widetilde{\Gamma_{\Lambda}}[\rho] = \left(%
\begin{array}{cc}
  {\rm tr} [ A \rho A^{\dag} ] & {\bf 0}^\top \\
  {\bf 0} &  \sqrt{I - A^{\dag}A} \, \rho \, \sqrt{I - A^{\dag}A} \\
\end{array}%
\right).
\end{equation}

\noindent Up to a proper change of the output basis,
$\widetilde{\Gamma_{\Lambda}}$ coincides with
$\Gamma_{\Lambda'_{\min}}$. Let us consider three distinctive
cases.

(i) Suppose $I - A^{\dag}A$ is a rank-$1$ operator, i.e., $I -
A^{\dag}A = p \ket{\psi}\bra{\psi}$ for some normalized vector
$\ket{\psi} \in {\cal H}_d$ and $p>0$. Then we have
\begin{equation*}
\sqrt{I - A^{\dag}A} \, \rho \, \sqrt{I - A^{\dag}A} = p
\bra{\psi} \rho \ket{\psi} \ket{\psi}\bra{\psi} = {\rm tr}\left[
\sqrt{I - A^{\dag}A} \, \rho \, \sqrt{I - A^{\dag}A} \right]
\ket{\psi}\bra{\psi}
\end{equation*}

\noindent and $\widetilde{\Gamma_{\Lambda}}$ is degradable because
$\widetilde{\Gamma_{\Lambda}} = \Xi \circ \Gamma_{\Lambda}$, where
the quantum channel $\Xi$ reads
$ \Xi \left[ \left(%
\begin{array}{cc}
  \rho & \vdots \\
  \cdots & c \\
\end{array}%
\right) \right] = \left(%
\begin{array}{cc}
  {\rm tr}[\rho] & {\bf 0}^{\top} \\
  {\bf 0} & c \ket{\psi}\bra{\psi} \\
\end{array}%
\right)$.

(ii) Suppose $I - A^{\dag}A$ is not a rank-$1$ operator. Then
$\sqrt{I - A^{\dag}A} \, \rho \, \sqrt{I - A^{\dag}A}$ generally
has the rank greater than or equal to 2, so this operator cannot
be obtained from a linear map acting on ${\rm tr}\left[ \sqrt{I -
A^{\dag}A} \, \rho \, \sqrt{I - A^{\dag}A} \right]$. Therefore,
the operator $\sqrt{I - A^{\dag}A} \, \rho \, \sqrt{I -
A^{\dag}A}$ should be obtained from a linear map acting on $A \rho
A^{\dag}$. Suppose $\det A \neq 0$, then there exists a unique
linear map that for all $\rho \in {\cal B}({\cal H}_d)$ maps the
operator $A\rho A^{\dag}$ to the operator $\sqrt{I - A^{\dag}A} \,
\rho \, \sqrt{I - A^{\dag}A}$. Therefore, if $\det A \neq 0$, then
there exists a unique linear map $\Xi$ such that
$\widetilde{\Gamma_{\Lambda}} = \Xi \circ \Gamma_{\Lambda}$. It
reads
\begin{equation*}
\Xi \left[ \left(%
\begin{array}{cc}
  \rho & \vdots \\
  \cdots & c \\
\end{array}%
\right) \right] = \left(%
\begin{array}{cc}
  c + {\rm tr} \{ \rho [2 I - (AA^{\dag})^{-1}] \} & {\bf 0}^{\top} \\
  {\bf 0} & \sqrt{I - A^{\dag}A} \, A^{-1} \rho (A^{\dag})^{-1} \sqrt{I - A^{\dag}A} \\
\end{array}%
\right).
\end{equation*}

\noindent It is not hard to see that $\Xi$ is trace preserving;
however, $\Xi$ is completely positive if and only if $2 I -
(AA^{\dag})^{-1} \geq 0$, which is equivalent to $AA^{\dag} \geq
\frac{1}{2}I$. On the other hand, if an operator $A$ satisfies
$AA^{\dag} \geq \frac{1}{2}I$ then $\det A \neq 0$ automatically.

(iii) Suppose $I - A^{\dag}A$ is not a rank-$1$ operator and $\det
A = 0$. If $A=0$, then $\Gamma_{\Lambda}$ is obviously not
degradable, so in what follows we additionally assume ${\rm supp}
A \neq \emptyset$. Let $\ket{f} \in {\rm ker} A$ and $\ket{g} \in
{\rm supp} A$, then $\Gamma_{\Lambda}[\ket{f}\bra{g}] = 0$ but
$\widetilde{\Gamma_{\Lambda}}[\ket{f}\bra{g}] = 0 \oplus
\ket{f}\bra{g} \sqrt{I - A^{\dag}A}$, so
$\widetilde{\Gamma_{\Lambda}}[\ket{f}\bra{g}] = 0$ if and only if
$A^{\dag}A\ket{g} = \ket{g}$. Therefore, the degradability of
$\Gamma_{\Lambda}$ implies $A^{\dag}A \ket{g} = \ket{g}$ for all
$\ket{g} \in {\rm supp}A$, i.e., $A^{\dag}A$ is to be a projector.
Since $\det A = 0$, the rank of the projector $A^{\dag}A$ is
bounded from above by $d-1$. If ${\rm rank} A^{\dag}A \leq d-2$,
then there exist two orthonormal vectors $\ket{f_1},\ket{f_2} \in
{\rm ker}A$ and $\Gamma_{\Lambda}[\ket{f_1}\bra{f_2}] = 0$ whereas
$\widetilde{\Gamma_{\Lambda}} [\ket{f_1}\bra{f_2}] = 0 \oplus
\ket{f_1}\bra{f_2} \neq 0$. Therefore, the degradability of
$\Gamma_{\Lambda}$ implies $A^{\dag}A$ is a projector of rank
$d-1$. This contradicts the assumption that $I - A^{\dag}A$ is not
a rank-$1$ operator.
\end{proof}

\begin{proposition} \label{proposition-antidegradable}
Suppose the quantum operation $\Lambda \in {\cal O}({\cal H}_d)$
has a single Kraus operator $A$, i.e., $\Lambda[\rho] = A \rho
A^{\dag}$, then $\Gamma_{\Lambda}$ is antidegradable if and only
if $A^{\dag}A \leq \frac{1}{2}I$ or $A^{\dag}A$ is a rank-1
operator.
\end{proposition}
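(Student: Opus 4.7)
The plan is to reduce the claim to Proposition~\ref{proposition-degradable} via the standard duality that a quantum channel $\Phi$ is antidegradable if and only if its complementary channel $\widetilde{\Phi}$ is degradable. This is the natural route because antidegradability of $\Gamma_{\Lambda}$ means $\Gamma_{\Lambda} = \Xi \circ \widetilde{\Gamma_{\Lambda}}$, which is exactly the condition defining degradability of $\widetilde{\Gamma_{\Lambda}}$ (using that the complementary of the complementary is $\Gamma_{\Lambda}$ up to an isometric relabeling of the environment).

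The first step is to identify $\widetilde{\Gamma_{\Lambda}}$ with a generalized erasure channel of the form $\Gamma_{\Lambda'_{\min}}$, where $\Lambda'_{\min}[\rho] = B\rho B^{\dag}$ and $B = \sqrt{I - A^{\dag}A}$ is the minimal extension from Eq.~\eqref{min-ext}. Comparing the explicit form of $\widetilde{\Gamma_{\Lambda}}$ in Eq.~\eqref{gec-complementary} with the definition~\eqref{gec}, the two channels differ only by the position of the scalar ``erasure'' block relative to the $d$-dimensional block, which is a unitary permutation of output basis vectors and therefore preserves both degradability and antidegradability.

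The second step is to apply Proposition~\ref{proposition-degradable} to $\Gamma_{\Lambda'_{\min}}$: it is degradable if and only if $BB^{\dag} \geq \tfrac{1}{2} I$ or $I - B^{\dag}B$ is rank one. Since $B$ is Hermitian, $B^{\dag}B = BB^{\dag} = I - A^{\dag}A$, so these two conditions rewrite as $I - A^{\dag}A \geq \tfrac{1}{2}I$, i.e.\ $A^{\dag}A \leq \tfrac{1}{2}I$, or $I - (I - A^{\dag}A) = A^{\dag}A$ being rank one. Combined with the duality from the first paragraph, this immediately gives the desired equivalence.

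The main thing to check carefully is that the ``complementary of complementary equals the original up to isometry'' step does not introduce spurious obstructions — in particular, that the degrading channel obtained for $\widetilde{\Gamma_{\Lambda}}$ can be composed with the isometric identification to yield a genuine antidegrading channel for $\Gamma_{\Lambda}$. This is routine: composing with a fixed unitary (or partial isometry on the support) is a completely positive trace preserving operation, so degradability transfers across. Everything else is bookkeeping, and the three-case Kraus-operator analysis done for the previous proposition does not need to be replayed.
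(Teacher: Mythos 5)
Your proposal is correct and follows essentially the same route as the paper: the paper likewise reduces antidegradability of $\Gamma_{\Lambda}$ to degradability of $\widetilde{\Gamma_{\Lambda}}$, identifies $\widetilde{\Gamma_{\Lambda}}$ (up to a unitary rearrangement of the output basis) with the generalized erasure channel built from $\Upsilon[\rho]=B\rho B^{\dag}$, $B=\sqrt{I-A^{\dag}A}$, and then invokes Proposition~\ref{proposition-degradable} and substitutes $B^{\dag}B=BB^{\dag}=I-A^{\dag}A$. Your extra remark about the complementary-of-the-complementary identification is a point the paper passes over silently, but it does not change the argument.
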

\begin{proof}
$\Gamma_{\Lambda}$ and $\widetilde{\Gamma_{\Lambda}}$ are given by
Eq.~\eqref{gec-complementary}. Antidegradability of
$\Gamma_{\Lambda}$ is equivalent to degradability of
$\widetilde{\Gamma_{\Lambda}}$. The change $B = \sqrt{I -
A^{\dag}A}$ leads to the relation
\begin{equation} \label{gec-complementary-B}
\widetilde{\Gamma_{\Lambda}}[\rho] = \left(%
\begin{array}{cc}
  {\rm tr} \left[ \sqrt{I - B^{\dag}B} \, \rho \, \sqrt{I - B^{\dag}B} \right] & {\bf 0}^\top \\
  {\bf 0} &  B \, \rho \, B^{\dag} \\
\end{array}%
\right).
\end{equation}

\noindent Therefore, $\widetilde{\Gamma_{\Lambda}}$ is unitarily
equivalent to the generalized erasure channel $\Gamma_{\Upsilon}$,
where $\Upsilon[\rho] = B \rho B^{\dag}$. By
Proposition~\ref{proposition-degradable} $\Gamma_{\Upsilon}$ is
degradable if and only if $BB^{\dag} \geq \frac{1}{2}I$ or $I -
B^{\dag}B$ is a rank-$1$ operator. Substituting $B = \sqrt{I -
A^{\dag}A}$ into these relations, we get that
$\widetilde{\Gamma_{\Lambda}}$ is degradable if and only if
$A^{\dag}A \leq \frac{1}{2}I$ or $A^{\dag}A$ is a rank-1 operator.
\end{proof}

\begin{example} \label{example-degradability}
Let $\Lambda \in {\cal O}({\cal H}_2)$ be a quantum operation
describing polarization dependent losses, Eq.~\eqref{pdl}. As the
Kraus rank of $\Lambda$ equals 1, we apply
Propositions~\ref{proposition-degradable}
and~\ref{proposition-antidegradable} and obtain the following
results:

(i) $\Gamma_{\Lambda}$ is degradable if and only if $\min(p_H,p_V)
\geq \frac{1}{2}$ or $p_H = 1$ or $p_V = 1$;

(ii) $\Gamma_{\Lambda}$ is antidegradable if and only if
$\max(p_H,p_V) \leq \frac{1}{2}$ or $p_H=0$ or $p_V=0$.

Therefore, $Q(\Gamma_{\Lambda}) = Q_1(\Gamma_{\Lambda})$ if
$\min(p_H,p_V) \geq \frac{1}{2}$ or $p_H = 1$ or $p_V = 1$ and,
moreover, we exactly know $Q(\Gamma_{\Lambda})$ thanks to the
result of Example~\ref{example-Q1}. Additionally, we know that
$Q(\Gamma_{\Lambda}) = 0$ if $\max(p_H,p_V) \leq \frac{1}{2}$ or
$p_H=0$ or $p_V=0$.
\end{example}

For $\Lambda$ in Eq.~\eqref{pdl},
Example~\ref{example-degradability} leaves $Q(\Gamma_{\Lambda})$
uncertain in two regions of parameters, where either $ \frac{1}{2}
< p_H < 1$ and $0 < p_V < \frac{1}{2}$ or $0 < p_H < \frac{1}{2}$
and $\frac{1}{2} < p_V < 1$. The following result shows that in
half of this region the superadditivity of coherent information
takes place.

\begin{figure}
\includegraphics[width=10cm]{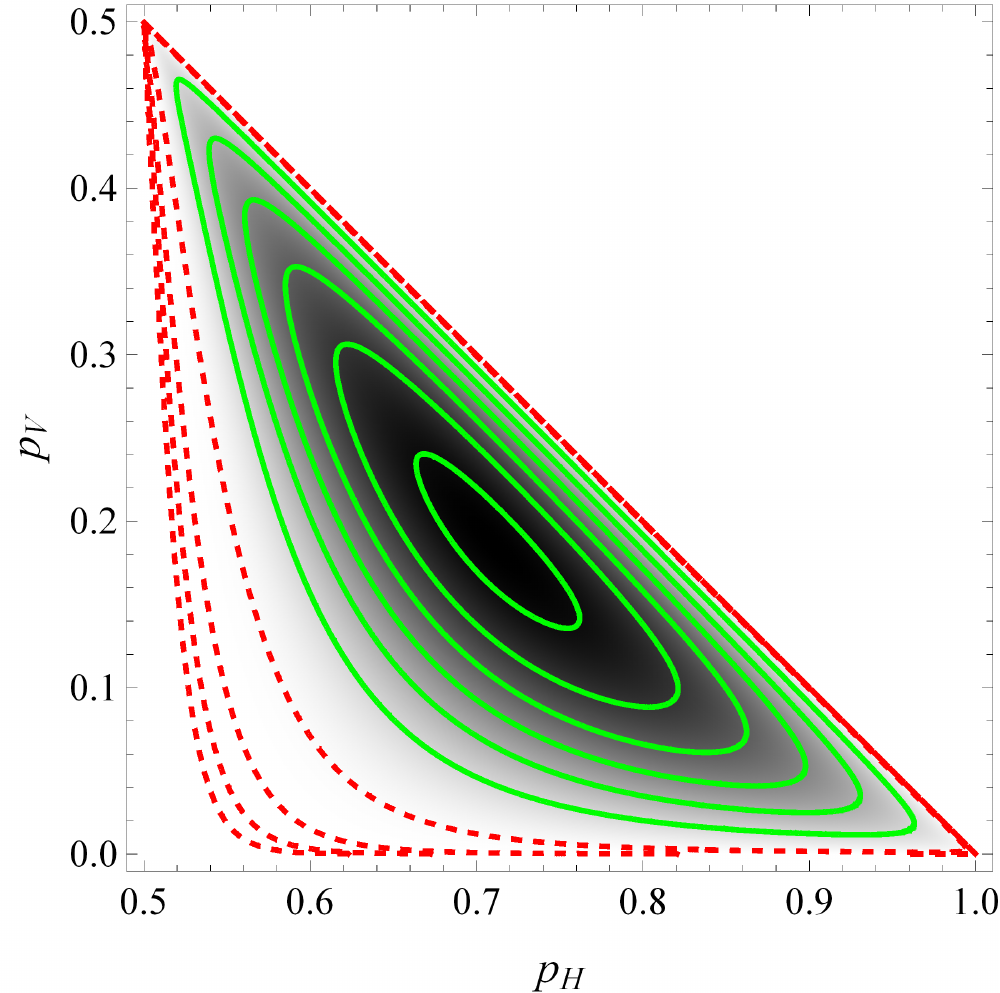}
\caption{\label{figure-5} Heat map of the lower bound for
$\frac{1}{2} Q_1(\Gamma_{\Lambda}^{\otimes 2}) -
Q_1(\Gamma_{\Lambda})$, where $\Lambda$ describes the polarization
dependent losses, Eq.~\eqref{pdl}. Shades of the gray color denote
different values of the lower bound in bits; black color
represents $7 \cdot 10^{-3}$ and white color represents $0$. Solid
(green) lines correspond to levels $6 \cdot 10^{-3}$ to $10^{-3}$
with the decrement $10^{-3}$. Dashed (red) lines correspond to
levels $10^{-4}$, $10^{-6}$, $10^{-8}$, and $10^{-10}$.}
\end{figure}

\begin{proposition} \label{proposition-q2-q1}
Let $\Lambda \in {\cal O}({\cal H}_2)$ be a quantum operation
defined by Eq.~\eqref{pdl} and describing polarization dependent
losses with parameters $p_H$ and $p_V$. The strict inequality
$\frac{1}{2} Q_1(\Gamma_{\Lambda}^{\otimes 2}) >
Q_1(\Gamma_{\Lambda})$ holds if either $\frac{1}{2} < p_H < 1$ and
$0 < p_V < 1 - p_H$ or $\frac{1}{2} < p_V < 1$ and $ 0 < p_H < 1 -
p_V$.
\end{proposition}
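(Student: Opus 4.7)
The plan is to demonstrate strict superadditivity by exhibiting an explicit two-letter input state whose coherent information through $\Gamma_\Lambda^{\otimes 2}$ exceeds $2 Q_1(\Gamma_\Lambda)$ throughout the claimed parameter region. In the regime $\tfrac12 < p_H < 1$, $0 < p_V < 1 - p_H$ the $\ket{V}$-polarization is heavily attenuated, so a single use of the channel carries little quantum information about the relative phase of an $H/V$ superposition, but a two-letter ``repetition-like'' encoding can protect it. I would therefore take as trial input the pure state $\ket{\psi_\alpha} = \cos\alpha \, \ket{HH} + \sin\alpha \, \ket{VV} \in \mathcal{H}_2^{\otimes 2}$, with $\alpha \in (0,\pi/2)$ a variational parameter to be optimised.

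First I would work out the Kraus structure. Since $\Lambda[\varrho] = A \varrho A^\dagger$ with $A = \sqrt{p_H}\ket{H}\bra{H} + \sqrt{p_V}\ket{V}\bra{V}$, formula~\eqref{gec-complementary} yields three Kraus operators for $\Gamma_\Lambda$: the operator $A$ together with $\sqrt{1-p_H}\ket{e}\bra{H}$ and $\sqrt{1-p_V}\ket{e}\bra{V}$. Tensor products give nine Kraus operators for $\Gamma_\Lambda^{\otimes 2}$, partitioning its output on $\ket{\psi_\alpha}\bra{\psi_\alpha}$ into four mutually orthogonal blocks indexed by which of the two time bins registered a detection. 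Each block is at most two-dimensional with an elementary spectrum: the ``both detected'' block is rank one on $\mathrm{Span}(\ket{HH},\ket{VV})$, the two ``one detected / one erased'' blocks are diagonal, and the ``both erased'' block is a scalar. An analogous decomposition, with $A$ replaced by $B = \sqrt{I - A^\dagger A}$, holds for the complementary channel via~\eqref{gec-complementary}. Consequently $I_c(\ket{\psi_\alpha}, \Gamma_\Lambda^{\otimes 2})$ reduces to an explicit combination of binary and ternary Shannon entropies in $p_H$, $p_V$, and $\alpha$, while $Q_1(\Gamma_\Lambda)$ is already available from Example~\ref{example-Q1} (up to a single transcendental root $z_\ast$).

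I would then compare the two quantities. Numerical optimisation of $\alpha$ for each $(p_H, p_V)$ already yields the lower bound on $\tfrac12 Q_1(\Gamma_\Lambda^{\otimes 2}) - Q_1(\Gamma_\Lambda)$ plotted in Fig.~\ref{figure-5}; to turn this into a rigorous proof on the open region, I would fix $\alpha$ to a simple analytic function of $(p_H, p_V)$ (such as the explicit maximiser in the limit $p_V \to 0$, where $\Gamma_\Lambda$ simplifies to a concatenation of a pure-state filter with an erasure and $Q_1(\Gamma_\Lambda)$ is fully closed-form), Taylor-expand both sides along curves foliating the region, and verify strict positivity of the leading-order difference. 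The symmetric region with $p_H \leftrightarrow p_V$ then follows immediately from the covariance of the whole construction under swapping $\ket{H}$ and $\ket{V}$.

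The main obstacle will be producing a clean analytic certificate. Because $Q_1(\Gamma_\Lambda)$ is defined only through the transcendental equation $G(p_H, p_V, z) = G(p_V, p_H, -z)$ of Example~\ref{example-Q1}, any direct subtraction produces an expression whose sign is not manifest. Rigour will therefore require either a concavity or monotonicity argument in $\alpha$, an algebraic substitution that telescopes the leading Shannon-entropy terms on the two sides, or a region-wise upper estimate for $Q_1(\Gamma_\Lambda)$ (tighter than Proposition~\ref{proposition-Q-upper-bound}) that is still compatible with the two-letter lower bound coming from $\ket{\psi_\alpha}$.
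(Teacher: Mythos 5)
Your overall strategy --- exhibit an explicit two-letter input whose coherent information through $\Gamma_{\Lambda}^{\otimes 2}$ exceeds $2Q_1(\Gamma_{\Lambda})$ --- is the right one, and your Kraus-operator bookkeeping for $\Gamma_{\Lambda}^{\otimes 2}$ and its complement is sound. But there is a genuine gap exactly where you flag ``the main obstacle'': you never actually carry out the comparison against $Q_1(\Gamma_{\Lambda})$, which is available only through the transcendental root $z_{\ast}$ of $G(p_H,p_V,z)=G(p_V,p_H,-z)$ from Example~\ref{example-Q1}. Numerical optimisation over $\alpha$ proves nothing on an open region, and a Taylor expansion ``along curves foliating the region'' certifies a sign only locally unless the remainder is controlled uniformly, which you do not do. As written, the argument establishes the claimed strict inequality for no single pair $(p_H,p_V)$.

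The paper closes this gap with a different choice of trial state that makes the comparison exact rather than asymptotic. Instead of a pure state supported on ${\rm Span}(\ket{HH},\ket{VV})$, it takes the optimal single-letter input $\rho_1=\rho_{HH}\ket{H}\bra{H}+\rho_{VV}\ket{V}\bra{V}$ and perturbs its tensor square: the state $\rho_2$ has the same diagonal as $\rho_1^{\otimes 2}$ but replaces the incoherent part $\rho_{HH}\rho_{VV}(\ket{HV}\bra{HV}+\ket{VH}\bra{VH})$ by the coherent singlet contribution $2\rho_{HH}\rho_{VV}\ket{\varphi_-}\bra{\varphi_-}$. Because every Kraus operator involved is diagonal in the $H/V$ basis, $\Gamma_{\Lambda}^{\otimes 2}[\rho_2]$ and $\Gamma_{\Lambda}^{\otimes 2}[\rho_1^{\otimes 2}]$ differ only in a single $2\times 2$ block whose spectra are $\{2q,0\}$ versus $\{q,q\}$, so the output entropy drops by exactly $2p_Hp_V\rho_{HH}\rho_{VV}\log 2$ and the complementary output entropy by exactly $2(1-p_H)(1-p_V)\rho_{HH}\rho_{VV}\log 2$. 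Subtracting gives $\tfrac12 Q_1(\Gamma_{\Lambda}^{\otimes 2})\geq Q_1(\Gamma_{\Lambda})+(1-p_H-p_V)\rho_{HH}\rho_{VV}\log 2$, manifestly positive when $p_H+p_V<1$ and $-1<z_{\ast}<1$; the root $z_{\ast}$ never needs to be known in closed form, only that it is interior so that $\rho_{HH}\rho_{VV}>0$. To salvage your pure-state ansatz you would need an analogous exact cancellation of the $z_{\ast}$-dependent entropies, which your state does not provide: the $\ket{HH}/\ket{VV}$ coherence is attenuated asymmetrically (by $p_H^2$ versus $p_V^2$), whereas the singlet is preserved in form because both $\ket{HV}$ and $\ket{VH}$ are transmitted with the same product $p_Hp_V$.
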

\begin{proof}
Let $\rho_1 =  \rho_{HH} \ket{H}\bra{H} + \rho_{VV} \ket{V}\bra{V}
\in {\cal D}({\cal H}_2)$ be a density operator for which
$Q_1(\Gamma_{\Lambda}) = S(\Gamma_{\Lambda}[\rho_1]) -
S(\widetilde{\Gamma_{\Lambda}}[\rho_1])$, i.e., $\rho_{HH} =
\frac{1+z_{\ast}}{2}$ and $\rho_{VV} = \frac{1-z_{\ast}}{2}$,
where $z_{\ast}$ is a solution of the equation $G(p_H,p_V,z) =
G(p_V,p_H,-z)$ such that ${\rm sgn} (z_{\ast}) = {\rm sgn} (p_V -
p_H)$, see Example~\ref{example-Q1}. Consider the following
operator $\rho_2 \in {\cal D}({\cal H}_4)$:
\begin{equation}
\rho_2 = \rho_{HH}^2 \ket{HH}\bra{HH} + 2 \rho_{HH} \rho_{VV}
\ket{\varphi_{-}}\bra{\varphi_{-}} + \rho_{VV}^2 \ket{VV}\bra{VV},
\quad \ket{\varphi_{-}} = \frac{1}{\sqrt{2}} (\ket{HV} -
\ket{VH}),
\end{equation}

\noindent where $\ket{\varphi_{-}}\bra{\varphi_{-}}$ is an
entangled pure state. In the basis
$\{\ket{HH},\ket{HV},\ket{VH},\ket{VV}\}$ the diagonal of $\rho_2$
is exactly the diagonal of the diagonal matrix $\rho_1^{\otimes
2}$. Since both $\Lambda$ and $\Lambda_{\min}'$ have a single
diagonal Kraus operator, the application of maps $\Lambda^{\otimes
2}$, $\Lambda \otimes \Lambda_{\min}'$, $\Lambda_{\min}' \otimes
\Lambda$, and $(\Lambda_{\min}')^{\otimes 2}$ preserves the
positions of non-zero elements in the matrices $\rho_2$ and
$\rho_1^{\otimes 2}$. Recalling the definition of the
trash-and-prepare channel, ${\rm Tr}[\rho] = {\rm tr}[\rho]
\ket{e}\bra{e}$, we have
\begin{eqnarray*}
&& \Lambda \otimes ({\rm Tr} \circ \Lambda'_{\min}) [\rho_2] =
\Lambda \otimes ({\rm Tr} \circ \Lambda'_{\min}) [\rho_1^{\otimes
2}], \\
&& ({\rm Tr} \circ \Lambda'_{\min}) \otimes \Lambda [\rho_2] =
({\rm Tr} \circ \Lambda'_{\min}) \otimes \Lambda [\rho_1^{\otimes
2}], \\
&& ({\rm Tr} \circ \Lambda'_{\min})^{\otimes 2} [\rho_2] = ({\rm
Tr} \circ \Lambda'_{\min})^{\otimes 2} [\rho_1^{\otimes 2}].
\end{eqnarray*}

\noindent It follows from Eq.~\eqref{Gamma-Lambda-tensor-power-2}
that the only difference between $\Gamma_{\Lambda}^{\otimes 2}
[\rho_2]$ and $\Gamma_{\Lambda}^{\otimes 2} [\rho_1^{\otimes 2}]$
is in the blocks ${\Lambda}^{\otimes 2} [\rho_2]$ and
${\Lambda}^{\otimes 2} [\rho_1^{\otimes 2}]$. Moreover, within
these blocks the difference is present only in $2 \times 2$
submatrices, namely, the submatrix $2 p_H p_V \rho_{HH} \rho_{VV}
\ket{\varphi_{-}}\bra{\varphi_{-}}$ and the submatrix $p_H p_V
\rho_{HH} \rho_{VV}(\ket{HV}\bra{HV} + \ket{VH}\bra{VH})$ for
${\Lambda}^{\otimes 2} [\rho_2]$ and ${\Lambda}^{\otimes 2}
[\rho_1^{\otimes 2}]$, respectively. Since ${\rm Spec}(2
\ket{\varphi_{-}}\bra{\varphi_{-}}) = \{2,0\}$ and ${\rm
Spec}(\ket{HV}\bra{HV} + \ket{VH}\bra{VH}) = \{1,1\}$, we
explicitly relate the entropies as follows:
\begin{equation}
S \big( \Gamma_{\Lambda}^{\otimes 2}[\rho_2] \big) = S \big(
\Gamma_{\Lambda}^{\otimes 2}[\rho_1^{\otimes 2}] \big) - 2 p_H p_V
\rho_{HH} \rho_{VV} \log 2. \label{S-decrease-direct}
\end{equation}

\noindent Analogous consideration for the complementary channel
yields
\begin{equation}
S \big( \widetilde{\Gamma_{\Lambda}}^{\otimes 2}[\rho_2] \big) = S
\big( \widetilde{\Gamma_{\Lambda}}^{\otimes 2}[\rho_1^{\otimes 2}]
\big) - 2 (1-p_H)(1-p_V) \rho_{HH} \rho_{VV} \log 2.
\label{S-decrease-complementary}
\end{equation}

\noindent Since $S \big( \Gamma_{\Lambda}^{\otimes
2}[\rho_1^{\otimes 2}] \big) = 2 S (\Gamma_{\Lambda}[\rho_1])$ and
$S \big( \widetilde{\Gamma_{\Lambda}}^{\otimes 2}[\rho_1^{\otimes
2}] \big) = 2 S(\widetilde{\Gamma_{\Lambda}}[\rho_1])$, we readily
obtain the following lower bound for the two-letter quantum
capacity:
\begin{eqnarray*}
\frac{1}{2}Q_1(\Gamma_{\Lambda}^{\otimes 2}) & \geq & \frac{1}{2}
\left[ S(\Gamma_{\Lambda}^{\otimes 2}[\rho_2]) -
S(\widetilde{\Gamma_{\Lambda}}^{\otimes 2}[\rho_2]) \right] \nonumber\\
& = & S(\Gamma_{\Lambda}[\rho_1]) -
S(\widetilde{\Gamma_{\Lambda}}[\rho_1]) + (1-p_H-p_V)
\rho_{HH} \rho_{VV} \log 2 \nonumber\\
& = & Q_1(\Gamma_{\Lambda}) + (1-p_H-p_V) \rho_{HH} \rho_{VV} \log
2.
\end{eqnarray*}

\noindent If $p_H$ and $p_V$ satisfy the requirements in the
statement of Proposition~\ref{proposition-q2-q1}, then $1-p_H-p_V
> 0$ and $-1 < z_{\ast} < 1$, which implies $(1-p_H-p_V) \rho_{HH}
\rho_{VV} > 0$.
\end{proof}

In Fig.~\ref{figure-5} we depict the derived lower bound
$(1-p_H-p_V) \rho_{HH} \rho_{VV}$ bits for the difference
$\frac{1}{2} Q_1(\Gamma_{\Lambda}^{\otimes 2}) -
Q_1(\Gamma_{\Lambda})$ in the region of parameters $\frac{1}{2} <
p_H < 1$ and $0 < p_V < \frac{1}{2}$. Numerics show that the
actual difference $\frac{1}{2} Q_1(\Gamma_{\Lambda}^{\otimes 2}) -
Q_1(\Gamma_{\Lambda})$ has a similar shape within the specified
region and vanishes (up to a machine precision) if $p_H + p_V \geq
1$. The maximum achievable difference $\frac{1}{2}
Q_1(\Gamma_{\Lambda}^{\otimes 2}) - Q_1(\Gamma_{\Lambda})$
approximately equals $7.197 \cdot 10^{-3}$ and is achieved in the
vicinity of parameters $p_H = 0.7$ and $p_V = 0.19$ (or vice
versa).

Physical meaning of Eqs.~\eqref{S-decrease-direct}
and~\eqref{S-decrease-complementary} is that the use of $\rho_2$
instead of $\rho_1^{\otimes 2}$ in the two-letter scenario
diminishes both the entropy of the channel output and the entropy
of the complementary channel output. However, the decrement in
Eq.~\eqref{S-decrease-direct} is less that the decrement in
Eq.~\eqref{S-decrease-complementary}, i.e., less information is
dissolved into environment and more information reaches the
receiver as compared to the single-letter case. Despite the fact
that the losses are asymmetric, i.e., $p_H \neq p_V$, the
contribution $\ket{\varphi_-}\bra{\varphi_-}$ in $\rho_2$
preserves its form in the output states $\Gamma_{\Lambda}^{\otimes
2}[\rho_2]$ and $\widetilde{\Gamma_{\Lambda}}^{\otimes 2}[\rho_2]$
because the same product $p_H p_V$ characterizes the transmission
of both $HV$ and $VH$ pairs of photons.

\section{Conclusions} \label{section-conclusions}

We reviewed physical properties of trace decreasing quantum
operations and clarified a distinction between biased and unbiased
quantum operations. We emphasized the importance of biased quantum
operations and motivated the introduction of the generalized
erasure channel. We identified information capacities of a trace
decreasing quantum operation with the corresponding capacities of
the generalized erasure channel.

As to general mathematical results, we proved some simple yet
fruitful characterizations for extensions of a quantum operation
to a channel (Proposition~\ref{proposition-extension}) and the
normalized image of a trace decreasing operation
(Proposition~\ref{proposition-images}). The channel
$\Phi_{\Lambda}$ found in Proposition~\ref{proposition-images} was
subsequently used in finding lower and upper bounds for the
single-letter classical and quantum capacities of the generalized
erasure channel
(Propositions~\ref{proposition-C1-lower-upper-bounds} and
\ref{proposition-Q1-lower-upper-bound}). Bounds on the regularized
classical and quantum capacities of the generalized erasure
channel were expressed through the minimal and maximal detection
probabilities (Propositions~\ref{proposition-C-upper-bound},
\ref{proposition-C-lower-bound}, and
\ref{proposition-Q-upper-bound}). We showed that the biasedness of
a quantum operation automatically guarantees nonzero classical
capacity of the generalized erasure channel
(Proposition~\ref{proposition-C-lower-bound}). For quantum
operations with Kraus rank 1 we fully characterized necessary and
sufficient conditions for degradability and antidegradability of
the corresponding generalized erasure channel
(Propositions~\ref{proposition-degradable} and
\ref{proposition-antidegradable}).

As a prominent physical example of a biased quantum operation we
considered polarization dependent losses. In addition to the
calculation of the single-letter quantum capacity for that
physical situation in Example~\ref{example-Q1}, we managed to
provide an analytical proof for the superadditivity of coherent
information, i.e., a strict separation between the single-letter
quantum capacity and the two-letter quantum capacity
(Proposition~\ref{proposition-q2-q1}). Importantly, the observed
difference $\frac{1}{2} Q_1(\Gamma_{\Lambda}^{\otimes 2}) -
Q_1(\Gamma_{\Lambda})$ was shown to achieve $7.197 \cdot 10^{-3}$
bits per qubit sent, which is the maximum reported value for
superadditivity of coherent information among qubit-input
channels. These results show that the polarization dependent
losses may serve as a testbed for exploring other interesting
effects, for instance, checking the superadditivity of private
information.

\begin{acknowledgements}
The author thanks Vikesh Siddhu for useful comments and the
anonymous referee for valuable comments to improve the quality of
the manuscript. The study was supported by the Russian Science
Foundation, project no. 19-11-00086.
\end{acknowledgements}


\begin{thebibliography}{99}

\bibitem{holevo-1998}
A. S. Holevo, The capacity of quantum channel with general
signal states, IEEE Trans. Inf. Theory {\bf 44}, 269 (1998).

\bibitem{schumacher-1997}
B. Schumacher and M. D. Westmoreland, Sending classical
information via noisy quantum channels, Phys. Rev. A {\bf 56}, 131
(1997).

\bibitem{hastings-2009}
M. Hastings, Superadditivity of communication capacity using
entangled inputs, Nature Phys. {\bf 5}, 255 (2009).

\bibitem{lloyd-1997}
S. Lloyd, Capacity of the noisy quantum channel, Phys. Rev. A {\bf
55}, 1613 (1997).

\bibitem{barnum-1998}
H. Barnum, M. A. Nielsen, and B. Schumacher, Information
transmission through a noisy quantum channel, Phys. Rev. A {\bf
57}, 4153 (1998).

\bibitem{devetak-2005}
I. Devetak, The private classical capacity and quantum capacity of
a quantum channel, IEEE Transactions on Information Theory {\bf
51}, 44 (2005).

\bibitem{weedbrook-2012}
C. Weedbrook, S. Pirandola, R. Garc\'{\i}a-Patr\'{o}n, N. J. Cerf,
T. C. Ralph, J. H. Shapiro, and S. Lloyd, Gaussian quantum
information, Rev. Mod. Phys. {\bf 84}, 621 (2012).

\bibitem{grassl-1997}
M. Grassl, T. Beth, and T. Pellizzari, Codes for the quantum
erasure channel, Phys. Rev. A {\bf 56}, 33 (1997).

\bibitem{bennett-1997}
C. H. Bennett, D. P. DiVincenzo, and J. A. Smolin, Capacities of
quantum erasure channels, Phys. Rev. Lett. {\bf 78}, 3217 (1997).

\bibitem{leditzky-2018}
F. Leditzky, D. Leung, and G. Smith, Dephrasure channel and
superadditivity of coherent information, Phys. Rev. Lett. {\bf
121}, 160501 (2018).

\bibitem{siddhu-2020}
V. Siddhu and R. B. Griffiths, Positivity and nonadditivity of
quantum capacities using generalized erasure channels,
arXiv:2003.00583.

\bibitem{yu-2020}
S. Yu, Y. Meng, R. B. Patel, Y.-T. Wang, Z.-J. Ke, W. Liu, Z.-P.
Li, Y.-Z. Yang, W.-H. Zhang, J.-S. Tang, C.-F. Li, and G.-C. Guo,
Experimental observation of coherent-information superadditivity
in a dephrasure channel, Phys. Rev. Lett. {\bf 125}, 060502
(2020).

\bibitem{gisin-1997}
N. Gisin and B. Huttner, Combined effects of polarization mode
dispersion and polarization dependent losses in optical fibers,
Optics Communications {\bf 142}, 119 (1997).

\bibitem{kirby-2019}
B. T. Kirby, D. E. Jones, and M. Brodsky, Effect of polarization
dependent loss on the quality of transmitted polarization
entanglement, Journal of Lightwave Technology {\bf 37}, 95 (2019).

\bibitem{li-2018}
C. Li, M. Curty, F. Xu, O. Bedroya, and H.-K. Lo, Secure quantum
communication in the presence of phase- and polarization-dependent
loss, Phys. Rev. A {\bf 98}, 042324 (2018).

\bibitem{heinosaari-2012}
T. Heinosaari, M. A. Jivulescu, D. Reeb, and M. M. Wolf, Extending
quantum operations, J. Math. Phys. {\bf 53}, 102208 (2012).

\bibitem{davies-1970}
E. B. Davies and J. T. Lewis, An operational approach to quantum
probability, Commun. Math. Phys. {\bf 17}, 239 (1970).

\bibitem{carmeli-2011}
C. Carmeli, T. Heinosaari, and A. Toigo, Sequential measurements
of conjugate observables, J. Phys. A: Math. Theor. {\bf 44},
285304 (2011).

\bibitem{luchnikov-2017}
I. A. Luchnikov and S. N. Filippov, Quantum evolution in the
stroboscopic limit of repeated measurements, Phys. Rev. A {\bf
95}, 022113 (2017).

\bibitem{zhuravlev-2020}
V. A. Zhuravlev, S. N. Filippov, Quantum state tomography via
sequential uses of the same informationally incomplete measuring
apparatus, Lobachevskii J. Math. {\bf 41}, 2405 (2020).

\bibitem{leppajarvi-2020}
L. Lepp\"{a}j\"{a}rvi and M. Sedl\'{a}k, Post-processing of
quantum instruments, arXiv:2010.15816 [quant-ph].

\bibitem{holevo-2012}
A. S. Holevo, {\it Quantum Systems, Channels, Information. A
Mathematical Introduction} (de Gruyter, Berlin, Boston, 2012).

\bibitem{wilde-2013}
M. M. Wilde, {\it Quantum Information Theory} (Cambridge
University Press, Cambridge, 2013).

\bibitem{heinosaari-ziman}
T. Heinosaari and M. Ziman, {\it The Mathematical Language of
Quantum Theory} (Cambridge University Press, Cambridge, 2012).

\bibitem{bogdanov-2006}
Yu. I. Bogdanov, E. V. Moreva, G. A. Maslennikov, R. F. Galeev, S.
S. Straupe, and S. P. Kulik, Polarization states of
four-dimensional systems based on biphotons, Phys. Rev. A {\bf
73}, 063810 (2006).

\bibitem{fgl-2020}
S. N. Filippov, A. N. Glinov, and L. Lepp\"{a}j\"{a}rvi, Phase
covariant qubit dynamics and divisibility, Lobachevskii J. Math.
{\bf 41}, 617 (2020).

\bibitem{bz}
I. Bengtsson and K. \.{Z}yczkowski, {\it Geometry of Quantum
States. An Introduction to Quantum Entanglement} (Cambridge
University Press, New York, 2006).

\bibitem{leditzky-wilde-2018}
F. Leditzky, E. Kaur, N. Datta, and M. M. Wilde, Approaches for
approximate additivity of the Holevo information of quantum
channels, Phys. Rev. A {\bf 97}, 012332 (2018).

\bibitem{filippov-2018}
S. N. Filippov, Lower and upper bounds on nonunital qubit channel
capacities, Reports on Mathematical Physics {\bf 82}, 149 (2018).

\bibitem{fk-2019}
S. N. Filippov and K. V. Kuzhamuratova, Quantum informational
properties of the Landau-Streater channel, J. Math. Phys. {\bf
60}, 042202 (2019).

\bibitem{amosov-2009}
G. G. Amosov and S. Mancini, The decreasing property of relative
entropy and the strong superadditivity of quantum channels,
Quantum Information and Computation {\bf 7}, 594 (2009).

\bibitem{siddhu-mar-2020}
V. Siddhu, Log-singularities for studying capacities of quantum
channels, arXiv:2003.10367.

\bibitem{devetak-shor-2005}
I. Devetak and P. Shor, The capacity of a quantum channel for
simultaneous transmission of classical and quantum information,
Commun. Math. Phys. {\bf 256}, 287 (2005).

\bibitem{cubitt-2008}
T. S. Cubitt, M. B. Ruskai, and G. Smith, The structure of
degradable quantum channels, J. Math. Phys. {\bf 49}, 102104
(2008).

\bibitem{divincenzo-1998}
D. P. DiVincenzo, P. W. Shor, and J. A. Smolin, Quantum-channel
capacity of very noisy channels, Phys. Rev. A {\bf 57}, 830
(1998).

\bibitem{fern-2008}
J. Fern and K. B. Whaley, Lower bounds on the nonzero capacity of
Pauli channels, Phys. Rev. A {\bf 78}, 062335 (2008).

\bibitem{siddhu-nov-2020}
V. Siddhu, Leaking information to gain entanglement,
arXiv:2011.15116.

\bibitem{cubitt-2015}
T. Cubitt, D. Elkouss, W. Matthews, M. Ozols, D.
P\'{e}rez-Garc\'{\i}a, and S. Strelchuk, Unbounded number of
channel uses may be required to detect quantum capacity, Nature
Commun. {\bf 6}, 6739 (2015).

\bibitem{holevo-2007}
S. Holevo, Complementary channels and the additivity problem,
Theory Probab. Appl. {\bf 51}, 92 (2007).

\end{thebibliography}
\end{document}